\documentclass[11pt,reqno]{amsart}
\usepackage{amssymb}
\usepackage{enumitem}
\usepackage{amsthm,amsfonts,amssymb,euscript,color,bbm}
\usepackage{comment}
\usepackage{mathtools}
\usepackage{stmaryrd}
\usepackage{mathrsfs}
\usepackage{pifont}
\usepackage{amsfonts,amsmath,amssymb,amsthm,amscd,cancel}
\usepackage{graphicx}
\usepackage[]{geometry}
\usepackage{verbatim}
\usepackage{mathrsfs}
\usepackage{fancyhdr}
\usepackage{footnpag,footmisc}
\usepackage[normalem]{ulem}

\usepackage{float}

\usepackage{cite}
\usepackage{hyperref}
\usepackage{engord}
\usepackage[displaymath]{lineno}

\theoremstyle{definition}
\newtheorem{theorem}{Theorem}[section]
\newtheorem{lemma}{Lemma}[section]
\newtheorem{proposition}{Proposition}[section]

\newtheorem{remark}{Remark}[section]

\newtheorem{boot}{Bootstrap Assumption}

\allowdisplaybreaks[4]

\DeclareMathAlphabet{\mathsfsl}{OT1}{cmss}{m}{sl}
\numberwithin{equation}{section}

\newcommand{\D}{\mathrm{d}}

\newcommand{\tr}{\mathrm{tr}}

\def\alphab{\underline{\alpha}}
\def\betab{\underline{\beta}}
\def\chib{\underline{\chi}}
\def\chibh{\widehat{\underline{\chi}}}
\def\chih{\widehat{\chi}}
\def\etab{\underline{\eta}}

\def\Lb{\underline{L}}
\def\mub{\underline{\mu}}

\def\tr{\mathrm{tr}}
\def\omegab{\underline{\omega}}

\def\tensor{\widehat{\otimes}}

\def\ub{\underline{u}}
\def\Cb{\underline{C}}

\def\Lh{\widehat{L}}
\def\Lbh{\widehat{\underline{L}}}

\def\Xib{\underline{\Xi}}
\def\Ib{\underline{I}}
\def\Lambdab{\underline{\Lambda}}

\def\Thetab{\underline{\Theta}}
\def\Kb{\underline{K}}
\def\Dbf{\mathbf{D}}

\newcommand{\Db}{\underline{D}}
\newcommand{\Dh}{\widehat{D}}
\newcommand{\Dbh}{\widehat{\underline{D}}}

\newcommand{\gammat}{\widetilde{\gamma}}

\def\nablas{\mbox{$\nabla \mkern -13mu /$ }}
\def\Deltas{\mbox{$\Delta \mkern -13mu /$ }}

\def\divs{\mbox{$\mathrm{div} \mkern -13mu /$ }}
\def\curls{\mbox{$\mathrm{curl} \mkern -13mu /$ }}
\def\omegas{\mbox{$\omega \mkern -13mu /$ }}
\def\omegabs{\mbox{$\omegab \mkern -13mu /$ }}
\def\ds{\mbox{$\nabla \mkern -13mu /$ }}
\def\gs{\mbox{$g \mkern -9mu /$}}
\def\epsilons{\mbox{$\epsilon \mkern -9mu /$}}

\def\Lie{\mbox{$\mathcal{L} \mkern -10mu/$}}

\def\Us{\mbox{$U \mkern -13mu /$ }}
\def\Js{\mbox{$J \mkern -11mu /$ }}

\def\Es{\mbox{$\mathcal{E} \mkern -11mu /$}}

\def\Ks{\mbox{$K \mkern -13mu / $}}
\def\is{\mbox{$i \mkern -8mu /$}}
\def\js{\mbox{$j \mkern -8mu /$}}
\def\Rics{\mbox{$\mathbf{Ric} \mkern -17mu /\ \ $}}
\def\Ricsef{\mbox{$\mathbf{Ric_4} \mkern -17mu /\ $}}
\def\Ricset{\mbox{$\mathbf{Ric_3} \mkern -17mu /\ $}}

\begin{document}

\title[Instability of naked singularities in Einstein--Euler]{Instability of naked singularities of perfect fluid under $C^{1,\alpha}$ gravitational perturbations}

\author[Junbin Li]{Junbin Li}
\address{Department of Mathematics, Sun Yat-sen University, Guangzhou, China}
\email{lijunbin@mail.sysu.edu.cn}
\author[Tingting Li]{Tingting Li}
\address{Department of Mathematics, Sun Yat-sen University, Guangzhou, China}
\email{litt76@mail2.sysu.edu.cn}
\author[Xi-Ping Zhu]{Xi-Ping Zhu}
\address{Department of Mathematics, Sun Yat-sen University, Guangzhou, China}
\email{stszxp@mail.sysu.edu.cn}
\thanks{All authors are supported by National Key R\&D Program of China (No. 2022YFA1005400). J. Li and T. Li are also supported by NSFC (12326602, 12141106). }

 \begin{abstract}
  We study the instability of the naked singularities arising in the spherically symmetric self-similar collapsing of the Einstein--Euler system under gravitational perturbations. We show that small $C^{1,\alpha}$ perturbations (without any symmetries) of the initial conformal metric lead to trapped surface formation. One key point is that the speed of sound is strictly slower the the speed of light,  so the gravitational radiation can become sufficiently concentrated to form a trapped surface before the fluid develops any singularity.

  \end{abstract}

\maketitle

\tableofcontents

\setcounter{tocdepth}{1}


\section{Introduction}

\subsection{Previous works and the main results}

The weak cosmic censorship conjecture in general relativity asserts that naked singularities will not appear in gravitational collapse generically. The original statement proposed by Penrose \cite{Pen69} asserts that there are no naked singularities arising in gravitational collapse, but after examples of naked singularities in spherically symmetric scalar field collapse (for example \cite{Cho93, Chr94}) were found, people then tried to show that naked singularities are unstable under small perturbations.

\begin{figure}
\includegraphics[width=4.5 in]{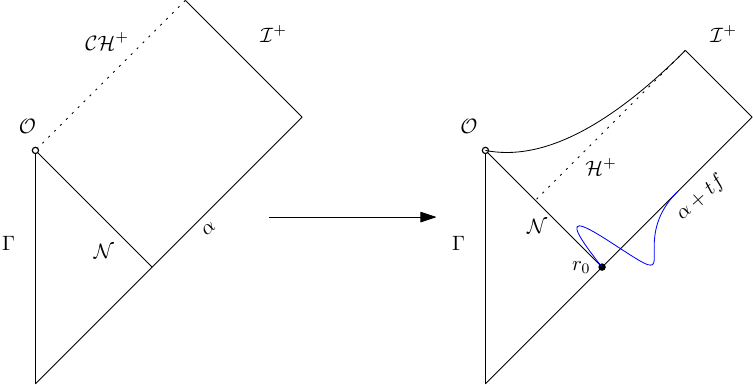}
\caption{\label{figure:nakedperturbation}}
\end{figure}

Soon after his construction of naked singularities, Christodoulou \cite{Chr99} was able to show that general naked singularities are unstable in the context of  spherically symmetric solutions of Einstein--scalar field equations. What was done in \cite{Chr99} is the following.  The left diagram in Figure \ref{figure:nakedperturbation} refers to a naked singularity solution. It is the maximal development of some initial data, say $\alpha$, given on an outgoing cone of a point at the central line $\Gamma$. In spherically symmetric Einstein--scalar case, $\alpha$ is the derivative $\partial_r$ along the null generator of the initial cone of the function $r\phi$ where $\phi$ is simply the scalar function of the scalar field. Christodoulou showed \cite{Chr93} that such initial value problem is well-posed for $\alpha\in BV[0,+\infty)$ and the maximal development lies in the class of solutions, so-called BV solutions. Let $\mathcal{O}$ be the first singularity on $\Gamma$. The nature of a naked singularity solution is that the future null cone $\mathcal{CH}^+$ (called Cauchy horizon) of the singularity is regular and extends to the future null infinity (see \cite{R-Sh23} for an accurate  definition for naked singularities). The spacetime can be extended beyond $\mathcal{CH}^+$ verifying the Einstein--scalar equations but cannot extend as a solution of the initial value problem of $\alpha$. In \cite{Chr99}, under the assumption that $\mu=\frac{2m}{r}$ does not tend to zero along the past null cone $\mathcal{N}$ of $\mathcal{O}$ as approaching $\mathcal{O}$ (which is a condition connected to the blow up criterion established in \cite{Chr93}), Christodoulou constructed a function $f\in BV$, depending on the geometry of $\mathcal{N}$ (in fact, the construction of $f$ does not depend on whether the singularity $\mathcal{O}$ is naked or not), so that the maximal developments of $\alpha+tf$ contain no naked singularities and a black hole eventually form (the right diagram in Figure \ref{figure:nakedperturbation}), for any $t\ne0$.  More accurately, these maximal developments can be proved to have a complete future null infinity (see \cite{Chr99cqg}). An--Tan \cite{A-T24} studied similar questions for the gravitational collapse of a charged scalar field. 

As the above perturbations are from scalar field, it is natural to ask what will happen under gravitational perturbations, which is however not spherically symmetric by Birkhoff theorem. The first author and Liu started this program  in \cite{Liu-Li18} by giving another constructive argument based on a priori estimates, instead of the original contradiction argument by Christodoulou \cite{Chr99}.  We further showed that  \cite{Li-Liu22}, also assuming $\mu\nrightarrow0$ along $\mathcal{N}$,  we were able to construct $C^1$ gravitational perturbations such that the maximal developments contain a closed trapped surface.  We remark that outside spherical symmetry, it is not known whether the existence of such a closed trapped surface would lead to a black hole (i.e., complete future null infinity). In view of this, Christodoulou proposed \cite{Chr99cqg} that one can first produce closed trapped surfaces by adding small perturbations as a first step towards the weak cosmic censorship conjecture, called the trapped surface conjecture, a local version of the weak cosmic censorship conjecture. Recently, An \cite{An24} showed that when the background naked singularity solutions are those constructed by Christodoulou in \cite{Chr94}, which are continuously self-similar (called $k$-self-similar naked singularities, as a consequence, $\mu=\text{const.}$ along $\mathcal{N}$), fully anisotropic apparent horizon (not only trapped surfaces) can form under small gravitational perturbations in scale--critical norm. Also see some construction and instability works on non-spherically symmetric background \cite{An2, An3}. 

It is important to note that in all instability results above, the perturbations we put in are supported in the exterior region, which is the region to the future of $\mathcal{N}$. Such perturbations will not change the past of the singularity, called the interior region. Working in spherical symmetry, for general naked singularity solutions, the first author \cite{Li25} was able to construct interior perturbations still leading to black hole formation, so that the supports of the perturbations intersect the interior region. This provides an essentially new insights into understanding the weak cosmic censorship in this model.  Moreover, if the background naked singularity is $k$-self-similar, for both exterior and interior perturbations, the regularity of the perturbations can be made arbitrarily close to the threshold regularity from below. The construction of interior perturbations do not depend of the geometry of $\mathcal{N}$, but instead depend on the geometry of the incoming null cones before and close to $\mathcal{N}$. Such an interior mechanism was generalized to non-spherically symmetric gravitational perturbations in \cite{Li-Li26},  where the same threshold regularity was recovered for genuinely anisotropic perturbations whose angular support shrinks to a point. 

In this paper we focus on the instability of naked singularities of another system, namely the Einstein--Euler system, which were first found by Ori--Piran (see \cite{O-P90} and the references therein) by numerical method, assuming spherical symmetry and continuous self-similarity of the spacetime. The rigorous proof of such construction was provided recently by Guo--Hazic--Jang \cite{G-H-J23}. The equation of state is the linear one $p=\kappa\varrho$ where $\kappa$ is a sufficiently small positive constant. Significantly different from the Einstein--scalar field case, the instability of naked singularities of such Einstein--Euler system is not known in even in spherical symmetry. On the other hand, the non-relativistic version of such singular solutions (called Larson-Penston collapse in Euler--Poisson system) are shown to be nonlinearly stable under radial perturbations \cite{GHJS25}, which suggests that the stability/instability of the Einstein--Euler naked singularities in spherical symmetry (or more precisely, under perturbations from fluid) is a delicate issue.  But as a couple system, we have additional  degrees of freedom from gravity side, which must however be non-spherically symmetric. We will also show that if we go beyond spherically symmetric, there are more rooms for us to create instability, which is the main novelty  of this paper. 

In this paper, we consider the following Einstein--Euler system 
\begin{equation}\label{EinsteinEuler}\mathbf{Ric}_{\alpha\beta}-\frac{1}{2}\mathbf{R}g_{\alpha\beta}=2\mathbf{T}_{\alpha\beta}\end{equation}
where the energy momentum tensor  of perfect fluid  is
\begin{equation}\label{fluidem}\mathbf{T}_{\alpha\beta}=(\varrho+p) U_\alpha U_\beta+pg_{\alpha\beta}=(1+\kappa)\varrho U_\alpha U_\beta+\kappa\varrho g_{\alpha\beta}.\end{equation}
Here the equation of state for the fluid is the isothermal one: $p=\kappa\varrho$ where $\kappa\in(0,1)$ is a constant, the square of sound speed. $\varrho,p, U$ are respectively the density, pressure and $4$-velocity of the fluid, and $U^\alpha U_\alpha=-1$. The Euler system reads
\begin{equation}\label{Euler}
\begin{split}\nabla_U\varrho+(1+\kappa)\varrho\nabla_\alpha U^\alpha=0,\\
 (1+\kappa)\varrho \nabla_U U+\kappa(\nabla\varrho+ U\nabla_U\varrho)=0,
 \end{split}
 \end{equation}
 which is equivalent to conservation law $\nabla^\alpha\mathbf{T}_{\alpha\beta}=0$.  A rough version of the main result of this paper is the following.
\begin{theorem}\label{roughmain}
The spherically symmetric and continuously self-similar naked singularity solutions constructed in \cite{O-P90, G-H-J23}, solutions of the Einstein--Euler system \eqref{EinsteinEuler}, are $C^{1,\alpha}$  unstable to trapped surface formation under gravitational perturbations.
\end{theorem}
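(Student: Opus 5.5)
We will prove Theorem \ref{roughmain} by a characteristic initial value problem in double null gauge, in the spirit of \cite{Li-Liu22, An24}. We work outside the sound cone but inside the light cone. Fix the Ori--Piran/Guo--Hazic--Jang background with its singular point $\mathcal{O}$, self-similar in the scaling vector field. Choose an incoming null cone $\underline{C}_{u_0}$ slightly to the past of $\mathcal{O}$ and an outgoing cone $C_{\ub_0}$. On $C_{\ub_0}$ we prescribe the conformal metric as the background one plus a perturbation of the shear $\chih$, supported in a short interval $\ub\in[\ub_0,\ub_0+\delta]$. The perturbation has size $\delta^{-1/2}a^{1/2}$-type in scale-invariant terms but is small in $C^{1,\alpha}$, because of the short-pulse scaling combined with the self-similar smallness of the radius $r\sim|u_0|$ near $\mathcal{O}$.

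The key structural fact is $\kappa<1$. The fluid characteristics (sound cones) are strictly timelike, so in a thin slab of $\ub$-width $\delta$ the fluid data propagate essentially from the background. The fluid is therefore only affected through the metric, and its variables stay close to the self-similar background, with bounded $\varrho$, $U$ and derivatives, before any fluid singularity forms.

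The steps are as follows.
\begin{enumerate}
\item Write the null structure equations, the Bianchi equations with the fluid source terms $\mathbf{T}$, and the Euler system \eqref{Euler} relative to the double null frame. Use renormalized curvature components in which the fluid terms appear as lower order sources.
\item Set up bootstrap assumptions on weighted norms of the connection coefficients, the curvature and the fluid variables in the region $u\in[u_0,u_1]$, $\ub\in[\ub_0,\ub_0+\delta]$. The weights follow the short-pulse hierarchy, with $\chih$ large and $\alpha$ anomalous. Here $u_1$ is chosen close to the central line at a self-similar scale.
\item Close the Euler energy estimates. Take the acoustic metric; since $\kappa<1$, its cones are uniformly inside the null cones. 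Energy identities for the linearized Euler system in the slab then gain a factor of $\delta$ from the thin $\ub$-width, and only need $C^{1,\alpha}$-level control of the metric. This is where the $C^{1,\alpha}$ regularity threshold enters: the fluid needs one derivative of the metric plus Hölder control to avoid derivative loss.
\item Close the geometric estimates (transport plus energy for Bianchi) as in the trapped-surface-formation literature. The fluid sources are bounded and enter with a $\delta$ gain.
\item Integrate the Raychaudhuri equation $\nabla_4\tr\chi=-\frac12(\tr\chi)^2-|\chih|^2-2\mathbf{T}_{44}$ along $\ub$, together with the $\nabla_3$ equation. We then show that on $S_{u_1,\ub_0+\delta}$ both $\tr\chi<0$ and $\tr\chib<0$. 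This uses the lower bound on $\int|\chih|^2$ over every generator, together with the fact that the background $\mu=2m/r$ is a positive constant on $\mathcal{N}$ by self-similarity, so that only a small perturbation is needed near $\mathcal{O}$.
\end{enumerate}

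The perturbation has size comparable to $r$ in the scale-invariant norm, and this tends to zero as $u_0\to$ the singularity. Consequently the $C^{1,\alpha}$ norm can be made arbitrarily small, since the self-similar scaling shrinks the relevant lengths.

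The main obstacle will be Step 3, the coupling. The Euler equations involve first derivatives of the metric through $\nabla U$, and the large shear pulse makes $\nabla_4$ derivatives of the metric large. We must show that only the combination transverse to the sound cone matters and that this is integrable across the pulse. I would first try commuting Euler only with angular and $\nabla_3$ derivatives and using the spacelike character of the $\ub$ direction relative to the acoustic cone. With this, fluid energy fluxes through $\ub$=const hypersurfaces control the solution without requiring $\nabla_4$ bounds beyond $\int|\chih|\,d\ub\lesssim\delta^{1/2}$, which is small.
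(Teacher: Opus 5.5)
\textbf{There is a genuine gap, and it sits exactly where the smallness is supposed to come from.}

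A Christodoulou-type short pulse of width $\delta$ that traps a sphere of area radius $r$ must overcome $\tr\chi'\sim r^{-1}$ in the Raychaudhuri equation. So it needs $\int|\chih|^2\,d\ub\gtrsim r^{-1}$, hence $\sup|\chih|\gtrsim(r\delta)^{-1/2}\ge r^{-1}$ for $\delta\le r$, in the natural normalization at that scale. Since $\chih$ is the $\ub$-derivative of the conformal metric, such data are large in $C^{1}$, let alone $C^{1,\alpha}$. You yourself write that the pulse has size $\delta^{-1/2}a^{1/2}$ in scale-invariant terms, which is the opposite of small.

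Concentrating near $\mathcal{O}$ makes this worse, not better: a fixed profile rescaled to length $r$ has $C^{1,\alpha}$ norm $\sim r^{-1-\alpha}$. Nor does the self-similar constancy of $\mu=2m/r=\mu_0<1$ on $\mathcal{N}$ mean that ``only a small perturbation is needed''. To trap you must raise $2m/r$ by $1-\mu_0$, which is an order-one amount in scale-invariant terms. Moreover, data placed on an outgoing cone emanating near $\mathcal{O}$ is not a perturbation of the fixed initial data of the naked singularity. So Step 5 cannot be carried out with $C^{1,\alpha}$-small data as you set things up, and your closing claims about the $C^{1,\alpha}$ norm are false. Something specific to the singular background has to produce the smallness, and your proposal never identifies it.

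\textbf{The missing idea is the blue-shift along the singular cone $\mathcal{N}=\Cb_0$.} In the gauge $u=-r$ on $\Cb_0$, the lapse degenerates: $\Omega_0^2(u)\le c_\beta|u|^{\beta}$. This is the quantitative content of the strong-singularity / $\mu\nrightarrow0$ condition.
\begin{itemize}
\item The equation for $\Dbh(\Omega\chih)$ contains no $\omegab\,\Omega\chih$ term, so $|u|\Omega\chih$ is nearly conserved along the incoming direction in the thin region $\ub|u|^{-1}\le\varepsilon$.
\item Hence a perturbation $\chih_t\sim t\ub^{p}$ posed on a \emph{fixed} outgoing cone $C_{u_0}$ far from $\mathcal{O}$ arrives near $u=0$ with $|\chih|^2=\Omega^{-2}|\Omega\chih|^2$ amplified by $\Omega_0^{-2}\gtrsim|u|^{-\beta}$.
\item The Raychaudhuri equation for $\tr\chi'$ then yields a trapped sphere at $\ub_1|u_1|^{-1}=\ub_1^{2q}$ for every $t\neq0$, provided $0<p<q<\frac{\beta}{2(2+\beta)}$.
\end{itemize}
This, not fluid regularity, is where the H\"older exponent comes from; the fluid estimates actually use many derivatives.

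Making this rigorous requires a semi-global existence theorem in $\{\ub|u|^{-1}\le\varepsilon\}$ all the way up to $u\to0$ with a degenerating lapse. A local slab estimate with a $\delta$-gain does not suffice. In this region, quantities normalized with $L'$ are unbounded and only those normalized with $L$ are controlled. Also, $D$-derivatives of the fluid lose $\Omega^{-2}$. So one commutes only $\Db$ and angular derivatives (plus one $D$ at top order), and uses Christodoulou's energy currents weighted as $\Omega^2J_{\Lb'}$ and $J_L$.

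Your observation that $\kappa<1$ is key is correct, but its actual role is different from what you describe. Integral curves of $U$ in this region have length $\sim\Omega^2\ub$, so $|u|^2\varrho$, $U_{\Lb}$ and $U_{L'}$ keep self-similar upper and lower bounds, and this is what makes those currents coercive.
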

\begin{remark}
This theorem in fact applies to much more general classes of naked singularity solutions of isothermal perfect fluid.  The background solution is only required to satisfy certain self-similar bounds on its past null cone $\mathcal{N}$ of the singularity. 
 \end{remark}
\begin{remark}
Here  $C^{1,\alpha}$ is the regularity of the departure of the perturbed initial conformal metric from the original one of the naked singularity solution along outgoing null direction. Different from the continuously self-similar naked singularity solutions of Einstein--scalar, the naked singularity solutions of Einstein--Euler are smooth across both the past sound cone and the past null cone of the singularity. So it makes sense that we construct smooth perturbations. Nevertheless, the size of the perturbations tends to zero only in $C^{1,\alpha}$. 
 \end{remark}
 \begin{remark}
The perturbations studied in this paper are exterior perturbations, that is, supported in the future of the past null cone $\mathcal{N}$ of the singularity. The perturbations of the initial conformal metric in fact converge to the original one in $C^\infty$ away from the intersection of $\mathcal{N}$ and the initial null cone. One can also study interior perturbations in the Einstein--Euler system as in \cite{Li25}.  One can also study interior anisotropic perturbations with shrinking angular supports as in \cite{Li-Li26}. But in all these settings, the instability we create relies on the geometry of $\mathcal{N}$ or null cones close to and before $\mathcal{N}$, small perturbations from fluid cannot create a closed trapped surface because the sound speed $\sqrt{\kappa}$ is strictly less than the light speed $1$. It would be interesting to find small perturbations from fluid creating a closed trapped surface. 
 \end{remark}
\begin{remark}
As the original naked singularity solution of Einstein--Euler is smooth, one may be interested in finding $C^\infty$ or at least $C^2$ perturbations leading to instability. This is still open even in the Einstein--scalar field system, in which case conjecturally there exists a smooth naked singularity solution (critical collapse found numerically by Choptuik \cite{Cho93}, which is discretely self-similar), and there exists smooth perturbations to it leading to black hole formation or even dispersion.  Go back to the low regularity instability established in Theorem \ref{roughmain}, although not being smooth,  it still makes sense because it is conjectured (see \cite{R-Sh23}, in vacuum) that initial data in this topology admit local well-posedness across the vertex. We remark that, which can be seen from the proof,  under the $C^{1,\alpha}$ perturbations from initial conformal metric, the departure of the resulting fluid density and velocity from the original one is at least $C^{2,\widetilde{\alpha}}$ along outgoing null direction for a (potentially) different $\widetilde{\alpha}$.
 \end{remark}
\begin{remark}
In a companion paper \cite{L-Z25}, we studied Einstein--Euler--scalar field system in spherical symmetry as a toy model of Theorem \ref{roughmain}. In that model, the scale function $\phi$ plays the role of the metric. 
 \end{remark}

\subsection{The set up of the problem, the precise statement}
Let us take $(\mathcal{M},g,\varrho,U)$ to be the spherically symmetric  continuously self-similar naked singularity solution of Einstein--Euler system. Let $\mathcal{N}$ be the past null cone of the singularity $\mathcal{O}$, and $\mathcal{C}$ be the part to the future of $\mathcal{N}$, of a outgoing null cone from a point at the central line of $\mathcal{M}$. Let $(\ub,u)$ be a double null coordinate around $\mathcal{N}$, so that $\ub=0$ on $\mathcal{N}$, $u=u_0$ on $\mathcal{C}$, and $u=-r$ on $\mathcal{N}$ where $r$ is the area redius. If we denote $C_{u}$ be the level set of $u$ and $\Cb_{\ub}$ be the level set of $\ub$, then $\mathcal{N}=\Cb_0$ and $\mathcal{C}=C_{u_0}$. We will consider the double null initial value problem on $C_{u_0}\cup\Cb_0$, where the data on $\Cb_0$ is induced from the naked singularity solution and the data on $C_{u_0}$ is the perturbation we put in the spacetime. 

The data on $\Cb_0$ is in particular spherically symmetric, therefore the only nontrivial components of connection coefficients and curvature components\footnote{See Appendix \ref{appendix} for notations and equations.} are $\tr\chi, \tr\chib, \omega, \omegab$ and $\rho$. By the choice $u=-r$, we have on $\Cb_0$, 
$$\Omega\tr\chib=-\frac{2}{r}.$$
Since $\Cb_0$ is the level set of the self-similar parameter in the construction of continuously self-similar solution, and $L'$ is the normalized pair of $\Lb$ ($g(\Lb, L')=-2$), then on $\Cb_0$,
$$r\tr\chi', r\omegab, r^2\rho=\text{const.}$$
and the value of $\omega$ is not needed. For the fluid density and velocity, we will have, along $\Cb_0$ (which can also be seen in \cite{G-H-J23}),
$$r^2\varrho=\text{const.}, \ U_{\Lb}=\text{const.},$$
where $U_X:=g(U,X)$ for a vectorfield $X$. , we have
$$U_{L'}=\text{const.}$$
from the condition $U^\alpha U_\alpha=-1$. Moreover, $\Lb\varrho$, $\Lb U_{\Lb}, \Lb U_{L'}$ and $L'\varrho, L'U_{\Lb}, L'U_{L'}$ and higher order derivatives are also constants. {\bf In this paper, we only require that the above dimensionless quantities have an upper and lower bounds away from zero, and their derivatives related to $\Lb$ or $L'$ are also bounded up to sufficiently high order.} We remark that these bounds is just a rough description which is sufficiently for our theorem.  In the course of the proof we will need more explicit bounds.

Because $\Db\log\Omega=\omegab$, we know that 
\begin{equation}\label{Omegaupperlower}
c_1|u|^{\beta_1}\le \Omega_0(u)\le c_2|u|^{\beta_2}
\end{equation}
for some $c_1,c_2,\beta_1,\beta_2>0$ and $\Omega_0(u)=\Omega(0,u)$, the restriction of the lapse on $\Cb_0$. Moreover, plugging in $\Omega\tr\chib=-\frac{2}{r}$ in \eqref{Dbtrchib}, we will see that
$$\omegab=-\frac{1}{4}r\mathbf{Ric}_{\Lb\Lb}$$
and hence $\omegab$ is negative (constant in exactly self-similar).

Readers may refer to \cite{L-Z25} for more discussions on the condition \eqref{Omegaupperlower}, which holds for quite general continuously self-similar spacetime. The condition \eqref{Omegaupperlower} is related to the strength of the singularity \cite{Tip77, Cl-Kr85}, in which the singularity is said to be gravitationally strong. Therefore, the instability results in this paper supports a conjecture formulated by Newman \cite{Newman86} that naked singularities appearing in gravitational collapse are in some sense gravitationally weak. 

The data on $C_{u_0}$ is the perturbations we put in. We assume that $\ub$ on $C_{u_0}$ is the affine parameters, then $\Omega$ is constant along $C_{u_0}$ and $\omega=0$. Then the initial data consists of the shear tensor $\chih$ and the fluid variables.  We simply assume that the fluid variables on $C_{u_0}$ have sufficiently many bounded derivatives (both tangent and transversal to $C_{u_0}$). Since the data on $C_{u_0}\cap\Cb_0$ is spherically symmetric, then angular derivatives of fluid variables are bounded at least by $a\ub$ for some constant $a>0$. Before perturbation, the naked singularity solution is spherically symmetric so the shear tensor vanishes. The gravitational perturbations we put in are essentially the new shear $\chih$.

The main theorem of this paper is the following.

\begin{theorem}\label{main}
Consider double null characteristic problem of the system \eqref{EinsteinEuler}. Suppose that the data on the incoming null cone $\Cb_0$ and the fluid data on $C_{u_0}$ are described above.  Then there is a one-parameter family of initial shear $\chih_t$ (for $t$ sufficiently small) on $C_{u_0}$, such that 
$$\chih_t\to0, t\to0$$
in $C^{\alpha}_{\ub} H^{s}_\vartheta$  for some $\alpha>0$ and sufficiently large integer $s$,  and the maximal future development of such initial data on $C_{u_0}$ has a closed trapped surface  for $t\ne0$.
\end{theorem}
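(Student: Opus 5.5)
We follow the scheme of \cite{Li-Liu22} and \cite{An24}, adapted to the fluid. The strategy is to work in a thin slab $\ub\in[0,\delta]$ adjacent to $\Cb_0$ and to put in shear of size $t$ on a short interval. The short-pulse bound must come from the singular geometry of $\Cb_0$ near $\mathcal{O}$, not from a large pulse.

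\textbf{Data.} Choose a smooth symmetric traceless $2$-tensor $\psi$ on $S^2$ (multiplied by a smooth cutoff in $\ub$) that is nonvanishing away from a small set and has $\int_{S^2}|\psi|^2$ bounded below. Set
$$\chih_t(\ub,\vartheta)=t\,\delta_t^{-1+\alpha}\,\psi(\ub/\delta_t,\vartheta)\quad\text{on }[0,\delta_t],$$
with $\delta_t\to0$ chosen as a power of $t$. Then $\|\chih_t\|_{C^{\alpha}_{\ub}H^s_\vartheta}\lesssim t\to0$, while the integrated shear energy satisfies
$$\int_0^{\delta_t}|\chih_t|^2\,d\ub\sim t^2\delta_t^{-1+2\alpha},$$
which can be made large relative to $\delta_t$.

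\textbf{A priori estimates.} In the slab $\{0\le\ub\le\delta_t,\ u_0\le u\le u_*\}$, where $u_*<0$ is to be chosen with $|u_*|$ small, we propagate bootstrap bounds on the connection coefficients and curvature components. These are the differences from the background values on $\Cb_0$, weighted by powers of $|u|$ and $\Omega_0(u)$ using \eqref{Omegaupperlower}. The null structure equations of the appendix are integrated along $\ub$ for the $L$-equations and along $u$ for the $\Lb$-equations, exactly as in \cite{Li-Liu22}. The transport equation for $\Omega\chih$ in $u$ has the damping factor coming from $\Omega\tr\chib=-2/r$, which yields $|\chih|\sim\Omega^{-1}|u|^{-1}\times$(data). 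We take $\omega=0$ on $C_{u_0}$. The fluid variables are treated by energy estimates for \eqref{Euler} in the slab.

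\textbf{Key step (main obstacle).} The central difficulty is to show that the fluid stays regular and that its contribution to $\tr\chi$ stays of order $\delta_t$ uniformly, even though $\chih$ is large in $L^\infty$. Because $\sqrt\kappa<1$, the acoustic characteristics are strictly timelike and transversal to $C_u$. Hence an acoustic energy estimate on the slab gains a factor of $\delta_t$ from its width; we would run this estimate using the hyperbolic energy for the symmetrized Euler system, with coefficients depending on $g$. In particular, the fluid sees $\chih$ only through metric components that are one order smoother, namely $\int\chih\,d\ub\sim t\delta_t^{\alpha}$. This is why the fluid departure is $C^{2,\widetilde\alpha}$, and it prevents the fluid from developing shocks before $u_*$. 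We would bound $\varrho,U$ in $H^s$ on $\Cb_{\ub}$ and $C_u$ by the background bounds plus $O(t\delta_t^{\alpha})$, so that $\mathbf{T}_{LL}=(1+\kappa)\varrho U_L^2$ is $O(1)\cdot|u|^{-2}$.

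\textbf{Trapping.} Integrate the Raychaudhuri equation
$$D(\Omega\tr\chi)=-\tfrac12\Omega^2(\tr\chi)^2+2\omega\Omega\tr\chi-\Omega^2|\chih|^2-\Omega^2\mathbf{Ric}_{LL}$$
from $\ub=0$ to $\ub=\delta_t$ on $S_{\delta_t,u}$. The initial value is $\Omega\tr\chi|_{\ub=0}=\Omega_0^2\cdot c/|u|$, with $c$ given by the self-similar constant $r\tr\chi'$. The shear contributes a loss of
$$\gtrsim \frac{t^2\delta_t^{-1+2\alpha}|u_0|^2\Omega_0(u_0)^2}{|u|^2\Omega_0(u)^2}\cdot\Omega_0(u)^2 ,$$
while the fluid term contributes a loss of $O(\delta_t|u|^{-2})$, which also has a favorable sign. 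Using the lower bound $\Omega_0(u)\ge c_1|u|^{\beta_1}$, which is where $\mu\nrightarrow0$ and the strength of the singularity enter, there is a $u_*$ for which the loss exceeds $c\,\Omega_0^2/|u_*|$, while $|u_*|\gg\delta_t$ remains within the bootstrap range. This gives $\tr\chi<0$ on $S_{\delta_t,u_*}$. Together with $\tr\chib<0$, which persists from the background, $S_{\delta_t,u_*}$ is a closed trapped surface.

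The balancing of exponents between $\delta_t$ and $|u_*|$ is where $\alpha$ is fixed, and it must be compatible with the bootstrap. Choosing $\psi$ nonvanishing pointwise, rather than only in $L^2$, makes $\tr\chi<0$ hold everywhere on the sphere.
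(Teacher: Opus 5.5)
\textbf{Verdict: there is a genuine gap, and it is in the choice of data.} For $\chih_t=t\,\delta_t^{-1+\alpha}\psi(\ub/\delta_t,\vartheta)$ one has $\sup|\chih_t|\sim t\delta_t^{-1+\alpha}$ and a $C^\alpha_{\ub}$ seminorm $\sim t\delta_t^{-1}$, not $\lesssim t$. Your data is small only one derivative lower, i.e.\ for the conformal metric in $C^\alpha$ rather than $C^{1,\alpha}$. More fundamentally, for any profile
$$\delta_t^{-1}\int_0^{\delta_t}|\chih_t|^2\,d\ub\le\sup|\chih_t|^2\le\|\chih_t\|^2_{C^\alpha}.$$
So your two requirements, $\chih_t\to0$ in $C^\alpha$ and shear energy large relative to $\delta_t$, are mutually exclusive for every choice of $\delta_t$.

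A sanity check shows the same thing. Write $E=\int_0^{\delta_t}|u_0|^2|\Omega\chih_t|^2\,d\ub$. Your trapping condition is $E\gtrsim\Omega_0(u_*)^2|u_*|$ with $|u_*|\gg\delta_t$. With $E\gg\delta_t$, this holds verbatim with $\Omega_0\equiv1$, i.e.\ over a regular cone. That is Christodoulou's large short pulse, and it never uses the singularity. The data also violate the hypothesis $\sup_{\ub}\ub^{-p}\|\chih\|_{L^2(S_{\ub,u_0})}\le A$ of Theorem~\ref{existencetheorem} with $A$ uniform in $t$. So even the slab estimates are not available uniformly, since $\varepsilon$ depends on $A$ in an untracked way.

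\textbf{The correct mechanism is the blue shift from the degenerating lapse, and it needs the \emph{upper} bound \eqref{Omegaupper}, $\Omega_0^2\le c_\beta|u|^\beta$.} The lower bound $\Omega_0\ge c_1|u|^{\beta_1}$ that you invoke only bounds the obstruction $c\,\Omega_0^2/|u_*|$ from below, so it points the wrong way.

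In the paper the data are $\chih_t\sim t\ub^p$ (your $\alpha$ is their $p$). Then $A$ is uniform in $t$, and the energy $\int_0^{\ub}|u_0|^2|\Omega\chih_t|^2\ge t^2\ub^{2p+1}$ is tiny compared with $\ub$. Integrating \eqref{partialuchih} shows that $|u|^2|\Omega\chih|^2$ is conserved in $u$ up to $O(\ub|u|^{-1})$. Hence
$$\int_0^{\ub}|\chih_t|^2\ge\tfrac14\Omega_0^{-2}|u|^{-2}\bigl(t^2\ub^{2p+1}-c\,\ub^2|u|^{-1}\bigr),$$
which is compared with $\tr\chi'(0,u)\le A|u|^{-1}$.

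For fixed $t$ and fixed data, one then sends $\ub_{1,t}\to0$ along $\ub_{1,t}|u_{1,t}|^{-1}=\ub_{1,t}^{2q}$, with $0<p<q<\frac{\beta}{2(2+\beta)}$. The factor $\ub|u|^{-1}\to0$ makes the error $c\,\ub^2|u|^{-1}$ negligible against $t^2\ub^{2p+1}$, and $\Omega_0^{-2}\gtrsim|u|^{-\beta}$ beats $A$. So the H\"older exponent is constrained by the decay rate $\beta$ of the lapse. It is not a parameter to balance against a pulse width; the trapped sphere moves toward the singularity, and no pulse is concentrated.

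\textbf{Two smaller points.}
\begin{itemize}
\item A traceless symmetric $2$-tensor on $S^2$ must vanish somewhere: away from its zeros it defines a line field, which Poincar\'e--Hopf forbids. So a fixed angular profile cannot give $\tr\chi<0$ on the whole sphere. What is needed is the pointwise-in-$\vartheta$ lower bound \eqref{lowerboundcondition} on the integrated energy, obtained by letting the angular profile vary with $\ub$.
\item The slab estimates are only asserted. The Euler equations contain $\Omega\chi$ explicitly (e.g.\ \eqref{EulerEL}), so the fluid does not see $\chih$ only through smoother metric components. Closing the estimates requires two things, which together are the content of Theorem~\ref{existencetheorem}: blue-shift rather than self-similar bounds for the Einstein quantities, and two-sided self-similar bounds on $|u|^2\varrho$, $U_{\Lb}$, $U_{L'}$ to keep Christodoulou's energy currents coercive.
\end{itemize}
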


\begin{figure}[H]
\includegraphics[width=2.5 in]{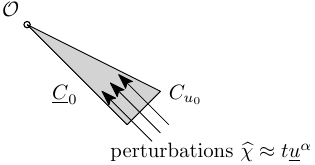}
\caption{\label{figure:perturbations}}
\end{figure}
\begin{remark}
As in \cite{L-Z25}, the perturbations are in fact generic in the following sense: the set of all data $\chih$ on $C_{u_0}$ such that there is a closed trapped surface preceding $\mathcal{O}$ in its maximal future development, contains an open set of which zero data (leading to naked singularity solution) is its limit point in $C^\alpha$ topology.
\end{remark}

\begin{remark}
It will be clear in the proof that $\chih$ has the form $t\ub^{\alpha}$ near $\ub=0$. By cutting off we will get smooth $\chih$. But if we allow $\chih$ to be non-smooth at $\ub=0$, then we can find a sequence of closed trapped surfaces approaching the singularity $\mathcal{O}$, which is similar to \cite{Chr99, Liu-Li18, Li-Liu22} and an apparent horizon emerging from the $\mathcal{O}$ may also be located as in \cite{An24}.
 \end{remark}

\subsection{Proof ingredients} As in our previous works \cite{Liu-Li18, Li-Liu22, L-Z25} and An's work \cite{An24}, the naked singularity  (exterior) instability problem is essentially a trapped surface formation problem, started by Chrisotodoulou \cite{Chr08}, and also studied, for example in \cite{K-R12, An-Luk17} in vacuum. The difference is that in the trapped surface formation problem, the incoming cone $\Cb_0$ is Minkowskian, and in instability problem, the incoming cone in singular, which was first studied in \cite{Li-Liu22} for Einstein--scalar field system.  The basic strategy is however the same: prove an a priori estimate to guarantee the existence of the Einstein equations near $\Cb_0$ under certain initial bounds, and then find closed trapped surface by using Raychaudhuri equation, under additional assumption on an energy lower bound. When the incoming cone is singular, it turns out that on one hand we only need a much milder lower bound for trapped surface formation, on the other hand we have to do much delicater a priori estimates to get the ``long-time'' existence. 

Let us first discus the estimates on the Einstein quantities. 

{\bf Blue-shift estimates for the Einstein quantities.} In self-similar regime (see \cite{R-Sh23}, in which closed trapped surfaces do not form, or see an exposition in \cite{JS24}),  all dimensionless quantities are expected be bounded. For example, in the region $\ub|u|^{-1}\ll1$,  one should have
$$|u|\Omega\chibh, |u|\Omega\tr\chib, |u|\chi', |u|\tr\chi'$$
is bounded, since they are the shears and expansions related to normalized pair $\Lb$ and $L'$, where $\Lb=-\partial_r$ on $\Cb_0$.  If Euler is included, then
$$|u|^2\varrho, U_{\Lb}, U_{L'}$$
are also expected to be bounded (above and below away from zero). Also, $|u|\Db$ and $|u|\Omega^{-2}D$ derivatives are expected to ``bounded''. We may call such bounds related to $\Lb$ and $L'$  the self-similar bounds.

 However, in instability regime, in order that closed trapped surfaces can form, the Einstein quantities (i.e., the connection coefficients and curvature components)  related to $\Lb$ can also be proved to be bounded, but instead of those related to $L'$ are not. Only those related to $L$, for example,
$$|u|\Omega\chih, |u|\Omega\tr\chi,$$
can be proved to be bounded. This can be viewed from the equation \eqref{Dbchih}, in which the right hand side contains no terms like $\omegab\Omega\chih$, which would cause a logarithm loss after integration along $\Db$. As a consequence, $\Dbh(|u|\Omega\chih)$ is expected to be integrable in $u\in [u_0,0)$. Integrating along $u$, we will have
$$|u|\Omega\chih-|u_0|\Omega\chih\big|_{C_{u_0}}\approx f(u),$$
where $f(u)$ has a limit as $u\to0$, and then $|u|\Omega\chih$ can be shown to be bounded. We may call the bounds related to $L$ the blue-shift bounds.  But if we pursuit self-similar bounds, since we have
$$|u|\chih'\approx \Omega^{-2}(|u_0|\Omega\chih\big|_{C_{u_0}}+f(u)),$$
and $\Omega\approx\Omega_0\to0$ as $u\to0$, we will need a careful choice of initial data on $C_{u_0}$ so that the right hand side is still bounded,  which is \textit{not generic}. 

\begin{remark}\label{equationwithOmega}
 For the other connection coefficients, we will show that
$$|u|\Omega\eta,|u|\Omega\etab, |u|\omega, |u|\omegab$$
are bounded, and for curvature components, we will show that
$$|u|^2\Omega^2\alpha,|u|^2\Omega^2\beta,|u|^2\Omega^2\rho,|u|^2\Omega^2\sigma,|u|^2\Omega^2\betab,|u|^2\Omega^2\alphab,$$
are (suitably) bounded. Different to the previous work \cite{Li-Liu22}, we have an additional $\Omega$ factor on each connection coefficient ($\omega,\omegab$ already have an additional $\Omega$ in their definitions) and $\Omega^2$ on each curvature component. The advantage of this formulation is that the null structure equations and Bianchi equations have no individual $\Omega$ factors, and hence the estimates can be treated more systematically. See more detailed discussions on this point in \cite{Li-Li26}.
\end{remark}

We next discuss the estimates on the Euler quantities.

{\bf Self-similar and blue-shift estimates for Euler.} In the presence of Euler, we should figure out whether the above estimates for the Einstein quantities still hold and what are the correct bounds for the fluid variables.  In view of our previous work \cite{L-Z25}, we can expect that the following fluid variables
$$|u|^2\varrho, U_{\Lb}, U_{L'}$$
are still bounded, but the only the derivatives $|u|D$ (instead of $\Omega^{-2}D$) and $|u|\Db$ are bounded. This is of course a consequence of different behaviors of the Einstein quantities. But it is very important that the zeroth order estimates of the fluid variables still obey self-similar estimates. Let us look at the the first equation of the Euler equations \eqref{Euler}, and write 
$$0=U\varrho+(1+\kappa)\varrho \nabla_\alpha U^\alpha=U\varrho-\frac{1}{2}\varrho(1+\kappa)\tr\chi' U_{\Lb}+\cdots.$$
Based on the above discussions, $|u|\tr\chi' U_{\Lb}$ can only be bounded by $\Omega^{-2}$, but fortunately, the length of the integral curves of $U$ in the region $\ub|u|^{-1}\ll1$ is approximately $\Omega^2\ub$, as shown in \cite{L-Z25}, which tells us that $|u|^2\varrho$ can be shown to be bounded. Now this Euler equation again tells us that $|u|^3U\varrho$ can only be bounded by $\Omega^{-2}$, as we will show that $|u|^3\Lb\varrho$ is bounded, then write $U$ as a linear combination of $\Lb,L'$ (and tangential components), we have
$$|u|L'\approx |u|U+|u|\Lb\lesssim\Omega^{-2}$$
which shows that $|u|L \lesssim1$, as claimed above.  Once of fluid variables are bounded correctly, we find that the estimates for Einstein quantities follow by energy estimate procedure in double foliation, which is somehow standard nowadays. 
\begin{remark}\label{DbintermsofDlossOmega}
One may also estimate $\Lb$ derivative in terms of $L$ derivative as
$$|u|\Lb\approx |u|U+|u|L'.$$
But we can only hope to prove $|u|L\lesssim 1$ (and also $|u|U\lesssim \Omega^{-2}$), so we can only have $|u|\Lb\lesssim\Omega^{-2}$ which is worse than expected.  In view of this, we prefer to first estimate $\Lb$ derivative (by commuting $\Lb$ to the Euler equations) and then express $L$ in terms of $\Lb$.
\end{remark}

To realize the estimates for the fluid variables described above, in spherically symmetric toy model \cite{L-Z25},   we rewrite the equations as ODEs along both characteristics, and then simply integrate along the characteristics, but this fails outside spherical symmetry. For this we will need a method of energy estimates for Euler equations, which is the energy currents introduced by Christodoulou \cite{Chr00, Chr07}, employed by \cite{Speck12, D-S19, Song}. In particular Song's paper \cite{Song}  employed the energy currents in double null foliation. The coerciveness of these energy currents, in turn,  depends strictly on the self-similar estimates of the zeroth order fluid variables, i.e.,  upper and lower bounds (which is away from zero) of $|u|^2\varrho, U_{\Lb}, U_{L'}$. Nevertheless, the bootstrap arguments allow to close up the proof.  Once all the estimates are done correctly, then the trapped surface formation follows as before, like \cite{L-Z25}.

It is interesting to note that, solving the Einstein--Euler system in the region $\ub|u|^{-1}\ll1$ is a semi-global problem for Einstein but just a local problem for Euler, which is related to the fact that the sound speed is strictly slower than light speed. This is why  $|u|^2\varrho, U_{\Lb}, U_{L'}$ themselves still obey self-similar estimates. We expect that our method apply to other matter fields that have a slower sound speed than light speed.

\section{The Existence theorem}

 The proof of Theorem \ref{main} is started by the existence theorem.

\begin{theorem}\label{existencetheorem}
Let $N$ be a sufficiently large integer (say $N=10$) and $A>0$. Suppose that the spherically symmetric initial data on $\Cb_{0}$ obeys the following estimates: 
$$|u|\tr\chi', |u|\omegab, |u|^2\rho,  |u|^{-2}\varrho^{-1},   |u|^{2+j}\Db^{j}\varrho, |u|^{j}\Db^{j}U_{\Lb}, |u|^j\Db^jU_{L'}\le A$$
for all $0\le j\le N+1$, 
and the initial data on $C_{u_0}$ (with $\omega=0$)
$$\sup_{i\le N+3, k\le 1}\sup_{\ub} \ub^{-p}\|\nablas^i(\ub D)^k\chih\|_{L^2(S_{\ub,u_0})}\le A,$$
$$\sup_{i+j+k\le N+2, k\le 1}\sup_{\ub}\|\nablas^i\Db^jD^k(\varrho, U_{\Lb}, U_{L'}, \Us)\|_{L^2(S_{\ub,u_0})}\le A,$$
where $\Us$ is the projection of $U$ to $S_{\ub,u}$ and $p>0$. Then there is an $\varepsilon>0$ sufficiently small depending on $A, \kappa, u_0, p$ such that the solution of Einstein--Euler equations exist in the region $\ub|u|^{-1}\le\varepsilon$, and the estimates established in Propositions \ref{fluidestimates}, \ref{curvature}, \ref{improvebootA}, \ref{improvebootB} hold in this region. 
\end{theorem}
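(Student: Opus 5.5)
The plan is a continuity (bootstrap) argument in the region $\mathcal{D}_\varepsilon=\{\ub|u|^{-1}\le\varepsilon,\ u_0\le u<0\}$. Local existence for the characteristic problem of Einstein--Euler near $C_{u_0}\cup\Cb_0$ gives a solution in some region $\ub\le\ub_*$. We then show that the solution extends as long as the scale-invariant norms stay bounded.

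\textbf{Bootstrap assumptions.} These are stated with $\Omega$-weighted quantities, as in Remark \ref{equationwithOmega}.
\begin{itemize}
\item Connection coefficients: $|u|\Omega(\chih,\tr\chi,\eta,\etab)$ and $|u|\Omega\chibh$, $|u|\Omega\tr\chib+2$, $|u|\omega$, $|u|\omegab$ are bounded by a large constant $\Delta$ in $L^\infty$ and in $L^2(S_{\ub,u})$ with up to $N$ angular derivatives weighted by $|u|$.
\item Curvature: the norms of $|u|^2\Omega^2(\alpha,\beta,\rho,\sigma,\betab,\alphab)$ are bounded by $\Delta$, measured in $L^2$ on $C_u$ and $\Cb_{\ub}$ with the natural $|u|$ weights.
\item Fluid: $|u|^2\varrho$, $U_{\Lb}$, $U_{L'}$ lie between $A^{-1}/2$ and $2A$. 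Their derivatives $(|u|\nablas)^i(|u|\Db)^j(|u|D)^k$ are bounded by $\Delta$.
\end{itemize}

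\textbf{Closing the fluid bounds (Proposition \ref{fluidestimates}).}
\begin{itemize}
\item We use Christodoulou's energy currents for the Euler system in the double null foliation, as in \cite{Song}. Their coercivity follows from the two-sided bounds on $|u|^2\varrho$, $U_{\Lb}$, $U_{L'}$ together with $\kappa<1$, which keeps the sound cone strictly inside the null cone.
\item We commute first with $\Db$ and $\nablas$ (Remark \ref{DbintermsofDlossOmega}), and recover $D$ derivatives afterwards by writing $L'$ as a combination of $U$ and $\Lb$.
\item For the zeroth-order bounds we integrate the transport equation along integral curves of $U$. In $\mathcal{D}_\varepsilon$ these curves have length $\approx\Omega^2\ub$, which compensates the $\Omega^{-2}$ loss in $|u|\tr\chi'$. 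The deviation of $|u|^2\varrho$, $U_{\Lb}$, $U_{L'}$ from their values on $\Cb_0$ is therefore $O(\Delta\varepsilon)$, which improves the bootstrap for small $\varepsilon$.
\end{itemize}

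\textbf{Closing the curvature bounds (Proposition \ref{curvature}).} We run the standard energy estimates for the Bianchi pairs $(\alpha,\beta),(\beta,(\rho,\sigma)),((\rho,\sigma),\betab),(\betab,\alphab)$ in the double foliation, using the divergence identities on $\mathcal{D}_{\ub,u}$.
\begin{itemize}
\item The matter terms $\mathbf{Ric}$ and its derivatives are controlled by the fluid estimates.
\item The $\Omega$-weights are chosen so that no isolated $\omegab$ terms appear with a bad sign. The only borderline terms carry $\omegab$ or $\tr\chib$ and have a favourable sign or integrable weight.
\item The nonlinear error terms gain a factor $\varepsilon^{p'}$, because the region is thin in $\ub|u|^{-1}$. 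The initial $\chih$ contributes through the $\ub^{-p}$-weighted data, giving factors of $\ub^p$.
\end{itemize}

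\textbf{Closing the connection coefficients (Propositions \ref{improvebootA}, \ref{improvebootB}).} We integrate the null structure equations.
\begin{itemize}
\item Along $\Db$ we integrate from $C_{u_0}$. Here the absence of $\omegab\Omega\chih$ in \eqref{Dbchih} makes $\Dbh(|u|\Omega\chih)$ integrable in $u$, so the blue-shift bound holds.
\item Along $D$ we integrate from $\Cb_0$, where the quantities have their background values. Along these $\ub$-intervals the increments are $O(\varepsilon)$.
\item Top-order angular derivatives of $\tr\chi$, $\tr\chib$, $\eta$, $\etab$ are recovered by the usual elliptic div--curl systems on $S_{\ub,u}$.
\end{itemize}

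\textbf{Main obstacle and conclusion.} The hardest step is the fluid energy estimate at top order. The coupling terms between Euler and the shears carry $\Omega^{-2}$ losses, for instance through $\tr\chi'$ and $\chih'$. The relevant fact is that the $\Db$- and $\nablas$-commuted quantities see only $\Omega$-weighted Einstein quantities, whereas the $D$-commuted quantities pick up $\Omega^{-2}$; these two losses must exactly cancel against the $\Omega^2\ub$ length of the $U$-curves. I would first try weighting the current by $\Omega^{2}$ along $C_u$ fluxes and checking that each error term has a factor $\Omega^2\ub|u|^{-1}$.

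With all bootstrap constants improved to $C(A)$ plus $O(\varepsilon)$, and choosing $\Delta\gg C(A)$ and $\varepsilon$ small, continuity gives existence in all of $\ub|u|^{-1}\le\varepsilon$.
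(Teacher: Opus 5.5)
Your outline has the right architecture: a bootstrap; Christodoulou's currents made coercive by two-sided bounds on $|u|^2\varrho$, $U_{\Lb}$, $U_{L'}$ and by $\kappa<1$; Bianchi pairs with $\Omega^2$-weights and the sign of $\omegab$; transport plus elliptic estimates for the Ricci coefficients. However, two of its concrete commitments would fail.

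\emph{The norms.} The hypothesis only gives $\|D\chih\|_{L^2(S_{\ub,u_0})}\lesssim A\ub^{p-1}$, so by \eqref{Dchih} $\alpha\sim\ub^{p-1}$ on $C_{u_0}$. For $p\le\frac12$ an $L^2(C_{u_0})$ norm of $|u|^2\Omega^2\alpha$ with only $|u|$-weights is infinite. This is precisely the regime used in Section \ref{sec:trapped}, where $p<\frac12\frac{\beta}{2+\beta}$. For cut-off smooth data the norm is finite but not bounded in terms of $A$, which is what the theorem needs.

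The paper's scheme is organised around $\ub$-weights: $\ub^{1-2\gamma}$ on $\alpha$, and $\ub^{-1-2\gamma}$, $\ub^{-2-2\gamma}$ on $\beta$, $(\widetilde{\rho},\sigma)$, $\betab$, $\alphab$, with $0<\gammat<\gamma<p$. These are matched by bootstrap assumptions \ref{bootstrapA}, \ref{bootstrapA'}, which encode \emph{gains} rather than uniform bounds:
$\ub^{\gamma}$ for angular derivatives of $\Omega\chih,\Omega\tr\chi,\omega$;
$\ub^{1+\gamma}$ for $\nablas\log\Omega$, which is what lets $\Omega$ pass through norms (losing one $\Omega$ at top angular order, \eqref{derivativelapsetop});
$\ub|u|^{-2}$ for $\Omega\eta,\Omega\etab,\Omega\chibh,\widetilde{\Omega\tr\chib},\widetilde{\omegab}$;
and only $\ub^{-1+\gamma}$ for $D(\Omega\chih)$.

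These weights do real work. Differentiating $\ub^{-1-2\gamma}$ creates good-sign bulk terms that yield $\mathcal{R}(\mathcal{M})$. In the $(\rho,\sigma)$--$\betab$ pair, the terms \eqref{curvatureboarderline} and \eqref{curvatureLambdab} carry no small factor at all. They close only because $\mathcal{R}(\mathcal{M})[\rho,\sigma]$ was already bounded in the $\beta$--$(\rho,\sigma)$ pair. So your claim that every nonlinear error gains a power of $\varepsilon$ from thinness is false, and the order of the pairs matters. Uniform $\Delta$-bounds also cannot deliver the weighted estimates of Propositions \ref{fluidestimates}--\ref{improvebootB} that the theorem asserts.

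\emph{Top-order fluid estimates.} Commuting only with $\Db$ and $\nablas$ breaks down at order $N+1$. $\Db^{N+1}$ falls on the coefficients and produces $\Db^{N+1}\eta\simeq\nablas\Db^N\omegab$. If the equations are kept in the $(U_{\Lb},U_{L'})$ form, it also produces $\Db^{N+1}\omegab$, and commuting $\nablas\Db^N$ produces $\nablas\Db^N\omegab$. None of these is controlled: the elliptic--transport system for $\omegab$ only yields $\nablas^i\Db^j\omegab$ with $j\le N-1$, cf. \eqref{nablasiDbjtopomegab}.

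The paper handles this in four steps.
(i) It rewrites Euler in $(\varrho,U_{\Lb'},U_L,\Us)$ so that $\omegab$ disappears.
(ii) At top order it commutes $\Db^ND$, together with $\nablas^i\Db^j$, $i\ge1$, and $\nablas^{N+1}$ with an extra $\Omega^2$ in the weight.
(iii) It recovers $\Db^{N+1}$ of the fluid variables from $\Db^ND$ and $\Db^N\nablas$ via Lemma \ref{DintermsofDbnablas}. The resulting $\Omega^2$-loss is absorbed because the errors enter \eqref{fluidestimate1} as $\Omega^2\mathcal{E}_{0,\vec i}$, $\Omega^4\mathcal{E}_{\Lb',\vec i}$, $\Omega^2\Es_{\vec i}$.
(iv) It uses the corrected variable $\widehat{\Dbf^{\vec i}U_L}$ of \eqref{Ulddef}, so that \eqref{fluiddivergence} has no top-order error despite the constraint $U_{\Lb'}U_L-|\Us|^2=1$.

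This derivative loss is the real top-order difficulty, and it dictates bootstrap \ref{bootstrapB}. The balance between $\Omega^{-2}$ and the length of the $U$-curves is not the issue: the structure $D(\Omega^2J_{\Lb'})+\Db J_L$, essentially your proposed $\Omega^2$-weighting, already handles it.

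Relatedly, $\nablas^{N+1}\tr\chib$ does not come from a div--curl system but from the commuted Raychaudhuri equation. There the term $\nablas(\omegab\,\Omega\tr\chib)$ forces a separate estimate for $\nablas^{N+1}\omegab$, and the closure uses $\omegab<0$ and $\gammat>0$.
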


Here the notation $\|\nablas^i\Db^jD^k(\varrho, U_{\Lb}, U_{L'}, \Us)\|_{L^2(S_{\ub,u_0})}\le A$  means
$$\|\nablas^i\Db^jD^k\varrho\|_{L^2(S_{\ub,u_0})}, \|\nablas^i\Db^jD^k  U_{\Lb} \|_{L^2(S_{\ub,u_0})}, \|\nablas^i\Db^jD^k U_{L'} \|_{L^2(S_{\ub,u_0})}, \|\nablas^i\Db^jD^k \Us\|_{L^2(S_{\ub,u_0})} \le A$$
and similar notations will be used throughout the paper.

In the following the notation $a\lesssim b$  denotes $a\le Cb$ for some  constant $C$ {\bf depending only on the $A$ and $\kappa, u_0, p$}. 

\begin{remark}
Note here that $\varepsilon$ is allowed to depend on the initial bound $A$. In previous works in vacuum and scalar field, $\varepsilon$ is  of order $A^{-1}$. But according  our previous work \cite{L-Z25} on Einstein--Euler--scalar field system, the dependence of $\varepsilon$ on $A$ is more complicated. So we do not attempt to track this dependence, which is harmless unless one is concerned with the sharp regularity.
\end{remark}

The  main part of the proof of the Theorem \ref{existencetheorem} is to establish the estimates in Propositions \ref{fluidestimates}, \ref{curvature}, \ref{improvebootA}, \ref{improvebootB}. Once the estimates are established, the construction of the spacetime can be done as in \cite{Chr08}, or more precisely, in Song's paper \cite{Song}. The estimates are derived under a bulk of bootstrap assumptions on the connection coefficients, which will be introduced in the course of the proof.  We derive estimates for the fluid variables via the Euler equations in Proposition \ref{fluidestimates}, for the curvature components in Proposition \ref{curvature}, and finally derive improved estimates for the lower order and top order derivatives of the connection coefficients in Propositions \ref{improvebootA} and \ref{improvebootB} respectively. The bootstrap argument can then be closed. 

Note that we close the estimate by using $N$ derivatives of the curvature components, and $N+1$ derivatives of the connection coefficients and fluid variables. 

\section{The estimates for fluid variables}

In this section we derive energy estimates for the Euler equations. The coefficients and commutators of the Euler equations are connection coefficients of the spacetime, so we begin by   the following bootstrap assumptions for connection coefficients. When deriving estimates for the curvature components and for the connection coefficients, we will introduce some more assumptions there.

 Let $\gamma\in(0,p)$ and $\gammat\in(0,\gamma)$ to be determined, where $p$ is the constant in the statement of Theorem \ref{existencetheorem}
 . Let also $\delta>0$ to be determined, and the following bootstrap assumptions are assumed to hold in the region $\ub|u|^{-1}\le\varepsilon$.

 \renewcommand{\theboot}{(A)} 
\begin{boot}\label{bootstrapA}

For $\Omega$: 
\begin{align*}
|u|^{-1}\| \Db^j D^k\log\Omega\|_{L^2(S_{\ub,u})}\le |u|^{-(j+k)}\varepsilon^{-\delta},&\  j\le N, k\le1,\\
|u|^{-1}\|\nablas^i \Db^j \log\Omega\|_{L^2(S_{\ub,u})}\le \ub^{1+\gamma}|u|^{-1-\gamma-(i+j)}\varepsilon^{-\delta}, &\   i\ge1, i+j\le N+1,\ \\
|u|^{-1}\|\nablas^i\Db^j D \log\Omega\|_{L^2(S_{\ub,u})}\le \ub^\gamma|u|^{-\gamma-1-(i+j)}\varepsilon^{-\delta}, &\ i\ge1, i+j\le N,\\
|u|^{-1}\|\Omega\nablas^{N+1}\log\Omega\|_{L^2(S_{\ub,u})}\le \ub^{1+\gamma}|u|^{-1-\gamma-(N+1)}\varepsilon^{-\delta}.
\end{align*}

Lower order derivatives: ($i+j+k\le N$):\\
For $\chih,\tr\chi,\omega$:
\begin{align*}
|u|^{-1}\|  D  (\Omega\chih)\|_{L^2(S_{\ub,u})}\le  \ub^{-1+\gamma}|u|^{-1-\gamma}\varepsilon^{-\delta},&\\
|u|^{-1}\|  \Db^j D^k  (\Omega\chih,\Omega\tr\chi,\omega)\|_{L^2(S_{\ub,u})}\le  |u|^{-1-(j+k)}\varepsilon^{-\delta},&\ k\le1,\ \text{except}\ D(\Omega\chih),\\
|u|^{-1}\|  \nablas^i\Db^j (\Omega\chih,\Omega\tr\chi,\omega)\|_{L^2(S_{\ub,u})}\le  \ub^{\gamma}|u|^{-1-\gamma-(i+j)}\varepsilon^{-\delta},&\ i\ge1.
\end{align*}
For $\eta,\etab,\chibh,\tr\chib$:
\begin{align*}
|u|^{-1}\|\Db^j  D^k (\Omega\eta,\Omega\etab,\Omega\chibh, \Omega\tr\chib)\|_{L^2(S_{\ub,u})}\le  |u|^{-1-(j+k)}\varepsilon^{-\delta},&\ k\le 1,\\
|u|^{-1}\|\nablas^i \Db^j D^k  (\Omega\eta,\Omega\etab,\Omega\chibh,  \Omega\tr\chib)\|_{L^2(S_{\ub,u})}\le  \ub^{\gamma}|u|^{-1-\gamma-(i+j+k)}\varepsilon^{-\delta},&\ k=0, i\ge1\ \text{or}\ i=k=1.\end{align*}

For $\Ks$:
\begin{align*}
|u|^{-1}\| \Db^j \Ks\|_{L^2(S_{\ub,u})}\le |u|^{-2-j}\varepsilon^{-\delta},&\ j\le N-1, \\
|u|^{-1}\|\nablas^i \Db^j  (\Omega\Ks)\|_{L^2(S_{\ub,u})}\le \ub^\gamma|u|^{-2-\gamma-(i+j)}\varepsilon^{-\delta}, &\  i\ge1, i+j\le N-1.
\end{align*}
\end{boot}

\begin{remark}\label{boostrapAdiscussion}
Unlike the scalar field case (for example \cite{Li-Liu22, An24}), in which only angular derivatives of the connection coefficients are needed, we need not only angular but also null derivatives of the connection coefficients in estimating the fluid variables via Euler equations. But as shown in Lemma \ref{DintermsofDbnablas}, $D$ derivative and $\Db$ derivative of the fluid variables are in principle equivalent, and discussed in Remark \ref{DbintermsofDlossOmega}, we will try to commute only $\Db$ derivatives to the Euler equations. So the above assumptions for the connection coefficients are mostly about the $\Db$ and $\nablas$ derivatives. 

But to get the top $(N+1)$-order energy estimates of the fluid variables, we will commute the top $(N+1)$-order differential operators with the Euler equation. we cannot commute $\Db^{N+1}$  to the Euler equations, since we do not have estimates for $\Db^{N+1}\eta$, which equals to $\nablas\Db^N\omegab$ up to  lower order terms. So we will instead commute $\Db^ND$    to the Euler equations, and express $\Db^{N+1}$ derivative to the fluid variables in terms of $\Db^ND$ and $\Db^N\nablas$ (or $\nablas\Db^N$, equivalently) derivatives. But as in Remark \ref{DbintermsofDlossOmega} we know that there will be an $\Omega^2$ loss, so we should carefully treat these terms.  Therefore we have at most one $D$ derivative in the above bootstrap assumptions. 

For a similar reason we cannot commute $\nablas\Db^N$ to the Euler equations, since we do not have estimates for $\nablas\Db^N\omegab$. At this stage we have two choices, commuting $\nablas\Db^{N-1}D$, or writing the Euler equations in the form that does not contain $\omegab$. We choose the latter way. 

The $\nablas\Db^{N-1}D$ derivative  applying to one of $\eta,\etab,\chibh$ and $\tr\chib$ come from commuting $\Db^{N}D$ to the Euler equations. We do not need $\nablas\Db^{N-1}D$ derivative to the other connection coefficients.
\end{remark}

Under the bootstrap assumptions \ref{bootstrapA} and $\varepsilon$ small enough, we will have the following  basic geometric lemmas needed in the proof in the region $\ub|u|^{-1}\le\varepsilon$.  Their proofs are standard (see \cite{Chr08} for example).
\begin{itemize}
\item {\bf Estimates for the lapse and area:}
\begin{equation}\label{lapse}\frac{1}{2}\Omega_0\le\Omega\le 2\Omega_0, \end{equation}
\begin{equation}\label{area}\frac{1}{2}|u|^2\le\frac{1}{4\pi}\mathrm{Area}(S_{\ub,u})\le 2|u|^2.\end{equation}
\item {\bf Sobolev inequality: }
Given a tangential tensorfield $\theta$, we have,
\begin{align}\label{Sobolev}
\|\theta\|_{L^\infty(S_{\ub,u})}&\lesssim\sum_{i=0}^2|u|^{-1}\|(|u|\nablas)^i\theta\|_{L^2(S_{\ub,u})},
\end{align}
\item {\bf H\"older inequality:} If we denote a scale-invariant form of the Sobolev norm, 
$$\|\theta\|_{H^k(S_{\ub,u})}=\sum_{i=0}^k \|(|u|\nablas)^i\theta\|_{L^2(S_{\ub,u})}, $$
we have, for tangential tensorfields $\theta_1,\cdots,\theta_n$, and $k\ge2$, 
\begin{equation}\label{Holder}
|u|^{-1}\|\theta_1\cdots\theta_n\|_{H^k(S_{\ub,u})}\lesssim|u|^{-n}\|\theta_1\|_{H^k(S_{\ub,u})}
\cdots\|\theta_n\|_{H^k(S_{\ub,u})}
\end{equation}
\item {\bf Gronwall type inequalities: } Given a $s$-tangential tensorfield $\theta$, we have 
\begin{align}
\label{Gronwallub}|u|^{-1}\|\theta\|_{L^2(S_{\ub,u})}
\le &C_s |u|^{-1}\left(\|\theta\|_{L^2(S_{0,u})}+\int_0^{\ub}\|D\theta\|_{L^2(S_{\ub',u})}\D\ub'\right).\\
\label{Gronwallu}\begin{split}\||u|^{s+\nu-1}\theta\|_{L^2(S_{\ub,u})}\lesssim& C_{s,\nu}\left(\||u|^{s+\nu-1}\theta\|_{L^2(S_{\ub,u_0})}\right.\\&\left.+\int_{u_0}^u\||u'|^{s+\nu-1}(\Db\theta+\frac{\nu}{2}\Omega\tr\chib\theta)\|_{L^2(S_{\ub,u'})}\D u'\right).\end{split}
\end{align}
The above estimates also hold for angular derivatives up to order $N$ (replacing $L^2$ by $H^N$), or $N+1$ when $\theta$ is a function. If $\theta$ is a trace-free $2$-tangential tensorfield, then $D\theta$ and $\Db\theta$ can be replaced by their trace-free parts $\Dh\theta$ and $\Dbh\theta$.
\end{itemize}
\begin{remark}
Together with the H\"older inequality, the bootstrap assumption for $\Omega$ guarantees that we can move $\Omega$ in or out a norm without changing its estimate.  For instance, for any $s\in\mathbb{R}$, 
\begin{equation*}|u|^{-1}\|\Omega^s\|_{H^{N}(S_{\ub,u})}\le C_s\Omega^s.\end{equation*}
This estimate tells us that for any tangential tensorfield $\theta$, $2\le k\le N$,
\begin{equation*}\|\Omega^s\theta\|_{H^k(S_{\ub,u})}\le C_{s,k}\Omega^s\|\theta\|_{H^k(S_{\ub,u})}.\end{equation*}
So $\Omega$ can be moved freely  outside and inside the $H^k$ norm. Similar inequalities hold for replacing $\nablas^k$ by $\nablas^i\Db^jD^k$ with $2\le i\le N$.  The top angular derivative $\nablas^{N+1}$ will cause a loss of $\Omega$, and then for any tangential tensorfield $\theta$, 
\begin{equation}\label{derivativelapsetop}\|\Omega^s\theta\|_{H^{N+1}(S_{\ub,u})}\le C_{s}(\Omega^s\|\theta\|_{H^{N+1}(S_{\ub,u})}+\Omega^{s-1}\|\theta\|_{H^{N}(S_{\ub,u})}),\end{equation}
which means that $\nablas^{N+1}(\Omega^s\theta)$ and $\Omega^s\nablas^{N+1}\theta$ behave similar, losing an $\Omega$ as compared to $\nablas^{\le N}(\Omega^s\theta)$ or $\Omega^s\nablas^{\le N}\theta$.  Moreover, if a quantity obeys better estimates after applying angular derivatives than itself, say
$$|u|^{-1}\|\theta\|_{L^2(S_{\ub,u})}\lesssim 1, \ |u|^{-1}\|\nablas\theta\|_{L^2(S_{\ub,u})}\lesssim\ub^{1+\gammat}|u|^{-1-\gammat},$$
then the bootstrap assumption for $\Omega$ still allow 
$$|u|^{-1}\|\Omega^s\theta\|_{L^2(S_{\ub,u})}\le C_s \Omega^s, \ |u|^{-1}\|\nablas(\Omega^s\theta)\|_{L^2(S_{\ub,u})}\le C_s \Omega^s\ub^{1+\gammat}|u|^{-1-\gammat}$$
whenever $\gamma>\gammat$ and $\varepsilon$ sufficiently small (depending on $\gamma-\gammat$). Similar relations hold for higher order estimates.
\end{remark}

 \renewcommand{\theboot}{(B)} 
\begin{boot}\label{bootstrapB} Top order derivatives: ($i+j=N+1$) 
   
  Mixed derivatives:
\begin{align*}
|u|^{2\gammat}\int_{\Cb_{\ub}}|u'|^{-1-2\gammat+2(N+1)}|  \Db^{N}D(\Omega\chih, \Omega\tr\chi,\Omega\chibh,\Omega\tr\chib, \Omega\eta,\Omega\etab, \omega)|^2\le \varepsilon^{-2\delta},\\
\int_{\Cb_{\ub}}\ub^{-2\gamma}|u|^{-1+2\gamma+2(N+1)}|\nablas^i \Db^j  (\Omega\chih, \Omega\tr\chi,\Omega\chibh,\Omega\tr\chib, \Omega\eta,\Omega\etab, \omega)|^2\le \varepsilon^{-2\delta},&\  1\le i\le N,\\
\int_{\Cb_{\ub}}\ub^{-2\gamma}|u|^{-1+2\gamma+2(N+1)}|\nablas \Db^{N-1}D  ( \Omega\chibh,\Omega\tr\chib, \Omega\eta,\Omega\etab)|^2\le \varepsilon^{-2\delta},&\ 
\end{align*}
  
 Top angular derivatives: 
\begin{align*}
\int_{C_u}\ub^{-1-2\gamma}|u|^{2\gamma+2(N+1)}|\nablas^{N+1}(\Omega^2\chih,\Omega\omega)|^2\le \varepsilon^{-2\delta},\\
\int_{\Cb_{\ub}}\ub^{-2\gamma}|u|^{-1+2\gamma+2(N+1)}|\nablas^{N+1}(\Omega^2\eta,\Omega^2\etab,\Omega^2\chibh)|^2\le \varepsilon^{-2\delta},\\
|u|^{-1}\|\nablas^{N+1}(\Omega^2\tr\chi,\Omega^2\tr\chib)\|_{L^2(S_{\ub,u})}\le \ub^{\gamma}|u|^{-1-\gamma-(N+1)}\varepsilon^{-\delta}, 
\end{align*}

For $\Ks$: ($i+j=N$)
\begin{align*}
\int_{\Cb_{\ub}}\ub^{-2\gamma}|u|^{-1+2\gamma+2N}|\nablas^i\Db^j(\Omega\Ks)|^2\le\varepsilon^{-2\delta}, &\ 1\le i,j\le N-1,\\
\int_{\Cb_{\ub}}\ub^{-2\gamma}|u|^{-1+2\gamma+2N}|\nablas^N(\Omega^2\Ks)|\le\varepsilon^{-2\delta}.
\end{align*}

\end{boot} 

\begin{remark}
Here the integral $\int_{\Cb_{\ub}}$ means
$$\int_{\Cb_{\ub}} f(\ub, u',\vartheta):=\int_{u_0}^u \int_{S_{\ub',u'}}f(\ub, u',\vartheta)\sqrt{\det\gs}\D\vartheta\D u'$$
for a positive function $f$ and $\ub|u|^{-1}=\varepsilon$. The integral is along the whole $\Cb_{\ub}$ in the region $ \ub'|u'|^{-1}\le\varepsilon$. Similarly, 
$$\int_{C_{u}} f(\ub', u, \vartheta):=\int_{0}^{\ub} \int_{S_{\ub',u'}}f(\ub', u,\vartheta)\sqrt{\det\gs}\D\vartheta\D \ub'$$
for a positive function $f$ and $\ub|u|^{-1}=\varepsilon$. 
\end{remark}
\begin{remark}
The top order angular derivatives will lose an $\Omega$. 
\end{remark}

We start to analyze the Euler equations and the obtain estimates for the fluid variables. Recall that for a vector field $X$, we denote 
$$U_X=g(U,X),$$
then we have the expression of fluid velocity
$$U=-\frac{1}{2}U_{\Lb}L'-\frac{1}{2}U_{L'}\Lb+\Us$$
where the condition $g(U,U)=-1$ reads
$$U_{\Lb}U_{L'}-|\Us|^2=1,$$
which will be used throughout the section. In the double null frame, the Euler equations \eqref{Euler}
  can be expressed in the following form in terms of $(\varrho, U_{\Lb}, U_{L'}, \Us)$, in which case $\omega$ does not appear: 
\begin{align*}
&U\varrho+(1+\kappa)\varrho(-\frac{1}{2}L'(U_{\Lb})-\frac{1}{2}\Lb(U_{L'})+\divs\Us)\\
&=(1+\kappa)\varrho(\omegab U_{L'}-\etab_A\Us^A-\eta_A\Us^A+\frac{1}{2}\tr\chi' U_{\Lb}+\frac{1}{2}\Omega\tr\chib U_{L'})
\end{align*}
\begin{align*}
& \varrho U(U_{\Lb})+\frac{\kappa}{(1+\kappa)} (\nabla_{\Lb}\varrho+U_{\Lb}U\varrho)\\
&=-\varrho U_{\Lb}(U_{L'}\omegab-\Us^A\eta_A)+\varrho\Us^A(-U_{\Lb}\etab_A+\Us^B\Omega\chib_{AB}), 
\end{align*}
\begin{align*}
 &\varrho U(U_{L'})+\frac{\kappa}{(1+\kappa)}(\nabla_{L'}\varrho+U_{L'}U\varrho)\\
 &=-\varrho U_{L'}(-U_{L'}\omegab+\Us^A\eta_A)+\varrho\Us^A(-U_{L'}\eta_A+\Us^B\Omega^{-1}\chi_{AB}),
 \end{align*}
 \begin{align*}
&2\varrho(\nablas_U(\Us))^A+\frac{2\kappa}{(1+\kappa)}(\nablas^A\varrho+\Us^AU\varrho)\\
&=\varrho U_{\Lb}(-U_{L'}\eta^A+\Us^B\Omega^{-1}\chi_B^A)+\varrho U_{L'}(-U_{\Lb}\etab^A+\Us^B\Omega\chib^A_B).
\end{align*}
The advantage of this expression is that the variables $U_{\Lb}, U_{L'}\lesssim1$ is the self-similar variables. However, as discussed in Remark \ref{boostrapAdiscussion}, we should use the form that $\omegab$ does not appear. It can be achieved by rewriting the Euler equations in terms of $(\varrho, U_{\Lb'}, U_{L}, \Us)$ where
$$U_{\Lb'}=g(U,\Lb')=\Omega^{-2}U_{\Lb},$$
$$ U_L=g(U,L)=\Omega^2U_{L'}.$$
The corresponding equations by taking conjugation:
\begin{equation}\label{EulerE0}
\begin{split}
&U\varrho+(1+\kappa)\varrho(-\frac{1}{2}\Lb'( U_{L})-\frac{1}{2}L( U_{\Lb'})+\divs\Us)\\
&=(1+\kappa)\varrho(\omega U_{\Lb'}-\etab_A\Us^A-\eta_A\Us^A+\frac{1}{2}\tr\chib' U_{L}+\frac{1}{2}\Omega\tr\chi U_{\Lb'})=:\mathcal{E}_0
\end{split}
\end{equation}
\begin{equation}\label{EulerEL}
\begin{split}
& \varrho U(U_{L})+\frac{\kappa}{(1+\kappa)} (\nabla_{L}\varrho+U_{L}U\varrho)\\
&=-\varrho U_{L}(U_{\Lb'}\omega-\Us^A\etab_A)+\varrho\Us^A(-U_{L}\eta_A+\Us^B\Omega\chi_{AB})=:\mathcal{E}_{L}, 
\end{split}
\end{equation}
\begin{equation}\label{EulerELb'}
\begin{split}
 &\varrho U(U_{\Lb'})+\frac{\kappa}{(1+\kappa)}(\nabla_{\Lb'}\varrho+U_{\Lb'}U\varrho)\\
 &=-\varrho U_{\Lb'}(-U_{\Lb'}\omega+\Us^A\etab_A)+\varrho\Us^A(-U_{\Lb'}\etab_A+\Us^B\Omega^{-1}\chib_{AB})=:\mathcal{E}_{\Lb'},
\end{split}
\end{equation}
\begin{equation}\label{EulerEs}
\begin{split}
&2\varrho(\nablas_U(\Us))^A+\frac{2\kappa}{(1+\kappa)}(\nablas^A\varrho+\Us^AU\varrho)\\
&=\varrho U_{L}(-U_{\Lb'}\etab^A+\Us^B\Omega^{-1}\chib_B^A)+\varrho U_{\Lb'}(-U_{L}\eta^A+\Us^B\Omega\chi^A_B)=:\Es.
\end{split}
\end{equation}

These equations to express $D$ derivative in terms of $\Db$ and $\nablas$ derivatives, or express $\Db$ derivatives in terms of $D$ and $\nablas$ derivatives. 
\begin{lemma}\label{DintermsofDbnablas}
If $\varrho (U_{\Lb'})^2>0$, then $D\varrho, DU_{\Lb'}, DU_{L}, D\Us$ can respectively be expressed as a sum which is a linear combination of $\Db,\nablas$ derivatives of $\varrho, U_{\Lb'}, U_L, \Us$, with the coefficients being the fluid variables themselves and the connection coefficients. Similar conclusions hold when reversing $D$ and $\Db$ derivatives and assuming  $\varrho (U_{L})^2>0$.
\end{lemma}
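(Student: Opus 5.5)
The plan is to read the Euler system \eqref{EulerE0}--\eqref{EulerEs} as a linear algebraic system for the unknowns $(D\varrho, DU_{\Lb'}, DU_L, D\Us)$. All $\Db$- and $\nablas$-derivatives, together with the right-hand sides $\mathcal{E}_0,\mathcal{E}_L,\mathcal{E}_{\Lb'},\Es$, are treated as known data, and we show that the coefficient matrix is invertible exactly when $\varrho(U_{\Lb'})^2>0$.

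First I expand every occurrence of $U$ as a derivation in the frame,
$$U=-\frac{1}{2}U_{\Lb'}L-\frac{1}{2}U_{L}\Lb'+\Us.$$
Then $Uf=-\frac12 U_{\Lb'}Lf+(\text{$\Lb'$- and angular derivatives of }f)$. Here $Lf$ and $\Lb' f$ are, up to factors of $\Omega$ (and, for $\Us$, zeroth-order terms involving $\chi,\chib,\eta,\etab$), the derivatives $Df$ and $\Db f$. Likewise $\nabla_L\varrho$ is a $D$-derivative, while $\nabla_{\Lb'}\varrho$ and $\nablas\varrho$ are not. Write $a=-\frac12 U_{\Lb'}$ and collect only the $L$-derivative terms:
\begin{align*}
\eqref{EulerE0}:&\quad a\,L\varrho-\tfrac{1+\kappa}{2}\varrho\, L U_{\Lb'}+\cdots,\\
\eqref{EulerELb'}:&\quad \tfrac{\kappa}{1+\kappa}U_{\Lb'}a\,L\varrho+\varrho a\, LU_{\Lb'}+\cdots,\\
\eqref{EulerEL}:&\quad \tfrac{\kappa}{1+\kappa}(1+aU_L)\,L\varrho+\varrho a\, LU_L+\cdots,\\
\eqref{EulerEs}:&\quad \tfrac{2\kappa}{1+\kappa}a\Us\,L\varrho+2\varrho a\,\nablas_L\Us+\cdots.
\end{align*}
In each line, the dots denote $\Db$- and $\nablas$-derivatives of the fluid variables and connection-coefficient terms.

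The system is block triangular. The pair $(L\varrho, LU_{\Lb'})$ is determined by \eqref{EulerE0} and \eqref{EulerELb'} alone. Its determinant is
$$\varrho a^2+\tfrac{\kappa}{2}\varrho U_{\Lb'}a=\varrho a\Big(a+\tfrac{\kappa}{2}U_{\Lb'}\Big)=\tfrac{1-\kappa}{4}\varrho\,(U_{\Lb'})^2 .$$
This is nonzero because $\kappa<1$ and $\varrho(U_{\Lb'})^2>0$; this is the algebraic form of the sound speed being below the light speed. Having solved for $L\varrho$, I solve \eqref{EulerEL} for $LU_L$ and \eqref{EulerEs} for $\nablas_L\Us$, dividing by $\varrho a$, which is nonzero under the same hypothesis. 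Cramer's rule then expresses each $D$-derivative as a combination of $\Db$- and $\nablas$-derivatives of $(\varrho,U_{\Lb'},U_L,\Us)$ and of the zeroth-order terms in $\mathcal{E}$. The coefficients are rational in the fluid variables (denominators are powers of $\varrho U_{\Lb'}$) times connection coefficients and powers of $\Omega$, which is the claimed form.

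For the reversed statement I use the conjugate symmetry $L\leftrightarrow\Lb'$, $U_{\Lb'}\leftrightarrow U_L$ (equivalently, the $(U_{\Lb},U_{L'})$ form of the equations). The same computation gives the determinant $\frac{1-\kappa}{4}\varrho U_L^2$, so the $\Db$-derivatives are expressed through $D$ and $\nablas$ whenever $\varrho(U_L)^2>0$. The main obstacle I expect is bookkeeping rather than conceptual. One must track correctly the relation between $L$, $\Lb'$ and $D$, $\Db$ (the $\Omega$-weights), and the commutation of $\nablas_L$ with $D$ on the tangential field $\Us$. This ensures the coefficients are indeed only fluid variables and connection coefficients, and that no hidden $L$-derivative remains inside the terms $\Lb'(U_L)$ and $\divs\Us$.
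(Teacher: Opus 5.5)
Your proposal is correct and is essentially the paper's proof: the paper likewise isolates the $2\times 2$ system for $(L\varrho, LU_{\Lb'})$ from \eqref{EulerE0} and \eqref{EulerELb'}, obtains the same determinant $\frac{1}{4}(1-\kappa)\varrho(U_{\Lb'})^2$, and then solves \eqref{EulerEL} for $DU_L$ and \eqref{EulerEs} for $\nablas_L\Us$ by dividing by $\varrho U_{\Lb'}$. The reversed statement is dismissed in the paper as following ``in similar manner'', which is exactly your $L\leftrightarrow\Lb'$, $U_{\Lb'}\leftrightarrow U_L$ symmetry argument with determinant $\frac{1-\kappa}{4}\varrho (U_L)^2$.
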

\begin{proof}
Rewriting \eqref{EulerE0} and \eqref{EulerELb'} as
\begin{align*}
&-\frac{1}{2}U_{\Lb'}L\varrho-\frac{1}{2}(1+\kappa)\varrho L( U_{\Lb'})\\
=&\mathcal{E}_0+\frac{1}{2}U_{L}\Lb'\varrho-\Us\varrho-(1+\kappa)\varrho\left(-\frac{1}{2}\Lb'( U_{L})+\divs\Us\right),\\
 &-\frac{1}{2}\varrho  U_{\Lb'}LU_{\Lb'}-\frac{1}{2}\frac{\kappa}{(1+\kappa)}(U_{\Lb'})^2L\varrho\\
 =&\mathcal{E}_{\Lb'}+\frac{1}{2}\varrho  U_{L}\Lb'U_{\Lb'}-\varrho \Us U_{\Lb'}-\frac{\kappa}{(1+\kappa)}\left(\frac{1}{2}\Lb'\varrho-\frac{1}{2}|\Us|^2\Lb'\varrho+U_{\Lb'}\Us\varrho)\right).\end{align*}
 The determinant of the coefficient matrix is $\frac{1}{4}(1-\kappa)\varrho(U_{\Lb'})^2>0$. By solving these equations directly, we have the schematic  form
 \begin{align*}
D \varrho=&(U_{\Lb'})^{-1}\mathcal{E}_0+(U_{\Lb'})^{-2}\mathcal{E}_{\Lb'}+\varrho  U_{L}(U_{\Lb'})^{-2}\Lb'U_{\Lb'}+\varrho  (U_{\Lb'})^{-2}\Us \nablas U_{\Lb'}\\
&+(U_{\Lb'})^{-2}(1+|\Us|^2)\Lb'\varrho+(U_{\Lb'})^{-1}\Us\nablas\varrho+\varrho(U_{\Lb'})^{-1}(\Lb'( U_{L})+\divs\Us),
 \end{align*}
  \begin{align*}
D U_{\Lb'}=&\varrho^{-1}\mathcal{E}_0+\varrho^{-1}(U_{\Lb'})^{-1}\mathcal{E}_{\Lb'}+  U_{L}(U_{\Lb'})^{-1}\Lb'U_{\Lb'}+(U_{\Lb'})^{-1}\Us \nablas U_{\Lb'}\\
 &+\varrho^{-1}(U_{\Lb'})^{-1}(1+|\Us|^2)\Lb'\varrho+\varrho^{-1}\Us\nablas\varrho+\Lb'( U_{L})+\divs\Us.
 \end{align*} 
Equation \eqref{EulerEL} then shows
 \begin{align*}
DU_L=-2(U_{\Lb'})^{-1}\left(\varrho^{-1}(\mathcal{E}_L-\frac{\kappa}{1+\kappa}(D\varrho+U_LU\varrho))-U_L\Lb' U_{L}+2\Us \nablas U_L\right).
\end{align*}
Equation \eqref{EulerEs} shows
 \begin{align*}
\nablas_L\Us=-(U_{\Lb'})^{-1}\left(\varrho^{-1}(\Es-\frac{2\kappa}{1+\kappa}(\nablas\varrho+\Us U\varrho))-U_L\nablas_{\Lb'}\Us+2\Us \nablas \Us\right).
\end{align*}
  In similar manner, $\Db$ derivative can be expressed in terms of $D$ and $\nablas$ derivatives. 
\end{proof}
 
 \begin{remark}
Through these formulas, readers can check directly the discussions in and before Remark \ref{DbintermsofDlossOmega}.
 \end{remark}
 
 In the following we present schematic formulas commuting derivatives to \eqref{EulerE0}--\eqref{EulerEs}, the proof of which is by successively using Lemma \ref{commutationformulas}, and will be omitted.   We denote $\Dbf^{\vec{i}}\theta$ to be the sum of mixed derivatives of $\nablas, \Db$ and $D$ with total orders $\is, i_{\Lb}$ and $i_{L}$ respectively, where $\vec{i}=(\is, i^{\Lb}, i^L)$. For example,
 $$\Dbf^{(1,1,1)}f=\nablas\Db D f+\nablas D\Db f+\Db\nablas Df+\Db D\nablas f+D\nablas\Db f+D\Db\nablas f.$$
  We also allow that $\is, i^{\Lb}, i^L$ take negative value, and denote $\Dbf^{\vec{i}}=0$ if one of $\is, i^{\Lb}, i^L$ is negative. We also denote
$$|\vec{i}|=\is+i^{\Lb}+i^L$$
and
$$\vec{i}+\vec{j}=(\is+\js, i^{\Lb}+j^{\Lb}, i^L+j^L).$$
We also denote, say $\Dbf^{(\is+1, i^{\Lb}, i^L)}$ by $\Dbf^{\vec{i}}\nablas$, and $\Dbf^{(\is, i^{\Lb}-1, i^L)}$ by $\Dbf^{\vec{i}}\Db^{-1}$. This is to say,  the notation, say $\Dbf^{\vec{i}}\nablas$, means taking an additional $\nablas$ without fixed order, rather than taking $\nablas$ first and then $\Dbf^{\vec{i}}$.

\begin{remark}
Note that in the bootstrap assumptions \ref{bootstrapA} and \ref{bootstrapB}, the derivatives $\nablas,\Db$ and $D$ appear in a fixed order. In fact, by Lemma \ref{commutationformulas}, the bounds also hold when we commute the derivatives, which will be omitted in the course of the proof. We introduce the notation $\Dbf^{\vec{i}}$ in estimating the fluid variables purely for the sake of concision. 
\end{remark}

  \begin{lemma} For function $f$ and tangential vectorfield $\theta$, we have  
\begin{align*}[\Dbf^{\vec{i}}, \nablas_L]f=\sum(\Dbf^{\vec{i}_1}\Db^{-1})(\Omega^2\zeta^\sharp)\Dbf^{\vec{i}_2}\nablas f+\sum(\Dbf^{\vec{i}_1}\nablas^{-2})\nablas(\Omega\chi)\Dbf^{\vec{i}_2}\nablas f,\end{align*}
\begin{align*}[\Dbf^{\vec{i}}, \nablas_L]\theta=&\sum(\Dbf^{\vec{i}_1}\Db^{-1})(\Omega^2\zeta^\sharp)\Dbf^{\vec{i}_2}\nablas \theta+\sum(\Dbf^{\vec{i}_1}\Db^{-1})\nablas (\Omega^2\zeta^\sharp)\Dbf^{\vec{i}_2}\theta\\
&+\sum  \Dbf^{\vec{i}_1}  (\Omega\chi)\Dbf^{\vec{i}_2}\theta,\end{align*}
\begin{align*}[\Dbf^{\vec{i}}, \nablas_{\Lb}]f=\sum(\Dbf^{\vec{i}_1}D^{-1})(\Omega^2\zeta^\sharp)\Dbf^{\vec{i}_2}\nablas f+\sum(\Dbf^{\vec{i}_1}\nablas^{-2})\nablas(\Omega\chib)\Dbf^{\vec{i}_2}\nablas f,\end{align*}
\begin{align*}[\Dbf^{\vec{i}}, \nablas_{\Lb}]\theta=&\sum(\Dbf^{\vec{i}_1}D^{-1})(\Omega^2\zeta^\sharp)\Dbf^{\vec{i}_2}\nablas \theta+\sum(\Dbf^{\vec{i}_1}D^{-1})\nablas (\Omega^2\zeta^\sharp)\Dbf^{\vec{i}_2}\theta,\\
&+\sum \Dbf^{\vec{i}_1}  (\Omega\chib)\Dbf^{\vec{i}_2}\theta\end{align*}
\begin{align*}[\Dbf^{\vec{i}}, \nablas]f=&\sum(\Dbf^{\vec{i}_1}\nablas^{-2})\Ks\Dbf^{\vec{i}_2}\nablas f\\
&+\sum(\Dbf^{\vec{i}_1}D^{-1}\nablas^{-1})\nablas(\Omega\chi)\Dbf^{\vec{i}_2}\nablas f+\sum(\Dbf^{\vec{i}_1}\Db^{-1}\nablas^{-1})\nablas(\Omega\chib)\Dbf^{\vec{i}_2}\nablas f\end{align*}
\begin{align*}[\Dbf^{\vec{i}}, \nablas]\theta=&\sum(\Dbf^{\vec{i}_1}\nablas^{-1})\Ks\Dbf^{\vec{i}_2}\theta,\\
&+\sum(\Dbf^{\vec{i}_1}D^{-1})\nablas(\Omega\chi)\Dbf^{\vec{i}_2}\theta+\sum(\Dbf^{\vec{i}_1}\Db^{-1})\nablas(\Omega\chib)\Dbf^{\vec{i}_2}\theta,\end{align*}
where the sums are over $\sum_j{\vec{i}_j}=\vec{i}$. 
\end{lemma}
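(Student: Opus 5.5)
The proof will be by induction on $|\vec{i}|$, using the elementary first-order commutation formulas between $\nablas$, $\Db$, $D$ in double null gauge (the ones gathered in Lemma \ref{commutationformulas} of the appendix). These are, schematically, for a function $f$ and a tangential tensor $\theta$:
\begin{align*}
[D,\Db]f=-4\Omega^2\zeta^\sharp\cdot\nablas f,\qquad [\nablas,D]f=0,\qquad [\nablas,\Db]f=0,
\end{align*}
\begin{align*}
[D,\nablas]\theta=\Omega\chi\cdot\nablas\theta+\nablas(\Omega\chi)\cdot\theta,\qquad [\Db,\nablas]\theta=\Omega\chib\cdot\nablas\theta+\nablas(\Omega\chib)\cdot\theta .
\end{align*}
To them I add the formula for the commutator of $\nablas$ with itself on tensors, $[\nablas,\nablas]\theta=\Ks\,\theta$, and the expression of the spacetime operators $\nablas_L$, $\nablas_{\Lb}$ acting on tangential quantities as $D$, $\Db$ plus $\Omega\chi\cdot$, $\Omega\chib\cdot$ terms. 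Here $L=\Omega^{-1}\cdot\Omega L'$, so $\nablas_L$ is the projected derivative, which differs from $D$ by $\Omega\chi$-contractions. I also use $\Omega^2\zeta=\Omega^2(\eta-\etab)/2$-type identities, via $[L,\Lb]=$ (a horizontal vector proportional to $\Omega^2\zeta^\sharp$), to identify the torsion term.

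The base case $|\vec{i}|=1$ is then a direct reading of these identities.

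\emph{First commutator.} Take $[\Db,\nablas_L]f$: it yields $\Omega^2\zeta^\sharp\cdot\nablas\nablas f$, which accounts for the $\Db^{-1}$ shift in the first sum. It also yields $\nablas(\Omega\chi)\cdot\nablas f$ if the $\nablas$ lands on a $\chi$ coefficient. That term is recorded as $(\Dbf^{\vec{i}_1}\nablas^{-2})\nablas(\Omega\chi)$ with $\vec{i}_1$ accounting for it.

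\emph{Second commutator.} For $[\nablas,\nablas_L]f$ I get $\nablas(\Omega\chi)\cdot\nablas f$ together with $\Omega\chi\cdot\nablas^2 f$. The latter is absorbed in the second sum with the index convention. Here the bookkeeping convention in the statement (an index $\vec{i}_1$ with a shift by $-1$ or $-2$, and zero if a component becomes negative) is exactly designed so that each lowest-order contribution fits.

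\emph{Inductive step.} Write $\Dbf^{\vec{i}}=X\Dbf^{\vec{i}'}$ with $X\in\{\nablas,\Db,D\}$ and use
$$[X\Dbf^{\vec{i}'},Y]=X[\Dbf^{\vec{i}'},Y]+[X,Y]\Dbf^{\vec{i}'}.$$
The first term is handled by the induction hypothesis and the Leibniz rule, with $X$ distributing onto the coefficient or onto the $\Dbf^{\vec{i}_2}$ factor. Each resulting term keeps the same shape, since applying $X$ increases $\vec{i}_1$ or $\vec{i}_2$ by the unit vector of $X$. The second term is the base case applied to the tensor $\Dbf^{\vec{i}'}f$.

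\emph{Main obstacle.} The only delicate point is checking that no forbidden coefficient appears, in particular no bare $\omega$, $\omegab$ or $\nablas\Db^N\omegab$. This matters because of Remark \ref{boostrapAdiscussion}. It holds because the commutators involve only $\Omega^2\zeta$, $\Omega\chi$, $\Omega\chib$ and $\Ks$ (with $\nablas$-derivatives of $\Omega\chi$, $\Omega\chib$ rather than of $\omega$), provided one uses $D$ and $\Db$ rather than $\nablas_{L'}$. I would verify this carefully in the base identities; the induction is then routine. As stated in the paper, the schematic sums suppress numerical constants and the placement order of derivatives.
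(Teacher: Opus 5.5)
Your strategy (induction on $|\vec{i}|$ via $[X\Dbf^{\vec{i}'},Y]=X[\Dbf^{\vec{i}'},Y]+[X,Y]\Dbf^{\vec{i}'}$, fed by Lemma \ref{commutationformulas}, $[\Db,D]=\Lie_{4\Omega^2\zeta^\sharp}$, and $\nablas_L=D-\Omega\chi\cdot$, $\nablas_{\Lb}=\Db-\Omega\chib\cdot$ on tangential tensors) is what the paper intends; it omits the proof, citing successive use of Lemma \ref{commutationformulas}. However, the lemma consists of exactly this first-order bookkeeping, and several identities you wrote down are wrong, so your base case does not produce the stated formulas.

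\emph{Wrong $[D,\nablas]$ on tensors.} $D,\Db$ are projected Lie derivatives along fields preserving the foliation. Hence $[D,\nablas]\theta$ is the Lie derivative of the induced connection, which involves only $\nablas(\mathcal{L}_L\gs)=2\nablas(\Omega\chi)$. So $[D,\nablas]\theta=\nablas(\Omega\chi)\cdot\theta$ and $[\Db,\nablas]\theta=\nablas(\Omega\chib)\cdot\theta$, as in Lemma \ref{commutationformulas} with $i=1$. Your extra terms $\Omega\chi\cdot\nablas\theta$, $\Omega\chib\cdot\nablas\theta$ would produce $\Omega\chi\cdot\Dbf^{\vec{i}-(0,0,1)}\nablas\theta$ in $[\Dbf^{\vec{i}},\nablas]\theta$. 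Such terms are absent from the last formula, where every $\chi,\chib$ carries a $\nablas$.

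\emph{Wrong base computations.} $[\Db,\nablas_L]f=[\Lb,L]f=4\Omega^2\zeta^\sharp\cdot\nablas f$: one angular derivative on $f$, and no $\chi$-term. Also $[\nablas,\nablas_L]f=\nablas Df-D\nablas f+\Omega\chi\cdot\nablas f=\Omega\chi\cdot\nablas f$, not $\nablas(\Omega\chi)\cdot\nablas f+\Omega\chi\cdot\nablas^2 f$. An order count catches both errors: every term must have total order $|\vec{i}|+1$, with $\Omega\chi,\Omega\chib,\Omega^2\zeta$ of order one and $\Ks$ of order two. (Minor: $L=\Omega^2L'$.)

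\emph{The index convention must be checked.} The corrected $[\nablas,\nablas_L]f$ is exactly where your claim that the convention ``is exactly designed so that each lowest-order contribution fits'' needs verification. The clean route is
$[\Dbf^{\vec{i}},\nablas_L]f=[\Dbf^{\vec{i}},D]f+\Omega\chi\cdot\Dbf^{\vec{i}}f$, which avoids mixing the function and tensor cases.
\begin{enumerate}
\item[(a)] Lemma \ref{commutationformulas} turns $[\Dbf^{\vec{i}},D]f$ into exactly the two sums of the statement. Every $\chi$ there comes from $[D,\nablas]$ acting on a tensor, so it carries a $\nablas$.
\item[(b)] The term $\Omega\chi\cdot\Dbf^{\vec{i}}f$ is present whenever $\is\ge1$ and has an undifferentiated $\Omega\chi$. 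It corresponds to $\vec{i}_1=(1,0,0)$.
\item[(c)] It fits $(\Dbf^{\vec{i}_1}\nablas^{-2})\nablas(\Omega\chi)$ only if that symbol is read as $(\Dbf^{\vec{i}_1}\nablas^{-1})(\Omega\chi)$. Under the literal convention ($\Dbf^{\vec{j}}=0$ when a component of $\vec{j}$ is negative), the first formula fails already at $\vec{i}=(1,0,0)$. The same holds for the third formula with $\Omega\chib\cdot\Dbf^{\vec{i}}f$.
\end{enumerate}
State this reading explicitly, or add the term by hand. With the corrected first-order identities and this reading, your induction goes through, including the step where the tensor case is applied to $\Dbf^{\vec{i}'}f$.
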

\begin{remark}
In applying Lemma \ref{commutationformulas}, we also need the relations
$$\Db f=\nablas_{\Lb}f,$$
$$\Db\theta=\nablas_{\Lb}\theta+\chib\cdot\theta.$$
\end{remark}
  
 For any function $g$, we have
 \begin{align*} [\Dbf^{\vec{i}}, g\nablas_{\Lb}]=&g[\Dbf^{\vec{i}}, \nablas_{\Lb}]+\sum_{|\vec{i}_1|\ge1}\Dbf^{\vec{i}_1}g \Dbf^{\vec{i}_2}\Db+\sum_{|\vec{i}_1|\ge1, \text{applying to vector}}\Dbf^{\vec{i}_1}g \Dbf^{\vec{i}_2}(\Omega\chib)\Dbf^{\vec{i}_3},\\
 [\Dbf^{\vec{i}}, g\nablas_{L}]=&g[\Dbf^{\vec{i}}, \nablas_{L}]+\sum_{|\vec{i}_1|\ge1}\Dbf^{\vec{i}_1}g \Dbf^{\vec{i}_2}D+\sum_{|\vec{i}_1|\ge1, \text{applying to vector}}\Dbf^{\vec{i}_1}g \Dbf^{\vec{i}_2}(\Omega\chi)\Dbf^{\vec{i}_3},\\
 [\Dbf^{\vec{i}}, g\nablas]=&g[\Dbf^{\vec{i}},\nablas]+\sum_{|\vec{i}_1|\ge1}\Dbf^{\vec{i}_1}g \Dbf^{\vec{i}_2}\nablas.
 \end{align*}
The lemma below follows by writing $U=-\frac{1}{2}U_{\Lb'}L-\frac{1}{2}U_{L'}\Lb+\Us$ and commuting $\Dbf^{\vec{i}}$ with $\nablas_U$.
\begin{lemma}\label{EulerDbfi}
 \begin{align*}
\nablas_U(\Dbf^{\vec{i}}\varrho)+(1+\kappa)\varrho(-\frac{1}{2}\nablas_{\Lb'}(\Dbf^{\vec{i}}U_{L})-\frac{1}{2}\nablas_{L}(\Dbf^{\vec{i}}U_{\Lb'})+\divs(\Dbf^{\vec{i}}\Us))=\mathcal{E}_{0,\vec{i}},
\end{align*}
\begin{align*}
\varrho \nablas_U(\Dbf^{\vec{i}}U_{\Lb'})+\frac{\kappa}{(1+\kappa)} (\nablas_{\Lb'}(\Dbf^{\vec{i}}\varrho)+U_{\Lb'}\nablas_U(\Dbf^{\vec{i}}\varrho))=\mathcal{E}_{\Lb',\vec{i}},
\end{align*}
 \begin{align*}
2\varrho(\nablas_U(\Dbf^{\vec{i}}\Us))^A+\frac{2\kappa}{(1+\kappa)}(\nablas^A(\Dbf^{\vec{i}}\varrho)+\Us^A\nablas_U(\Dbf^{\vec{i}}\varrho))=\Es_{\vec{i}},
\end{align*}
where
\begin{align*}
\mathcal{E}_{0,\vec{i}}=\Dbf^{\vec{i}}\mathcal{E}_0+\sum&\Dbf^{\vec{i}_0}(\Db^{-1}(\Omega^2\zeta^\sharp), \nablas^{-2}\nablas(\Omega\chih,\Omega\tr\chi))(U_{\Lb'}\Dbf^{\vec{i}_1}\nablas\varrho+\varrho\Dbf^{\vec{i}_1}\nablas U_{\Lb'})\\
+&\Dbf^{\vec{i}_0} (D^{-1}(\Omega^2\zeta^\sharp), \nablas^{-2}\nablas(\Omega\chibh, \Omega\tr\chib))(U_{L'}\Dbf^{\vec{i}_1}\nablas\varrho+\varrho\Dbf^{\vec{i}_2}\nablas U_{L'})\\
+&\Dbf^{\vec{i}_0}\left(D^{-1}\nablas (\Omega\chih,\Omega\tr\chi),\Db^{-1}\nablas (\Omega\chibh,\Omega\tr\chib), \nablas^{-1}\Ks\right)\Dbf^{\vec{i}_1}\varrho\Dbf^{\vec{i}_2}\Us\\
+\sum_{|\vec{i}_1|\ge1}&\Dbf^{\vec{i}_1} U_{\Lb'}\Dbf^{\vec{i}_2}D\varrho+\Dbf^{\vec{i}_1}\varrho\Dbf^{\vec{i}_2} DU_{\Lb'}\\
+&\Dbf^{\vec{i}_1}U_{L'}\Dbf^{\vec{i}_2}\Db\varrho+\Dbf^{\vec{i}_1}\varrho\Dbf^{\vec{i}_2}\Db U_{L'}\\
+&\Dbf^{\vec{i}_1}\Us\Dbf^{\vec{i}_2}\nablas\varrho+\Dbf^{\vec{i}_1}\varrho\Dbf^{\vec{i}_2}\nablas\Us,
\end{align*}

\begin{align*}
\mathcal{E}_{\Lb',\vec{i}}=\Dbf^{\vec{i}}\mathcal{E}_{\Lb'}+\sum&\Dbf^{\vec{i}_0}( D^{-1}(\Omega^2\zeta^\sharp), \nablas^{-2}\nablas(\Omega\chibh, \Omega\tr\chib))\\&\times(\varrho U_{L'}\Dbf^{\vec{i}_1}\nablas U_{\Lb'}+ (U_{\Lb'}U_{L'}+\Omega^{-2})\Dbf^{\vec{i}_1}\nablas\varrho )\\
+&\Dbf^{\vec{i}_0}( \Db^{-1}(\Omega^2\zeta^\sharp),  \nablas^{-2}\nablas(\Omega\chih, \Omega\tr\chi))\\
&\times U_{\Lb' }(\varrho\Dbf^{\vec{i}_1}\nablas U_{\Lb'}+U_{\Lb'}\Dbf^{\vec{i}_1}\nablas\varrho)\\
+& \Dbf^{\vec{i}_0}\left( D^{-1}\nablas(\Omega\chih,\Omega\tr\chib), \Db^{-1}\nablas(\Omega\chibh,\Omega\tr\chib), \nablas^{-1}\Ks\right)\Us\Dbf^{\vec{i}_1}\varrho\Dbf^{\vec{i}_2}U_{\Lb'}\\
+\sum_{|\vec{i}_1|\ge1}&\Dbf^{\vec{i}_1}(\varrho U_{L'})\Dbf^{\vec{i}_2}\Db U_{\Lb'}+ \Dbf^{\vec{i}_1}(U_{\Lb'}U_{L'}+\Omega^{-2})\Dbf^{\vec{i}_2}\Db\varrho \\
+&\Dbf^{\vec{i}_1}(\varrho U_{\Lb'})\Dbf^{\vec{i}_2}D U_{\Lb'}+\Dbf^{\vec{i_1}}(U_{\Lb'})^2\Dbf^{\vec{i}_2}D\varrho\\
+&\Dbf^{\vec{i}_1}(\varrho\Us)\Dbf^{\vec{i}_2}\nablas U_{\Lb'}+\Dbf^{\vec{i_1}}(U_{\Lb'}\Us)\Dbf^{\vec{i}_2}\nablas\varrho.
\end{align*}
 Note that if $\Omega\chih$ or $\Omega\chibh$ appears in the expressions, then the corresponding term will contain at least one $\nablas$ acting on fluid variable. And also 
\begin{align*}
\Es_{\vec{i}}=\Dbf^{\vec{i}}\Es+\sum&\Dbf^{\vec{i}_0}( D^{-1}\nablas (\Omega^2\zeta^\sharp), \Omega\chibh, \Omega\tr\chib)\varrho U_{L'}\Dbf^{\vec{i}_1}\Us\\
+&\Dbf^{\vec{i}_0}D^{-1} (\Omega^2\zeta^\sharp)(\Us\Dbf^{\vec{i}_1}\nablas\varrho + \varrho\Dbf^{\vec{i}_1}\nablas\Us)U_{L'}\\
+&\Dbf^{\vec{i}_0}(  \Db^{-1}\nablas(\Omega^2\zeta^\sharp),  \Omega\chih, \Omega\tr\chi)\varrho U_{\Lb'}\Dbf^{\vec{i}_1}\Us\\
+&\Dbf^{\vec{i}_0}\Db^{-1} (\Omega^2\zeta^\sharp)(\Us\Dbf^{\vec{i}_1}\nablas\varrho + \varrho\Dbf^{\vec{i}_1}\nablas\Us)U_{\Lb'}\\
+&\Dbf^{\vec{i}_0}\left( D^{-1}\nablas(\Omega\chih,\Omega\tr\chi), \Db^{-1}\nablas(\Omega\chibh,\Omega\tr\chib), \nablas^{-1}\Ks\right)\Us\Dbf^{\vec{i}_1}\varrho\Dbf^{\vec{i}_2}\Us\\
+&\Dbf^{\vec{i}_0} \nablas^{-2}\Ks\Dbf^{\vec{i}_1}\nablas\varrho\\
+\sum_{|\vec{i}_1|\ge1}&\Dbf^{\vec{i}_1}(\varrho U_{L'})\Dbf^{\vec{i}_2}(\Db \Us+\Omega\chib \Us)+ \Dbf^{\vec{i}_1}(U_{L'}\Us)\Dbf^{\vec{i}_2}\Db\varrho \\
+&\Dbf^{\vec{i}_1}(\varrho U_{\Lb'})\Dbf^{\vec{i}_2}(D \Us+\Omega\chi \Us)+ \Dbf^{\vec{i}_1}(U_{\Lb'}\Us)\Dbf^{\vec{i}_2}D\varrho \\
+&\Dbf^{\vec{i}_1}(\varrho\Us)\Dbf^{\vec{i}_2}\nablas \Us+\Dbf^{\vec{i_1}}(\Us\Us)\Dbf^{\vec{i}_2}\nablas\varrho\\
+&\Dbf^{\vec{i}_1} \gs^{-1}\Dbf^{\vec{i}_2}\nablas\varrho.
\end{align*}
Here the sums are over $\sum_j{\vec{i}_j}=\vec{i}$. 

\end{lemma}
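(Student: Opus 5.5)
The plan is to apply $\Dbf^{\vec{i}}$ to the three equations \eqref{EulerE0}, \eqref{EulerELb'} and \eqref{EulerEs}, and to split the vector field $U$ in the principal part as
$$U=-\frac{1}{2}U_{\Lb'}L-\frac{1}{2}U_{L}\Lb'+\Us.$$
This splitting is equivalent to the one quoted in the text, since $U_{\Lb'}L=U_{\Lb}L'$ and $U_L\Lb'=U_{L'}\Lb$.

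\textbf{Step 1: move derivatives past the first-order operators.} Every first-order differential operator in these equations has the form $g\nablas_X$, where $X\in\{L,\Lb,\nablas\}$ (note $\Lb'=\Omega^{-2}\Lb$) and the coefficient $g$ is built from the fluid variables and $\Omega$. For each such term I write
$$\Dbf^{\vec{i}}(g\nablas_X f)=g\nablas_X\Dbf^{\vec{i}}f+g[\Dbf^{\vec{i}},\nablas_X]f+\sum_{|\vec{i}_1|\ge1}\Dbf^{\vec{i}_1}g\,\Dbf^{\vec{i}_2}\nablas_X f.$$
This is exactly the displayed formula for $[\Dbf^{\vec{i}},g\nablas_X]$.

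\textbf{Step 2: expand the commutators.} The terms $[\Dbf^{\vec{i}},\nablas_L]$, $[\Dbf^{\vec{i}},\nablas_{\Lb}]$ and $[\Dbf^{\vec{i}},\nablas]$ are expanded with Lemma \ref{commutationformulas}. This produces the families of terms in $\mathcal{E}_{0,\vec{i}}$, $\mathcal{E}_{\Lb',\vec{i}}$ and $\Es_{\vec{i}}$:
\begin{itemize}
\item $(\Dbf^{\vec{i}_0}\Db^{-1})(\Omega^2\zeta^\sharp)$ and $(\Dbf^{\vec{i}_0}\nablas^{-2})\nablas(\Omega\chi)$ come from commuting with the $L$-part of $U$;
\item the same terms with $D^{-1}$ and $\chib$ come from commuting with the $\Lb$-part;
\item the $\Ks$ and $\nablas(\Omega\chi)$, $\nablas(\Omega\chib)$ terms come from commuting with $\divs$ or with $\nablas^A$.
\end{itemize}
Two pieces of bookkeeping are needed here. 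First, $\Db\theta=\nablas_{\Lb}\theta+\chib\cdot\theta$ when the argument is the vector $\Us$; this accounts for the terms $\Dbf^{\vec{i}_2}(\Db\Us+\Omega\chib\Us)$. Second, the divergence $\divs\Us$ is a contraction with $\gs^{-1}$. Differentiating $\gs^{-1}$ with $D$ or $\Db$ gives $\chi$ and $\chib$ terms, and the piece $\Dbf^{\vec{i}_1}\gs^{-1}$ in $\Es_{\vec{i}}$ comes from raising the index in $\nablas^A\varrho$.

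\textbf{Step 3: the inhomogeneous terms.} The right-hand sides $\mathcal{E}_0$, $\mathcal{E}_{\Lb'}$ and $\Es$ are simply differentiated, giving the term $\Dbf^{\vec{i}}\mathcal{E}$.

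\textbf{Step 4: the nonprincipal coefficient terms.} The terms in which $\Dbf^{\vec{i}_1}$ falls on a coefficient such as $U_{\Lb'}$, $\varrho$, $U_{L'}$, $\Us$ or $U_{\Lb'}U_{L'}+\Omega^{-2}$ give the sums over $|\vec{i}_1|\ge1$. The combination $U_{\Lb'}U_{L'}+\Omega^{-2}$ appears as the coefficient of $\Db\varrho$ after expanding $\nablas_{\Lb'}\varrho+U_{\Lb'}U\varrho$ and using $U_{\Lb'}U_L-|\Us|^2=1$.

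\textbf{Main obstacle.} The delicate point is checking that the $\Omega\chih$ and $\Omega\chibh$ contributions always come with an angular derivative on a fluid variable. In Lemma \ref{commutationformulas} these shears enter through $\nablas(\Omega\chi)\Dbf^{\vec{i}_2}\nablas f$. For scalars they also enter through the rewriting $\Db\Us=\nablas_{\Lb}\Us+\chib\cdot\Us$, which carries the factor $\Us$. Since the data on $\Cb_0$ is spherically symmetric, $\Us$ is itself small, comparable to angular derivatives. I would verify this claim by tracking each term. The verification is schematic; the statement only asserts the structure, not any estimate, so no analytic difficulty arises beyond this bookkeeping.
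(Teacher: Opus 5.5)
Your proposal is correct and is essentially the paper's own (omitted) argument. Like the paper, you write $U$ in the null frame, commute $\Dbf^{\vec{i}}$ through each $g\nablas_X$ using the displayed formulas for $[\Dbf^{\vec{i}},g\nablas_X]$ and Lemma \ref{commutationformulas} (with $\Db\theta=\nablas_{\Lb}\theta+\Omega\chib\cdot\theta$, and the analogous relation for $D$, on $\Us$), and collect the product-rule terms into the sums over $|\vec{i}_1|\ge1$. One small correction: the coefficient $U_{\Lb'}U_{L'}+\Omega^{-2}$ of $\Db\varrho$ comes out directly as $\Omega^{-2}-\frac{1}{2}U_{\Lb'}U_{L'}$ when you expand $\nablas_{\Lb'}\varrho+U_{\Lb'}U\varrho$, so the normalization $U_{\Lb'}U_L-|\Us|^2=1$ is not needed there.
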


We start to do energy estimate. For each $\vec{i}$, let us define $\widehat{\Dbf^{\vec{i}}U_{L}}$ to be 
\begin{equation}\label{Ulddef} U_{\Lb'}\widehat{\Dbf^{\vec{i}}U_{L}}+U_{L}\Dbf^{\vec{i}}U_{\Lb'}-2\Us\cdot\Dbf^{\vec{i}}\Us=0.\end{equation}
The crucial fact is that the difference $\Dbf^{\vec{i}}U_{L}-\widehat{\Dbf^{\vec{i}}U_{L}}$ only contains lower order terms, which can be seen by the relation  $U_{\Lb'}U_{L}-|\Us|^2=1$. We write

\begin{equation}\label{EulerE0i}
\begin{split}
&\nablas_U\Dbf^{\vec{i}}\varrho+(1+\kappa)\varrho(-\frac{1}{2}\nablas_{\Lb'}(\widehat{\Dbf^{\vec{i}}U_{L}})-\frac{1}{2}\nablas_{L}(\Dbf^{\vec{i}}U_{\Lb'})+\divs\Dbf^{\vec{i}}\Us)\\=&\mathcal{E}_{0,\vec{i}}+\frac{1}{2}(1+\kappa)\varrho\nablas_{\Lb'}(\Dbf^{\vec{i}}U_{L}-\widehat{\Dbf^{\vec{i}}U_{L}}),
\end{split}
\end{equation}
\begin{equation}\label{EulerELb'i}
\begin{split}
& \varrho \nablas_U(\Dbf^{\vec{i}}U_{\Lb'})+\frac{\kappa}{(1+\kappa)} (\nablas_{\Lb'}\Dbf^{\vec{i}}\varrho+U_{\Lb'}\nablas_U\Dbf^{\vec{i}}\varrho)=\mathcal{E}_{\Lb',\vec{i}}, 
\end{split}
\end{equation}
\begin{equation}\label{EulerEsi}
\begin{split}
&2\varrho(\nablas_U(\Dbf^{\vec{i}}\Us))^A+\frac{2\kappa}{(1+\kappa)}(\nablas^A\Dbf^{\vec{i}}\varrho+\Us^A\nablas_U\Dbf^{\vec{i}}\varrho)=\Es_{\vec{i}},
\end{split}
\end{equation}
The equation of $\widehat{\Dbf^{\vec{i}}U_{L}}$ is obtained by taking derivatives to the relation \eqref{Ulddef}:
\begin{equation}\label{EulerELi}
\begin{split}
 &\varrho \nablas_U(\widehat{\Dbf^{\vec{i}}U_{L}})+\frac{\kappa}{(1+\kappa)}(\nablas_{L}\Dbf^{\vec{i}}\varrho+U_{L}\nablas_U\Dbf^{\vec{i}}\varrho)\\
 =&\varrho\nablas_U(-\frac{U_{L}}{U_{\Lb'}})\Dbf^{\vec{i}}U_{\Lb'}+\varrho\nablas_U(\frac{2\Us}{U_{\Lb'}})\cdot\Dbf^{\vec{i}}\Us-\frac{U_{L}}{U_{\Lb'}}\mathcal{E}_{\Lb',\vec{i}}+\frac{\Us}{U_{\Lb'}}\Es_{\vec{i}}=:\mathcal{E}_{L,\vec{i}}.
\end{split}
\end{equation}

We consider the following energy currents, introduced by Christodoulou \cite{Chr00,Chr07} (See also \cite{Song} in double null foliations). 
\begin{align*}J_{\Lb',\vec{i}}=&\frac{\kappa}{(1+\kappa)\varrho} U_{\Lb'}|\Dbf^{\vec{i}}\varrho|^2+2\kappa \Dbf^{\vec{i}}U_{\Lb'}\Dbf^{\vec{i}}\varrho+(1+\kappa)\varrho U_{\Lb'}\left(-\Dbf^{\vec{i}}U_{\Lb'}\widehat{\Dbf^{\vec{i}}U_{L}}+|\Dbf^{\vec{i}}\Us|^2\right).\end{align*}
\begin{align*}J_{L,\vec{i}}=&\frac{\kappa}{(1+\kappa)\varrho} U_{L}|\Dbf^{\vec{i}}\varrho|^2+2\kappa \widehat{\Dbf^{\vec{i}}U_{L}}\Dbf^{\vec{i}}\varrho+(1+\kappa)\varrho U_{L}\left(-\Dbf^{\vec{i}}U_{\Lb'}\widehat{\Dbf^{\vec{i}}U_{L}}+|\Dbf^{\vec{i}}\Us|^2\right).\end{align*}
\begin{align*}\Js_{\vec{i}}=&\frac{\kappa}{(1+\kappa)\varrho} \Us|\Dbf^{\vec{i}}\varrho|^2+2\kappa \Dbf^{\vec{i}}\Us\Dbf^{\vec{i}}\varrho+(1+\kappa)\varrho \Us\left(-\Dbf^{\vec{i}}U_{\Lb'}\widehat{\Dbf^{\vec{i}}U_{L}}+|\Dbf^{\vec{i}}\Us|^2\right).\end{align*}

The coerciveness of these currents are guaranteed by the following lemma.
\begin{lemma}\label{coerciveness} Assume
\begin{equation}\label{geometricacoustical1} U_{\Lb},  U_{L'}, |u|^2\varrho, |u|^{-2}\varrho^{-1}\lesssim 1.\end{equation}
Then it holds
$$|\Omega^2 J_{\Lb',\vec{i}}|, |\Omega^{-2} J_{L,\vec{i}}| \gtrsim \kappa^2|u|^2|\Dbf^{\vec{i}}\varrho|^2+|u|^{-2}\left(\Omega^4|\Dbf^{\vec{i}} U_{\Lb'}|^2+|\Dbf^{\vec{i}}\Us|^2\right).$$
where the constant does not depend on $\kappa$.  We also have
$$|\Omega^2 J_{\Lb',\vec{i}}|, |\Omega^{-2} J_{L,\vec{i}}| \lesssim |u|^2|\Dbf^{\vec{i}}\varrho|^2+|u|^{-2}\left(\Omega^4|\Dbf^{\vec{i}} U_{\Lb'}|^2+|\Dbf^{\vec{i}}\Us|^2\right).$$
In particular, $|\Omega^2 J_{\Lb',\vec{i}}|\approx |\Omega^{-2} J_{L,\vec{i}}|$.
\end{lemma}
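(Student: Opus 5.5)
The plan is to treat $\Omega^2J_{\Lb',\vec{i}}$ and $\Omega^{-2}J_{L,\vec{i}}$ as quadratic forms in the unknowns
$$x=\Dbf^{\vec{i}}\varrho,\qquad y=\Dbf^{\vec{i}}U_{\Lb'},\qquad z=\Dbf^{\vec{i}}\Us.$$
We eliminate $\widehat{\Dbf^{\vec{i}}U_{L}}$ using its definition \eqref{Ulddef}, rescale so that only self-similar coefficients appear, and then check positive definiteness. The sign condition that makes this work is $\kappa<1$ (sound speed slower than light).

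\textbf{Step 1: rewrite $\Omega^2J_{\Lb',\vec{i}}$.} From \eqref{Ulddef},
$$\widehat{\Dbf^{\vec{i}}U_{L}}=-(U_{\Lb'})^{-1}\bigl(U_L y-2\Us\cdot z\bigr),$$
so
$$(1+\kappa)\varrho U_{\Lb'}\bigl(-y\widehat{\Dbf^{\vec{i}}U_{L}}+|z|^2\bigr)=(1+\kappa)\varrho\bigl(U_Ly^2-2y\,\Us\cdot z+U_{\Lb'}|z|^2\bigr).$$
Set $Y=\Omega^2y$ and use $U_{\Lb'}=\Omega^{-2}U_{\Lb}$, $U_L=\Omega^2U_{L'}$. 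Then
$$\Omega^2J_{\Lb',\vec{i}}=\frac{\kappa U_{\Lb}}{(1+\kappa)\varrho}x^2+2\kappa xY+(1+\kappa)\varrho\,Q(Y,z),$$
where
$$Q(Y,z)=U_{L'}Y^2-2Y\,\Us\cdot z+U_{\Lb}|z|^2 .$$
By $U_{\Lb}U_{L'}-|\Us|^2=1$, the form $Q$ has determinant $1$ in the scalar/vector sense. Together with \eqref{geometricacoustical1} this gives $U_{\Lb},U_{L'}\gtrsim1$ and $|\Us|\lesssim1$. Hence the eigenvalues of $Q$ lie in a fixed interval, and $Q\approx Y^2+|z|^2$.

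\textbf{Step 2: coercivity.} For the coupling with $x$ we minimize $Q$ over $z$ at fixed $Y$, which gives the minimum $Y^2/U_{\Lb}$. We keep a slack: write $Q=\theta Q+(1-\theta)Q$ with a fixed $\theta\in(0,1)$. The $\theta Q$ part controls $c(Y^2+|z|^2)$. The remaining part, bounded below using the $z$-minimum, is bounded below by the $2\times2$ form
$$\frac{\kappa U_{\Lb}}{(1+\kappa)\varrho}x^2+2\kappa xY+(1-\theta)(1+\kappa)\varrho\frac{Y^2}{U_{\Lb}} .$$
Its discriminant is $\kappa^2-(1-\theta)\kappa<0$ once $\theta<1-\kappa$. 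Here is exactly where $\kappa<1$ is used.

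Applying Cauchy--Schwarz to the cross term, $2\kappa|xY|\le \frac{\kappa}{2}\frac{U_{\Lb}}{(1+\kappa)\varrho}x^2+C\kappa\varrho Y^2/U_{\Lb}$, and splitting the $Y^2$ budget accordingly, we get a lower bound $\gtrsim\kappa\,\varrho^{-1}x^2+\varrho(Y^2+|z|^2)$. Inserting $\varrho\approx|u|^{-2}$ gives the claimed bound
$$\kappa^2|u|^2|x|^2+|u|^{-2}\bigl(\Omega^4|y|^2+|z|^2\bigr),$$
after weakening $\kappa$ to $\kappa^2$. The only $\kappa$-dependence then sits in the choice of $\theta$ through $1-\kappa$, which I would track explicitly. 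The upper bound is immediate: every coefficient is $\lesssim$ the corresponding self-similar weight.

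\textbf{Step 3: $\Omega^{-2}J_{L,\vec{i}}$.} This is symmetric with the roles of $\Lb',L$ interchanged. Set $W=\Omega^{-2}\widehat{\Dbf^{\vec{i}}U_{L}}$ and solve \eqref{Ulddef} for $y$ in terms of $(W,z)$; in the rescaled variables, $Y=-(U_{\Lb}W-2\Us\cdot z)/U_{L'}$. The same computation gives
$$\Omega^{-2}J_{L,\vec{i}}\gtrsim\kappa^2|u|^2x^2+|u|^{-2}(W^2+|z|^2).$$
Finally, the linear map $(Y,z)\leftrightarrow(W,z)$ has coefficients and inverse bounded by \eqref{geometricacoustical1}, so $W^2+|z|^2\approx Y^2+|z|^2$. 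This yields the stated bound for $\Omega^{-2}J_{L,\vec{i}}$ and the equivalence $|\Omega^2J_{\Lb',\vec{i}}|\approx|\Omega^{-2}J_{L,\vec{i}}|$.

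The main obstacle is Step 2: bookkeeping the constants so that the lower bound degenerates only like a power of $\kappa$ and not like $U_{\Lb}$ or $\Omega$. The rescaling $Y=\Omega^2y$ is what removes all $\Omega$-dependence.
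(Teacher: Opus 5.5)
Your proof is correct and is essentially the paper's argument. The paper's proof is just ``eliminate $\widehat{\Dbf^{\vec{i}}U_{L}}$ via \eqref{Ulddef} and apply Cauchy--Schwarz''; your minimization in $z$ and the determinant $\kappa(1-\theta-\kappa)$ make the positivity explicit. Strictly, you only get $|U_{\Lb}|,|U_{L'}|\gtrsim1$ with a common sign; for future-directed $U$ both are negative and the currents are negative definite, which is why the lemma carries absolute values.

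Two bookkeeping points need fixing.

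\begin{enumerate}
\item Your fixed $\tfrac{\kappa}{2}$ split of the cross term closes only for $\kappa<\tfrac12$. For larger $\kappa$, minimize exactly in $Y$ instead; this leaves the coefficient $\frac{\kappa(1-\theta-\kappa)}{1-\theta}\cdot\frac{U_{\Lb}}{(1+\kappa)\varrho}$ on $|\Dbf^{\vec{i}}\varrho|^2$.
\item The resulting $(1-\kappa)$-dependence cannot be removed, because at $\kappa=1$ the form degenerates: take $z=Y\Us/U_{\Lb}$ and $x=-2\varrho Y/U_{\Lb}$. So the lemma's ``constant independent of $\kappa$'' can only mean uniformity for $\kappa$ bounded away from $1$, e.g.\ the small $\kappa$ of the intended application.
\end{enumerate}
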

\begin{proof}
They are direct consequences by using Cauchy--Schwarz inequality and \eqref{Ulddef}. 
\end{proof}

We write
\begin{equation}\label{fluiddivergence}\nabla_{L}\left(\Omega^2J_{\Lb',\vec{i}}\right)+\nabla_{\Lb}\left( J_{L,\vec{i}}\right)-2\divs\left(\Omega^2\Js_{\vec{i}}\right)=\tau_{\vec{i}},\end{equation}
where (in a schematic form)
\begin{align*}\tau_{\vec{i}}=&\nablas_L(\varrho^{-1}U_{\Lb})\Dbf^{\vec{i}}\varrho\Dbf^{\vec{i}}\varrho+\nablas_L(\varrho U_{\Lb})\left(-\Dbf^{\vec{i}}U_{\Lb'}\widehat{\Dbf^{\vec{i}}U_{L}}+|\Dbf^{\vec{i}}\Us|^2\right)\\
&+\nablas_{\Lb}( \varrho^{-1}U_{L})\Dbf^{\vec{i}}\varrho\Dbf^{\vec{i}}\varrho+\nablas_{\Lb}( \varrho U_{L}\gs)\left(-\Dbf^{\vec{i}}U_{\Lb'}\widehat{\Dbf^{\vec{i}}U_{L}}+|\Dbf^{\vec{i}}\Us|^2\right)\\
&+\divs(\Omega^2\varrho^{-1}\Us)\Dbf^{\vec{i}}\varrho\Dbf^{\vec{i}}\varrho+\divs(\Omega^2\varrho \Us)\left(-\Dbf^{\vec{i}}U_{\Lb'}\widehat{\Dbf^{\vec{i}}U_{L}}+|\Dbf^{\vec{i}}\Us|^2\right)\\
&+\Omega^2\left(\varrho^{-1}\mathcal{E}_{0,\vec{i}}\Dbf^{\vec{i}}\varrho+\mathcal{E}_{L,\vec{i}}\Dbf^{\vec{i}}U_{\Lb'}+\mathcal{E}_{\Lb',\vec{i}}\widehat{\Dbf^{\vec{i}}U_{L}}+ \Es_{\vec{i}}\Dbf^{\vec{i}}\Us\right).
\end{align*}
The crucial point is that the right hand side does not have top order terms. Using
\begin{align*}|\widehat{\Dbf^{\vec{i}}U_{L}}|^2\lesssim\Omega^8|\Dbf^{\vec{i}}U_{\Lb'}|^2+\Omega^4|\Dbf^{\vec{i}}\Us|^2
\end{align*}
which follows by \eqref{Ulddef}, and assuming in addition to \eqref{geometricacoustical1},
\begin{equation}\label{geometricacoustical2} \left||u|^{-1}\Us, D(U_{\Lb}),  \Db(U_{L'}), \nablas\Us , |u|^2D\varrho, |u|^2\Db\varrho, |u|^2\nablas\varrho \right| \lesssim\varepsilon^{-\delta} |u|^{-1}, \end{equation}
we have 
\begin{align*}
|\tau_{\vec{i}}|\lesssim&\varepsilon^{-\delta}\left( |u||\Dbf^{\vec{i}}\varrho|^2+|u|^{-1}\left(\Omega^4|\Dbf^{\vec{i}}U_{\Lb'}|^2+|\Dbf^{\vec{i}}\Us|^2\right)\right)\\
&+\varepsilon^{-\delta}\left|\Omega^2\left(\varrho^{-1}\mathcal{E}_{0,\vec{i}}\Dbf^{\vec{i}}\varrho+\mathcal{E}_{L,\vec{i}}\Dbf^{\vec{i}}U_{\Lb'}+\mathcal{E}_{\Lb',\vec{i}}\widehat{\Dbf^{\vec{i}}U_{L}}+ \Es_{\vec{i}}\Dbf^{\vec{i}}\Us\right)\right|.
\end{align*}

We should plug in different weight functions for different cases. The first case is $\is=0$. There are two subcases: $i^{\Lb}\le N, i^L=0$ and $i^{\Lb}=N,  i^L= 1$.   We set $w=w_{\vec{i}}=|u|^{-1-2\gammat+2|\vec{i}|}$ where $\gammat>0$ is to be determined, and write according to \eqref{fluiddivergence},
\begin{align*}&D\left(w_{\vec{i}}\Omega^2J_{\Lb',\vec{i}}\D\mu_{\gs}\right)+\Db\left( w_{\vec{i}}J_{L,\vec{i}}\D\mu_{\gs}\right)-2w_{\vec{i}}\divs\left(\Omega^2\Js_{\vec{i}}\right)\D\mu_{\gs}\\
=&\left(((1+2\gammat-2|\vec{i}|)|u|^{-1}+\Omega\tr\chib)w_{\vec{i}} J_{L,\vec{i}}+\Omega\tr\chi w_{\vec{i}}\Omega^2J_{\Lb',\vec{i}}+w_{\vec{i}}\tau_{\vec{i}}\right)\D\mu_{\gs}.\end{align*}
Integrating over $(\mathcal{M}, \D\ub\D u\D\mu_{\gs})$, we have
\begin{align*}&\int_{\Cb_{\ub}}-\int_{\Cb_0}w_{\vec{i}}\Omega^2J_{\Lb',\vec{i}}+\int_{C_u}-\int_{C_{u_0}} w_{\vec{i}}J_{L,\vec{i}}\\
\lesssim&\int_{\mathcal{M}}\left|((1+2\gammat-2|\vec{i}|)|u|^{-1}+\Omega\tr\chib)w_{\vec{i}} J_{L,\vec{i}}+w\Omega^2\Omega\tr\chi J_{\Lb',\vec{i}}+w_{\vec{i}}\tau_{\vec{i}}\right|\end{align*}
The initial data assumptions of Theorem \ref{existencetheorem} imply that the integral on initial null cones are bounded. By Lemma \ref{coerciveness}, for $\varepsilon$ sufficiently small, we then have 
\begin{equation*}\begin{split}&\int_{\Cb_{\ub}}|u|^{-1-2\gammat+2|\vec{i}|}\left(|u|^2|\Dbf^{\vec{i}}\varrho|^2+|u|^{-2}\left(\Omega^4|\Dbf^{\vec{i}} U_{\Lb'}|^2+|\Dbf^{\vec{i}}\Us|^2\right)\right)\\
&+\int_{C_u}|u|^{-1-2\gammat+2|\vec{i}|}\Omega^2\left(|u|^2|\Dbf^{\vec{i}}\varrho|^2+|u|^{-2}\left(\Omega^4|\Dbf^{\vec{i}} U_{\Lb'}|^2+|\Dbf^{\vec{i}}\Us|^2\right)\right)\\
\lesssim&|u|^{-2\gammat}+\varepsilon^{-\delta}\int_{\mathcal{M}}|u|^{-1}\cdot |u|^{-1-2\gammat+2|\vec{i}|}\left(|u|^2|\Dbf^{\vec{i}}\varrho|^2+|u|^{-2}\left(\Omega^4|\Dbf^{\vec{i}} U_{\Lb'}|^2+|\Dbf^{\vec{i}}\Us|^2\right)\right)\\
&+\varepsilon^{-\delta}\int_{\mathcal{M}}\left||u|^{-1-2\gammat+2|\vec{i}|}\Omega^2\left(\varrho^{-1}\mathcal{E}_{0,\vec{i}}\Dbf^{\vec{i}}\varrho+\mathcal{E}_{L,\vec{i}}\Dbf^{\vec{i}}U_{\Lb'}+\mathcal{E}_{\Lb',\vec{i}}\widehat{\Dbf^{\vec{i}}U_{L}}+ \Es_{\vec{i}}\Dbf^{\vec{i}}\Us\right)\right|\\
\lesssim&|u|^{-2\gammat}+\varepsilon^{-\delta}\int_{\mathcal{M}}|u|^{-1}\cdot |u|^{-1-2\gammat+2|\vec{i}|}\left(|u|^2|\Dbf^{\vec{i}}\varrho|^2+|u|^{-2}\left(\Omega^4|\Dbf^{\vec{i}} U_{\Lb'}|^2+|\Dbf^{\vec{i}}\Us|^2\right)\right)\\
&+\varepsilon^{-\delta}\int_{\mathcal{M}}\left||u|^{-1-2\gammat+2|\vec{i}|}\cdot|u|^3\left(|\Omega^2\mathcal{E}_{0,\vec{i}}|^2+|\mathcal{E}_{L,\vec{i}}|^2+|\Omega^4\mathcal{E}_{\Lb',\vec{i}}|^2+|\Omega^2\Es_{\vec{i}}|^2\right)\right|
\end{split}
\end{equation*}
Writing $\int_{\mathcal{M}}=\int_0^{\ub}\int_{\Cb_{\ub'}}\D\ub'$, the first integral can be absorbed when $\ub|u|^{-1}\le\varepsilon$ is sufficiently small. We then have
\begin{equation}\label{fluidestimate1}\begin{split}&\int_{\Cb_{\ub}}|u|^{-1-2\gammat+2|\vec{i}|}\left(|u|^2|\Dbf^{\vec{i}}\varrho|^2+|u|^{-2}\left(\Omega^4|\Dbf^{\vec{i}} U_{\Lb'}|^2+|\Dbf^{\vec{i}}\Us|^2\right)\right)\\
&+\int_{C_u}|u|^{-1-2\gammat+2|\vec{i}|}\Omega^2\left(|u|^2|\Dbf^{\vec{i}}\varrho|^2+|u|^{-2}\left(\Omega^4|\Dbf^{\vec{i}} U_{\Lb'}|^2+|\Dbf^{\vec{i}}\Us|^2\right)\right)\\
\lesssim&|u|^{-2\gammat}+\varepsilon^{-\delta}\int_{\mathcal{M}}|u|^{-1-2\gammat+2|\vec{i}|}\cdot|u|^3\left(|\Omega^2\mathcal{E}_{0,\vec{i}}|^2+|\mathcal{E}_{L,\vec{i}}|^2+|\Omega^4\mathcal{E}_{\Lb',\vec{i}}|^2+|\Omega^2\Es_{\vec{i}}|^2\right)
\end{split}
\end{equation}

The second case is $\is\ne0$ and $i^L=0$, we choose $w=w_{\vec{i}}=\ub^{-2-2\gammat}|u|^{1+2\gammat+2|\vec{i}|}$, and also compute
\begin{align*}&D\left(w_{\vec{i}}\Omega^2J_{\Lb',\vec{i}}\D\mu_{\gs}\right)+\Db\left( w_{\vec{i}}J_{L,\vec{i}}\D\mu_{\gs}\right)-2w_{\vec{i}}\divs\left(\Omega^2\Js_{\vec{i}}\right)\D\mu_{\gs}\\
=&\left((-(1+2\gammat+2|\vec{i}|)|u|^{-1}+\Omega\tr\chib)w_{\vec{i}} J_{L,\vec{i}}+(\Omega\tr\chi-(2+2\gammat)\ub^{-1} )w_{\vec{i}}\Omega^2J_{\Lb',\vec{i}}+w_{\vec{i}}\tau_{\vec{i}}\right)\D\mu_{\gs}.\end{align*}
The term $-(2+2\gammat)\ub^{-1} w_{\vec{i}}\Omega^2J_{\Lb',\vec{i}}<0$, which is not integrable,  can however be dropped since it has a good sign. We can then integrate again over $(\mathcal{M}, \D\ub\D u\D\mu_{\gs})$, and obtain (note that in this case the data on $\Cb_0$ vanishes because of spherical symmetry and the integral on $C_{u_0}$ is bounded again from the assumptions of Theorem \ref{existencetheorem})
\begin{equation*}\begin{split}&\int_{\Cb_{\ub}}\ub^{-2-2\gammat}|u|^{1+2\gammat+2|\vec{i}|}\left(|u|^2|\Dbf^{\vec{i}}\varrho|^2+|u|^{-2}\left(\Omega^4|\Dbf^{\vec{i}} U_{\Lb'}|^2+|\Dbf^{\vec{i}}\Us|^2\right)\right)\\
&+\int_{C_u}\ub^{-2-2\gammat}|u|^{1+2\gammat+2|\vec{i}|}\Omega^2\left(|u|^2|\Dbf^{\vec{i}}\varrho|^2+|u|^{-2}\left(\Omega^4|\Dbf^{\vec{i}} U_{\Lb'}|^2+|\Dbf^{\vec{i}}\Us|^2\right)\right)\\
\lesssim&1+\varepsilon^{-\delta}\int_{\mathcal{M}}|u|^{-1}\cdot \ub^{-2-2\gammat}|u|^{1+2\gammat+2|\vec{i}|}\left(|u|^2|\Dbf^{\vec{i}}\varrho|^2+|u|^{-2}\left(\Omega^4|\Dbf^{\vec{i}} U_{\Lb'}|^2+|\Dbf^{\vec{i}}\Us|^2\right)\right)\\
&+\varepsilon^{-\delta}\int_{\mathcal{M}}\left|\ub^{-2-2\gammat}|u|^{1+2\gammat+2|\vec{i}|}\Omega^2\left(\varrho^{-1}\mathcal{E}_{0,\vec{i}}\Dbf^{\vec{i}}\varrho+\mathcal{E}_{L,\vec{i}}\Dbf^{\vec{i}}U_{\Lb'}+\mathcal{E}_{\Lb',\vec{i}}\widehat{\Dbf^{\vec{i}}U_{L}}+ \Es_{\vec{i}}\Dbf^{\vec{i}}\Us\right)\right|\\
\lesssim&1+\varepsilon^{-\delta}\int_{\mathcal{M}}|u|^{-1}\cdot (\ub|u|^{-1})^{2\gammat-1}\cdot\ub^{-2-2\gammat}|u|^{1+2\gammat+2|\vec{i}|}\left(|u|^2|\Dbf^{\vec{i}}\varrho|^2+|u|^{-2}\left(\Omega^4|\Dbf^{\vec{i}} U_{\Lb'}|^2+|\Dbf^{\vec{i}}\Us|^2\right)\right)\\
&+\varepsilon^{-\delta}\int_{\mathcal{M}}\left|\ub^{-1-4\gammat}|u|^{4\gammat+2|\vec{i}|}\cdot|u|^3\left(|\Omega^2\mathcal{E}_{0,\vec{i}}|^2+|\mathcal{E}_{L,\vec{i}}|^2+|\Omega^4\mathcal{E}_{\Lb',\vec{i}}|^2+|\Omega^2\Es_{\vec{i}}|^2\right)\right|
\end{split}
\end{equation*}
Using $\ub|u|^{-1}\le\varepsilon$ small enough and $\gammat>0$, we have
\begin{equation}\label{fluidestimate2}\begin{split}&\int_{\Cb_{\ub}}\ub^{-2-2\gammat}|u|^{1+2\gammat+2|\vec{i}|}\left(|u|^2|\Dbf^{\vec{i}}\varrho|^2+|u|^{-2}\left(\Omega^4|\Dbf^{\vec{i}} U_{\Lb'}|^2+|\Dbf^{\vec{i}}\Us|^2\right)\right)\\
&+\int_{C_u}\ub^{-2-2\gammat}|u|^{1+2\gammat+2|\vec{i}|}\Omega^2\left(|u|^2|\Dbf^{\vec{i}}\varrho|^2+|u|^{-2}\left(\Omega^4|\Dbf^{\vec{i}} U_{\Lb'}|^2+|\Dbf^{\vec{i}}\Us|^2\right)\right)\\
\lesssim&1+\varepsilon^{-\delta}\int_{\mathcal{M}}\ub^{-1-4\gammat}|u|^{4\gammat+2|\vec{i}|}\cdot|u|^3\left(|\Omega^2\mathcal{E}_{0,\vec{i}}|^2+|\mathcal{E}_{L,\vec{i}}|^2+|\Omega^4\mathcal{E}_{\Lb',\vec{i}}|^2+|\Omega^2\Es_{\vec{i}}|^2\right).
\end{split}\end{equation}

The other choices for $\is$ not mentioned above are not needed to close the bootstrap argument. We are ready to present and prove the estimates for the fluid variables.
\begin{proposition}\label{fluidestimates} Under the bootstrap assumptions \ref{bootstrapA} and \ref{bootstrapB}, we have, for $\is=0$ (including two subcases $i^{\Lb}\le N, i^L=0$ and $i^{\Lb}=N, i^L=1$)
\begin{align*}
|u|^{2\gammat}\int_{\Cb_{\ub}}|u'|^{-3-2\gammat+2|\vec{i}|}\left||u'|^2\Dbf^{\vec{i}} \varrho, \Omega^{2}\Dbf^{\vec{i}} U_{\Lb'}, \Omega^{-2}\Dbf^{\vec{i}}U_{L}, \Dbf^{\vec{i}}\Us\right|^2\lesssim 1,\\
\int_{C_u}|u|^{-3+2|\vec{i}|}\Omega^2\left||u|^2\Dbf^{\vec{i}}\varrho, \Omega^{2}\Dbf^{\vec{i}} U_{\Lb'}, \Omega^{-2}\Dbf^{\vec{i}}U_{L}, \Dbf^{\vec{i}}\Us\right|^2\lesssim 1,
\end{align*}
and for $\vec{i}$ with $1\le\is\le N, i^L=0$, 
\begin{align*}
\ub^{-2-2\gammat}\int_{\Cb_{\ub}}|u'|^{-1+2\gammat+2|\vec{i}|}\left||u'|^2\Dbf^{\vec{i}} \varrho, \Omega^{2}\Dbf^{\vec{i}} U_{\Lb'}, \Omega^{-2}\Dbf^{\vec{i}}U_{L}, \Dbf^{\vec{i}}\Us\right|^2\lesssim 1,\\
\int_{C_u}\ub'^{-2-2\gammat}|u|^{-1+2\gammat+2|\vec{i}|}\Omega^2\left||u|^2\Dbf^{\vec{i}} \varrho, \Omega^{2}\Dbf^{\vec{i}} U_{\Lb'}, \Omega^{-2}\Dbf^{\vec{i}}U_{L}, \Dbf^{\vec{i}}\Us\right|^2\lesssim 1,
\end{align*}
and 
\begin{align*}
\ub^{-2-2\gammat}\int_{\Cb_{\ub}}|u'|^{-1+2\gammat+2(N+1)}\Omega^2\left||u'|^2\nablas^{N+1} \varrho, \Omega^{2}\nablas^{N+1} U_{\Lb'}, \Omega^{-2}\nablas^{N+1}U_{L}, \nablas^{N+1}\Us\right|^2\lesssim 1,\\
\int_{C_u}\ub'^{-2-2\gammat}|u|^{-1+2\gammat+2(N+1)}\Omega^4\left||u|^2\nablas^{N+1}\varrho, \Omega^{2}\nablas^{N+1} U_{\Lb'}, \Omega^{-2}\nablas^{N+1}U_{L}, \nablas^{N+1}\Us\right|^2\lesssim 1,
\end{align*}
for $\ub|u|^{-1}\le\varepsilon$ with $\varepsilon$ is sufficiently small.
\end{proposition}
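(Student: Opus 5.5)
The plan is an induction on $|\vec{i}|$ (lowest order first) inside a bootstrap for the fluid variables, built on the weighted energy inequalities \eqref{fluidestimate1} and \eqref{fluidestimate2}. Alongside \ref{bootstrapA} and \ref{bootstrapB}, I will also assume the bootstrap conditions \eqref{geometricacoustical1} and \eqref{geometricacoustical2}, which are exactly what Lemma \ref{coerciveness} and the bound on $\tau_{\vec{i}}$ require. With these in hand, the left-hand sides of \eqref{fluidestimate1} and \eqref{fluidestimate2} control the quantities stated in the proposition. The term $\Omega^{-2}\Dbf^{\vec{i}}U_L$ is recovered from \eqref{Ulddef}: $\widehat{\Dbf^{\vec{i}}U_L}$ is bounded by $\Omega^4|\Dbf^{\vec{i}}U_{\Lb'}|+\Omega^2|\Dbf^{\vec{i}}\Us|$, and $\Dbf^{\vec{i}}U_L-\widehat{\Dbf^{\vec{i}}U_L}$ consists only of lower order products. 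The whole task therefore reduces to bounding the error integral
\[
\int_{\mathcal{M}}w\,|u|^3\left(|\Omega^2\mathcal{E}_{0,\vec{i}}|^2+|\mathcal{E}_{L,\vec{i}}|^2+|\Omega^4\mathcal{E}_{\Lb',\vec{i}}|^2+|\Omega^2\Es_{\vec{i}}|^2\right)
\]
by $|u|^{-2\gammat}$ (case $\is=0$) or by $1$ (case $\is\ge1$), up to a factor $\varepsilon^{c}$ times the left-hand side plus already-established lower order bounds.

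The error terms are expanded with Lemma \ref{EulerDbfi}. Each term is a product of connection coefficients and fluid variables, and I will use the standard split: the factor carrying the most derivatives goes in $L^2$, the rest go in $L^\infty$ via \eqref{Sobolev} and \eqref{Holder}. Connection coefficients with at most $N$ derivatives are handled by the $L^\infty$ bounds from \ref{bootstrapA}. Top-order connection coefficients, namely $\Db^ND(\cdot)$, $\nablas\Db^{N-1}D(\Omega\chibh,\dots)$ and $\nablas^{N+1}$, are placed in $L^2(\Cb_{\ub})$ or $L^2(C_u)$ via \ref{bootstrapB}, with the fluid factor bounded in $L^\infty$ at low order. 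Doing so costs only $\varepsilon^{-\delta}$. This is absorbed because every error integral carries a factor $\int_0^{\ub}\D\ub'$, which gains $\ub|u|^{-1}\le\varepsilon$ in the $\is=0$ case, or gains $(\ub|u|^{-1})^{\gamma-2\gammat}$ against the $\ub^{-1-4\gammat}$ weight in the $\is\ge1$ case. Here I choose $2\gammat<\gamma$ and $\delta$ small relative to $\gamma-2\gammat$. For the angular case I also need the extra $\ub^{\gamma}$ smallness of the $\nablas$ derivatives in \ref{bootstrapA}/\ref{bootstrapB}, together with the fact that the data on $\Cb_0$ vanishes.

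Terms in which a fluid variable appears with $D$ or $\Db$ derivatives in nonstandard order are converted using Lemma \ref{DintermsofDbnablas}. I will convert $D$ derivatives into $\Db$ and $\nablas$ derivatives (valid since $\varrho U_{\Lb'}^2>0$), because by Remark \ref{DbintermsofDlossOmega} the opposite direction loses $\Omega^2$. The top case $\vec{i}=(0,N,1)$ is handled by commuting $\Db^ND$ directly. For the top angular estimate with $\nablas^{N+1}$, \eqref{derivativelapsetop} loses one power of $\Omega$, which is exactly the extra $\Omega^2$ weight in the last pair of inequalities. To get it, I run \eqref{fluidestimate2} with $w$ multiplied by $\Omega^2$; since $D\log\Omega$ and $\Db\log\Omega$ are bounded by \ref{bootstrapA}, this only adds a harmless $\tau$-type term.

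The main obstacle will be the $\Omega$-bookkeeping in $\mathcal{E}_{\Lb',\vec{i}}$ and $\mathcal{E}_{L,\vec{i}}$. In these, terms such as $\Omega^{-2}\Dbf\Db\varrho$ and $\Omega^{-1}\chib\,U_{\Lb'}\sim\Omega^{-3}$ appear, and each must be shown to be compensated by the $\Omega^4$ and $\Omega^2$ weights together with the natural scaling $U_{\Lb'}\sim\Omega^{-2}$ and $U_L\sim\Omega^2$. I will check this term by term using the self-similar variables $U_{\Lb}$, $U_{L'}$. Once the error integral is bounded, I close the auxiliary bootstrap for \eqref{geometricacoustical1} and \eqref{geometricacoustical2} from the $|\vec{i}|\le2$ estimates. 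For this I use Sobolev embedding together with the transport equations \eqref{EulerE0}--\eqref{EulerEs} integrated along $U$, whose integral curves have length about $\Omega^2\ub$; this improves $\varepsilon^{-\delta}$ to a constant, so that $|u|^2\varrho$, $U_{\Lb}$ and $U_{L'}$ stay close to their $\Cb_0$ values.
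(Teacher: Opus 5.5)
Your architecture is the paper's: Christodoulou's currents with the weights of \eqref{fluidestimate1}--\eqref{fluidestimate2}, a bootstrap on the fluid energies, top-order connection coefficients in the flux norms of \ref{bootstrapB}, and an extra $\Omega^2$ in the weight for $\nablas^{N+1}$. However, two steps of your error analysis fail as stated.

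First, in the angular case $1\le\is\le N$ you claim every error term gains $(\ub|u|^{-1})^{\gamma}$ from some angular derivative. This is false for the Gauss curvature term $\Db^{j_0}\Ks\,\Db^{j_1}\varrho\,\Db^{j_2}\Us$, which arises from commuting $\Dbf^{\vec i}$ with $\divs\Us$ when $\is=1$. In that term no factor carries a $\nablas$. Neither $\Db^{j_0}\Ks$ nor $\Db^{j_1}\varrho$ is small, and for $j_2\ge1$ the $\is=0$ energies give $\Db^{j_2}\Us$ no smallness either. Against the weight $\ub^{-1-4\gammat}$ of \eqref{fluidestimate2}, the $\ub'$-integral therefore diverges at $\ub'=0$. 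The paper needs an extra step here, which proceeds as follows.
\begin{itemize}
\item After closing the $\is=0$ estimates, it writes $D\Db^j\Us$ in terms of $\Db^{j+1}\Us$, $\nablas\Db^j\Us$ and lower order terms, as in Lemma \ref{DintermsofDbnablas}.
\item This yields an $\Omega^2$-weighted $C_u$-flux bound for $D\Db^j\Us$.
\item It then integrates in $\ub$ from $\Cb_0$, where $\Us\equiv0$ by spherical symmetry, to get \eqref{Usimproved}: $|u|^{-1}\|\Db^j\Us\|_{L^2(S_{\ub,u})}\lesssim\Omega^{-1}\ub^{1/2}|u|^{-1/2-j}$.
\item The $\Omega^{-1}$ is absorbed by the $\Omega^2$ in front of $\mathcal{E}_{0,\vec i}$, and one needs $\gamma<\frac12$.
\end{itemize}
A related issue: in your squared form of the error integral, the $\is=0$ term $D(\Omega\chih)\,\Us$ in $\Es_{\vec i}$ has $D(\Omega\chih)$ only of size $\ub^{-1+\gamma}|u|^{-1-\gamma}$. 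It is $\ub$-integrable only because of the smallness of $\Us$ in \eqref{fluidLinfty}. Your auxiliary bootstrap gives only $|\Us|\lesssim\varepsilon^{-\delta}$, and your closing step tracks only $|u|^2\varrho,U_{\Lb},U_{L'}$, so neither records this smallness.

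Second, you commit to converting only $D$ into $\Db,\nablas$, never the reverse. But for $\vec i=(0,N,1)$, and for $\is=1$, $i^{\Lb}=N$, the commutator sums in Lemma \ref{EulerDbfi} contain $\Db^{N+1}$ of a fluid variable. Examples are $DU_{L'}\,\Db^{N+1}\varrho$ and $D\varrho\,\Db^{N+1}U_{L'}$ in $\mathcal{E}_{0,\vec i}$. No energy in your family controls $\Db^{N+1}$. Nor can $\Db^{N+1}$ be commuted into the Euler system, because $\Db^{N+1}\eta$ is not controlled.

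The paper handles exactly these terms by trading one $\Db$ for $D$ and $\nablas$, the direction you reject. It accepts the $\Omega^{-2}$ loss of Remark \ref{DbintermsofDlossOmega}, which is compensated by the prefactor in the error integrand: $\Omega^2$ in general, $\Omega^4$ for $\mathcal{E}_{\Lb',\vec i}$. This produces $\Db^N D$ and $\Db^N\nablas$ fluid terms, which couples the different top-order cases, so they must be closed together in one bootstrap. Your statement that $(0,N,1)$ is handled ``by commuting $\Db^ND$ directly'' addresses only the left-hand side, not these error terms.
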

\begin{proof}
We only need to prove under the assumptions
\begin{align}\label{bootstrapfluidwithoutnablas}
\nonumber |u|^{2\gammat}\int_{\Cb_{\ub}}|u'|^{-3-2\gammat+2|\vec{i}|}\left||u'|^2\Dbf^{\vec{i}} \varrho, \Omega^{2}\Dbf^{\vec{i}} U_{\Lb'}, \Omega^{-2}\Dbf^{\vec{i}}U_{L}, \Dbf^{\vec{i}}\Us\right|^2\lesssim \varepsilon^{-2\delta},\\ \is=i^L=0, i^{\Lb}\le N\ \text{or}\ \is=0, i^{\Lb}=N, i^L=1,
\end{align}
\begin{align}\label{bootstrapfluidwithnablas}
\ub^{-2-2\gammat}\int_{\Cb_{\ub}}|u'|^{-1+2\gammat+2|\vec{i}|}\left||u'|^2\Dbf^{\vec{i}} \varrho, \Omega^{2}\Dbf^{\vec{i}} U_{\Lb'}, \Omega^{-2}\Dbf^{\vec{i}}U_{L}, \Dbf^{\vec{i}}\Us\right|^2\lesssim  \varepsilon^{-2\delta},\ 1\le\is\le N, i^L=0,
\end{align}
\begin{align}\label{bootstrapfluidtopnablas}
\ub^{-2-2\gammat}\int_{\Cb_{\ub}}|u'|^{-1+2\gammat+2(N+1)}\Omega^2\left||u'|^2\nablas^{N+1} \varrho, \Omega^{2}\nablas^{N+1} U_{\Lb'}, \Omega^{-2}\nablas^{N+1}U_{L}, \nablas^{N+1}\Us\right|^2\lesssim \varepsilon^{-2\delta}, 
\end{align}
Applying \eqref{Gronwallu} for $\Dbf^{\vec{i}}\varrho, \Omega^{-2}\Dbf^{\vec{i}} U_{\Lb}, \Omega^2\Dbf^{\vec{i}} U_{L'}$ and $\Dbf^{\vec{i}}\Us$, $\is=0$ and $i^{\Lb}\le N-1, i^L=0$ or $i^{\Lb}=N-1, i^L=1$, for $\nu=0$ and $s=0,0,0,-1$ respectively, we will have
\begin{equation}\label{fluidL2Sbootstrap1}|u|^{-1}\||u|^2\Dbf^{\vec{i}} \varrho, \Omega^{2}\Dbf^{\vec{i}} U_{\Lb'}, \Omega^{-2}\Dbf^{\vec{i}}U_{L}, \Dbf^{\vec{i}}\Us\|_{L^2(S_{\ub,u})}\le |u|^{-|\vec{i}|}\varepsilon^{-\delta}, 1\le j\le N-1,\end{equation}
where we have used the fact that, the above estimates hold without $\varepsilon^{-2\delta}$ on the initial null cone $C_{u_0}$, and 
 \begin{equation}\label{fluidL2Sbootstrap2}
 \begin{split}
 |u|^{-1}\||u|^2\Dbf^{\vec{i}} \varrho, \Omega^{2}\Dbf^{\vec{i}} U_{\Lb'}, \Omega^{-2}\Dbf^{\vec{i}}U_{L}, \Dbf^{\vec{i}}\Us\|_{L^2(S_{\ub,u})}\le \ub^{1+\gammat} |u|^{-1-\gammat-|\vec{i}|}\varepsilon^{-\delta}, \\
 1\le\is\le N, i^L=0\end{split}\end{equation}
 In particular, we will have, by Sobolev inequality, 
 $$|u|^2|\Db\varrho|, |\Db(U_{\Lb}, U_{L'},\Us)|, |u|^2|\nablas\varrho|, |\nablas(U_{\Lb}, U_{L'},\Us)| \lesssim|u|^{-1}\varepsilon^{-\delta}.$$
 By Lemma \ref{DintermsofDbnablas}, we will have
$$|u|^2|D\varrho|, |D(U_{\Lb}, U_{L'}, \Us)|\lesssim|u|^{-1}\varepsilon^{-\delta},$$
 integrating along $D$ directions, together with the (upper and lower) bounds for $\varrho, U_{\Lb}$ and $\Us$, we have for $\varepsilon$ small enough (and $\delta<\frac{1}{4}$),
 \begin{equation}\label{fluidLinfty}|u|^2\varrho, |u|^{-2}\varrho^{-1}, U_{\Lb}, U_{L'}, \ub^{-\frac{3}{4}}|u|^{\frac{3}{4}}|\Us|\lesssim1.\end{equation}
 In view of these estimates, we will be able to apply \eqref{fluidestimate1} and \eqref{fluidestimate2} (which require \eqref{geometricacoustical1}, \eqref{geometricacoustical2}).\\
 
 \paragraph{Case (I): $\is=0$}
 
We consider the spacetime integral in \eqref{fluidestimate1}, that is, we shall consider
$$ |u|^{-1-2\gammat+2|\vec{i}|}\cdot|u|^3\left(|\Omega^2\mathcal{E}_{0,\vec{i}}|^2+|\mathcal{E}_{L,\vec{i}}|^2+|\Omega^4\mathcal{E}_{\Lb',\vec{i}}|^2+|\Omega^2\Es_{\vec{i}}|^2\right)$$
 In this case, these terms have simpler form, in which the terms with $\nablas^{-1}$ disappear. For example, we have, 
 \begin{align*}
\mathcal{E}_{0,\vec{i}}=\Dbf^{\vec{i}}&\left(\varrho(\omega U_{\Lb'}-\etab_A\Us^A-\eta_A\Us^A+\frac{1}{2}\tr\chib' U_{L}+\frac{1}{2}\Omega\tr\chi U_{\Lb'})\right)\\+\sum&\Dbf^{\vec{i}_0}(\Db^{-1}(\Omega^2\zeta^\sharp))(U_{\Lb'}\Dbf^{\vec{i}_1}\nablas\varrho+\varrho\Dbf^{\vec{i}_1}\nablas U_{\Lb'})\\
+&\Dbf^{\vec{i}_0} (D^{-1}(\Omega^2\zeta^\sharp))(U_{L'}\Dbf^{\vec{i}_1}\nablas\varrho+\varrho\Dbf^{\vec{i}_2}\nablas U_{L'})\\
+&\Dbf^{\vec{i}_0}\left(D^{-1}\nablas (\Omega\chih,\Omega\tr\chi),\Db^{-1}\nablas (\Omega\chibh,\Omega\tr\chib)\right)\Dbf^{\vec{i}_1}\varrho\Dbf^{\vec{i}_2}\Us\\
+\sum_{|\vec{i}_1|\ge1}&\Dbf^{\vec{i}_1} U_{\Lb'}\Dbf^{\vec{i}_2}D\varrho+\Dbf^{\vec{i}_1}\varrho\Dbf^{\vec{i}_2} DU_{\Lb'}\\
+&\Dbf^{\vec{i}_1}U_{L'}\Dbf^{\vec{i}_2}\Db\varrho+\Dbf^{\vec{i}_1}\varrho\Dbf^{\vec{i}_2}\Db U_{L'}\\
+&\Dbf^{\vec{i}_1}\Us\Dbf^{\vec{i}_2}\nablas\varrho+\Dbf^{\vec{i}_1}\varrho\Dbf^{\vec{i}_2}\nablas\Us,
\end{align*} 
and $D$ derivatives of the fluid variables should be expressed in terms of $\Db$ and $\nablas$ derivatives using Lemma \ref{DintermsofDbnablas}, except that they have   $\Db^N$ and exactly one $D$ derivative on themselves. On the other hand, if some factor is $\Db^{N+1}$ applying on a fluid variable (possibly appearing in the sixth line),  then we express one $\Db$ in terms of $D$ and $\nablas$ derivatives.  The latter procedure will cause a loss of $\Omega^2$, but since we have an additional $\Omega^2$ before $\mathcal{E}_{0,\vec{i}}$, the estimate can still be closed. 

Then we divide our estimate for $\Omega^2\mathcal{E}_{0,\vec{i}}$ into two groups. The first group involves the terms without $N+1$ order derivatives or $\Db^N$ derivatives. For $N$ is large, each factor can be estimated in $L^\infty_{\ub,u}L^2(S_{\ub,u})$. So we will use bootstrap assumptions \ref{bootstrapA}  and \eqref{fluidL2Sbootstrap1}, \eqref{fluidL2Sbootstrap2} for these estimates.  By direct computation, we have
\begin{align*}
|u|^{-1}\left\|\text{lower order terms in}\ \Omega^2\mathcal{E}_{0,\vec{i}}\right\|_{L^2(S_{\ub,u})}\lesssim |u|^{-3-|\vec{i}|}\varepsilon^{-\max\{3,|\vec{i}|\}\delta}.
\end{align*}
In this estimate we should be careful about the terms (like the last term in the first line, second line, fifth line,  and the sixth line after expressing $\Db$ in terms of $D$ and $\nablas$) that can only be bounded with the additional factor $\Omega^2$. For the other terms, we simply bound $\Omega\lesssim1$. Therefore,
\begin{align*}
\int_{\mathcal{M}}|u|^{-1-2\gammat+2|\vec{i}|}\cdot|u|^3\left|\text{lower order terms in}\ \Omega^2\mathcal{E}_{0,\vec{i}}\right|^2\lesssim\int_{0}^{\ub}\int_{u_0}^u |u'|^{-2-2\gammat}\varepsilon^{-2N\delta}=|u|^{-2\gammat}\varepsilon^{1-2N\delta}
\end{align*}
This term can be bounded when $1-2N\delta>0$.

The second groups involving terms with top order derivatives or $\Db^N$ derivatives. In this case, all derivatives apply in only one factor, and the other factors can be simply bounded in $L^\infty_{\ub,u}L^\infty(S_{\ub,u})$ with a correct scale (maybe with an additional $\varepsilon^{-\delta}$). Taking top order case as an example (that is, $|\vec{i}|=N+1$), we have
\begin{align*}
&\left|\text{top order terms in}\ \Omega^2\mathcal{E}_{0,\vec{i}}\right|\\
\lesssim&\varepsilon^{-\delta}\cdot\left(|u|^{-3}\left(\Omega^2|\Db^{N}DU_{\Lb'}|+\Omega^{-2}|\Db^{N}DU_L|+|\Db^{N}D\Us|\right)+|u|^{-1}|\Db^{N}D\varrho|\right.\\
&+|u|^{-3}\left(\Omega^2|\Db^{N}\nablas U_{\Lb'}|+\Omega^{-2}|\Db^{N}\nablas U_L|+|\Db^{N}\nablas\Us|\right)+|u|^{-1}|\Db^{N}\nablas\varrho|\\
&+|u|^{-2}|\Db^{N}D(\omega,\Omega\eta,\Omega\etab,\Omega\tr\chi,\Omega\tr\chib)|\\
&+|u|^{-2}|\Db^{N}\nablas(\Omega\chih,\Omega\tr\chi)|\\
&\left.+|u|^{-2}|\Db^{N-1}D\nablas( \Omega\chibh,\Omega\tr\chib)|\right).
\end{align*}
Writing $\int_{\mathcal{M}}=\int_0^{\ub}\int_{\Cb_{\ub'}}$ or $\int_{\mathcal{M}}=\int_{u_0}^u\int_{C_{u'}}$ according to different norms the connection coefficients are bounded in (bootstrap assumptions (B) with $\gamma=0$ and  \eqref{bootstrapfluidwithoutnablas}, \eqref{bootstrapfluidwithnablas}),  we have
{\footnotesize
\begin{align*}
&\int_{\mathcal{M}}|u'|^{-1-2\gammat+2(N+1)}\cdot|u|^3\left|\text{top order terms in}\ \Omega^2\mathcal{E}_{0,\vec{i}}\right|^2\\
\lesssim&\varepsilon^{-2\delta}\int_0^{\ub}\int_{\Cb_{\ub'}}|u'|^{-4-2\gammat+2(N+1)}\left(\Omega^4|\Db^{N}DU_{\Lb'}|^2+\Omega^{-4}|\Db^{N}DU_L|^2+|\Db^{N}D\Us|^2+|u'|^4|\Db^{N}D\varrho|^2\right)\\
&+\varepsilon^{-2\delta}\int_0^{\ub}\int_{\Cb_{\ub'}}|u'|^{-4-2\gammat+2(N+1)}\left(\Omega^4|\nablas\Db^{N}U_{\Lb'}|^2+\Omega^{-4}|\nablas\Db^{N}U_L|^2+|\nablas\Db^{N}\Us|^2+|u'|^4|\nablas\Db^{N}\varrho|^2\right)\\
&+\varepsilon^{-2\delta}\int_0^{\ub}\int_{\Cb_{\ub'}}|u'|^{-2-2\gammat+2(N+1)}|\Db^{N}D(\Omega\eta,\Omega\etab, \Omega\tr\chi,\Omega\tr\chib,\omega), \Db^N\nablas(\Omega\chibh,\Omega\chih,\Omega\tr\chi, \Omega\tr\chib), \Db^{N-1}D\nablas(\Omega\chibh,\Omega\tr\chib)|^2\\
\lesssim&\varepsilon^{-2\delta}\int_0^{\ub}|u|^{-1}\underbrace{\int_{\Cb_{\ub'}}|u'|^{-3-2\gammat+2(N+1)}\left(\Omega^4|\Db^{N}DU_{\Lb'}|^2+\Omega^{-4}|\Db^{N}DU_L|^2+|\Db^{N}D\Us|^2+|u'|^4|\Db^{N}D\varrho|^2\right)}_{\lesssim\varepsilon^{-2\delta}|u|^{-2\gammat}}\\
&+\varepsilon^{-2\delta}\int_0^{\ub}|u|^{-1}\underbrace{\int_{\Cb_{\ub'}}|u'|^{-3-2\gammat+2(N+1)}\left(\Omega^4|\nablas\Db^{N}U_{\Lb'}|^2+\Omega^{-4}|\nablas\Db^{N}U_L|^2+|\nablas\Db^{N}\Us|^2+|u'|^4|\nablas\Db^{N}\varrho|^2\right)}_{\lesssim\varepsilon^{-2\delta}\ub^{2+2\gammat}|u|^{-2-4\gammat}}\\
&+\varepsilon^{-2\delta}\int_0^{\ub}|u|^{-1}\underbrace{\int_{\Cb_{\ub'}} |u'|^{-1-2\gammat+2(N+1)}|\Db^{N}D(\Omega\eta,\Omega\etab, \Omega\tr\chi,\Omega\tr\chib, \omega), \Db^N\nablas( \Omega\chih,\Omega\tr\chi ), \Db^{N-1}D\nablas(\Omega\chibh,\Omega\tr\chib)|^2}_{\lesssim\varepsilon^{-2\delta}|u|^{-2\gammat}}\\
\lesssim&|u|^{-2\gammat}\varepsilon^{1-4\delta}.
\end{align*}}
The estimate for $\Omega^2\mathcal{E}_{0,\vec{i}}$ is done. 

The estimates for $\Omega^4\mathcal{E}_{\Lb',\vec{i}}, \Omega^2\Es_{\vec{i}}$, and therefore $\mathcal{E}_{L,\vec{i}}$  with the same $\vec{i}$ are the essentially the same (maybe more factors for some terms). But one should also note that in $\Omega^2\Es_{\vec{i}}$ we may encounter $D(\Omega\chih)$ without $\Db$ applying on it, which is the worse term, behaving like $\ub^{-1+\gamma}|u|^{-1-\gamma}$ (by bootstrap assumpitons \ref{bootstrapA}). After integration, the estimate can be closed if $\gamma>0$ and beside this, $D(\Omega\chih)$ appears together with $\Us$ so it should have more room for the choice of $\gammat$. Combining all estimates above for lower order and top order, for $\varepsilon$ sufficiently small, we have (by \eqref{fluidestimate1})
\begin{equation}\label{fluidestimateresult1}\begin{split}&\int_{\Cb_{\ub}}|u|^{-1-2\gammat+2|\vec{i}|}\left(|u|^2|\Dbf^{\vec{i}}\varrho|^2+|u|^{-2}\left(\Omega^4|\Dbf^{\vec{i}} U_{\Lb'}|^2+|\Dbf^{\vec{i}}\Us|^2\right)\right)\\
&+\int_{C_u}|u|^{-1-2\gammat+2|\vec{i}|}\Omega^2\left(|u|^2|\Dbf^{\vec{i}}\varrho|^2+|u|^{-2}\left(\Omega^4|\Dbf^{\vec{i}} U_{\Lb'}|^2+|\Dbf^{\vec{i}}\Us|^2\right)\right)\lesssim|u|^{-2\gammat}.
\end{split}
\end{equation}
Note that the estimates \eqref{fluidL2Sbootstrap1} will then hold with $\varepsilon^{-\delta}$ replaced by $1$. \\

 \paragraph{Case (II): $i^L=0, 1\le\is\le N$}
We shall consider the spacetime integral in \eqref{fluidestimate2}. We still take $\Omega^2\mathcal{E}_{0,\vec{i}}$ as an example and the others terms are similar. We write
\begin{align*}
\mathcal{E}_{0,\vec{i}}=&\Dbf^{\vec{i}}\left(\varrho(\omega U_{\Lb'}-\etab_A\Us^A-\eta_A\Us^A+\frac{1}{2}\tr\chib' U_{L}+\frac{1}{2}\Omega\tr\chi U_{\Lb'})\right)\\
+\sum&\Dbf^{\vec{i}_0}(\Db^{-1}(\Omega^2\zeta^\sharp), \nablas^{-2}\nablas(\Omega\chih,\Omega\tr\chi))(U_{\Lb'}\Dbf^{\vec{i}_1}\nablas\varrho+\varrho\Dbf^{\vec{i}_1}\nablas U_{\Lb'})\\
+&\Dbf^{\vec{i}_0} ( \nablas^{-2}\nablas(\Omega\chibh, \Omega\tr\chib))(U_{L'}\Dbf^{\vec{i}_1}\nablas\varrho+\varrho\Dbf^{\vec{i}_2}\nablas U_{L'})\\
+&\Dbf^{\vec{i}_0}\left(\Db^{-1}\nablas (\Omega\chibh,\Omega\tr\chib), \nablas^{-1}\Ks\right)\Dbf^{\vec{i}_1}\varrho\Dbf^{\vec{i}_2}\Us\\
+\sum_{|\vec{i}_1|\ge1}&\Dbf^{\vec{i}_1} U_{\Lb'}\Dbf^{\vec{i}_2}D\varrho+\Dbf^{\vec{i}_1}\varrho\Dbf^{\vec{i}_2} DU_{\Lb'}\\
+&\Dbf^{\vec{i}_1}U_{L'}\Dbf^{\vec{i}_2}\Db\varrho+\Dbf^{\vec{i}_1}\varrho\Dbf^{\vec{i}_2}\Db U_{L'}\\
+&\Dbf^{\vec{i}_1}\Us\Dbf^{\vec{i}_2}\nablas\varrho+\Dbf^{\vec{i}_1}\varrho\Dbf^{\vec{i}_2}\nablas\Us
\end{align*}

For lower order terms, we use \eqref{fluidL2Sbootstrap1}, \eqref{fluidL2Sbootstrap2} together with bootstrap assumptions \ref{bootstrapA}. The important thing is to check that, for every term in the expression, there is at least one factor (with a correct weight by scaling consideration) that is estimated by $(\ub|u|^{-1})^{\gamma}$. Then the proof can be done for $\gammat$ small enough.

To see this, note that except the term like $\Db^{j_0}\Ks\Db^{j_1}\varrho\Db^{j_2}\Us, j_0+j_1+j_2\le N, j_0\le N-1$, which appears when $\is=1$, all terms contain at least one factor to which at least one $\nablas$ applies.  By direct computation, we have
\begin{align*}
|u|^{-1}\left\|\text{lower order terms in}\ \Omega^2\mathcal{E}_{0,\vec{i}}\ \text{except $\Omega^2\Db^{j_0}\Ks\Db^{j_1}\varrho\Db^{j_2}\Us$}\right\|_{L^2(S_{\ub,u})}\lesssim (\ub|u|^{-1})^{\gamma}|u|^{-3-|\vec{i}|}\varepsilon^{-N\delta}.
\end{align*}
In this estimate we should also be careful whether the additional factor $\Omega^2$ is sufficient. To estimate $\Omega^2\Db^{j_0}\Ks\Db^{j_1}\varrho\Db^{j_2}\Us$, we will need an improved estimate for $\Db^j\Us$ for $1\le j\le N$. A similar computation to Lemma \ref{DintermsofDbnablas} gives an expression of $D\Db^j\Us$ in terms of $\Db^{j+1}\Us$ and $\nablas\Db^j\Us$ together with lower order terms. By the estimate \eqref{fluidestimateresult1}, and \eqref{fluidL2Sbootstrap1},  \eqref{fluidL2Sbootstrap2}, we have
$$\int_{C_u}|u|^{-3+2(j+1)}\Omega^2|D\Db^{j}\Us|^2\lesssim 1, j\le N.$$
Applying \eqref{Gronwallub} for $\Db^j\Us$ (note that we have zero initial data) we will have
\begin{equation}\label{Usimproved}
|u|^{-1}\|\Db^j\Us\|_{L^2(S_{\ub,u})}\lesssim \Omega^{-1}\ub^{\frac{1}{2}}|u|^{-\frac{1}{2}-j}, j\le N
\end{equation}
which has an $\Omega^{-1}$. For $\gamma<\frac{1}{2}$, we will have
\begin{align*}
|u|^{-1}\left\|\text{lower order terms in}\ \Omega^2\mathcal{E}_{0,\vec{i}}\right\|_{L^2(S_{\ub,u})}\lesssim (\ub|u|^{-1})^{\gamma}|u|^{-3-|\vec{i}|}\varepsilon^{-N\delta}.
\end{align*}
 and therefore
 \begin{align*}
&\int_{\mathcal{M}}\ub^{-1-4\gammat}|u|^{4\gammat+2|\vec{i}|}\cdot|u|^3|\text{lower order terms in}\ \Omega^2\mathcal{E}_{0,\vec{i}}|^2\\
\lesssim&\int_{0}^{\ub}\int_{u_0}^u \ub'^{-1-4\gammat+2\gamma}|u|^{-1+4\gammat-2\gamma}\varepsilon^{-2N\delta}=(\ub|u|^{-1})^{2\gamma-4\gammat}\varepsilon^{-2N\delta}\lesssim1
\end{align*}
if $2\gamma-4\gammat-2N\delta>0$.

For top order terms, all derivatives apply in only one factor, and the other factor can be simply bounded in $L^\infty_{\ub,u}L^\infty(S_{\ub,u})$ with a correct scale (maybe with an additional $\varepsilon^{-\delta}$). In this case, we simply express $D$ derivative to the fluid variables in terms of $\Db$ and $\nablas$ derivatives, except that when we have $\Db^{N+1}$ applying to a fluid variable, we express one $\Db$ in terms of $D$ and $\nablas$. There are two special cases we need to concern. When $\is=1, i^{\Lb}=N$, there are top order terms that are $\Db^{N+1}$ applying to some fluid variable (and we will express it in terms of $\Db^{N}D$ and $\Db^N\nablas$), and then there will be exactly one $\nablas$ applying to fluid variables (estimated by $(\ub|u|^{-1})^{1+\gammat}$ multiple its scale). When $\is=N, i^{\Lb}=1$, there are top order terms containing a factor that is $\nablas^{N+1}$ applying to the fluid variables or connection coefficients, then we can check that there will be an additional $\Omega$ in each such term (in this case, we will have an $\Us$ as a factor so we will have extra $\Omega$). Also, we will pay special attention to the terms involving Gauss curvature $\Ks$. In summarize, we will have (for $|\vec{i}|=N+1$)
\begin{align*}
&\left|\text{top order terms in}\ \Omega^2\mathcal{E}_{0,\vec{i}}\right|\\
\lesssim&\varepsilon^{-\delta}\cdot\left((\ub|u|^{-1})^{1+\gammat}|u|^{-3}\left(\Omega^2|\Db^{N}DU_{\Lb'}|+\Omega^{-2}|\Db^{N}DU_L|+|\Db^{N}D\Us|+|u|^{2}|\Db^{N}D\varrho|\right)\right.\\
&+\sum_{|\vec{i}|=N+1, 1\le \is\le N, i^L=0}|u|^{-3}\left(\Omega^2|\Dbf^{\vec{i}}U_{\Lb'}|+\Omega^{-2}|\Dbf^{\vec{i}}U_L|+|\Dbf^{\vec{i}}\Us|+|u|^{2}|\Dbf^{\vec{i}}\varrho|\right)\\
&+|u|^{-3}\Omega\left(\Omega^2|\nablas^{N+1}U_{\Lb'}|+\Omega^{-2}|\nablas^{N+1}U_L|+|\nablas^{N+1}\Us|+|u|^{2}|\nablas^{N+1}\varrho|\right)\\
&+\sum_{|\vec{i}|=N+1, 1\le \is\le N, i^L=0}|u|^{-2}|\Dbf^{\vec{i}}(\omega,\Omega\eta,\Omega\etab,\Omega\tr\chi,\Omega\tr\chib)|\\
&+|u|^{-2}\Omega|\nablas^{N+1}(\Omega\chibh,\Omega\tr\chib)|\\
&\left.+(\ub|u|^{-1})^{\frac{1}{2}}|u|^{-2}|\Db^{N}\Ks, \sum_{|\vec{i}|=N, 1\le\is\le N-1, i^L=0}\Dbf^{\vec{i}}(\Omega\Ks)|\right).
\end{align*}
We then have
{\tiny
\begin{align*}
&\int_{\mathcal{M}}\ub^{-1-4\gammat}|u|^{4\gammat+2(N+1)}\cdot|u|^3|\text{top order terms in}\ \Omega^2\mathcal{E}_{0,\vec{i}}|^2\\
\lesssim&\varepsilon^{-2\delta}\int_0^{\ub}\int_{\Cb_{\ub'}}\ub'^{1-2\gammat}|u'|^{-5+2\gammat+2(N+1)}\left(\Omega^4|\Db^{N}DU_{\Lb'}|^2+\Omega^{-4}|\Db^{N}DU_L|^2+|\Db^{N}D\Us|^2+|u'|^4|\Db^{N}D\varrho|^2\right)\\
&+\varepsilon^{-2\delta}\int_0^{\ub}\int_{\Cb_{\ub'}}\ub'^{-1-4\gammat}|u'|^{-3+4\gammat+2(N+1)}\sum_{|\vec{i}|=N+1, 1\le \is\le N, i^L=0}\left(\Omega^4|\Dbf^{\vec{i}}U_{\Lb'}|^2+\Omega^{-4}|\Dbf^{\vec{i}}U_L|^2+|\Dbf^{\vec{i}}\Us|^2+|u'|^4|\Dbf^{\vec{i}}\varrho|^2\right)\\
&+\varepsilon^{-2\delta}\int_0^{\ub}\int_{\Cb_{\ub'}}\ub'^{-1-4\gammat}|u'|^{-3+4\gammat+2(N+1)}\Omega^2\left(\Omega^4|\nablas^{N+1}U_{\Lb'}|^2+\Omega^{-4}|\nablas^{N+1}U_L|^2+|\nablas^{N+1}\Us|^2+|u'|^4|\nablas^{N+1}\varrho|^2\right)\\
&+\varepsilon^{-2\delta}\int_0^{\ub}\int_{\Cb_{\ub'}}\ub'^{-1-4\gammat}|u'|^{-1+4\gammat+2(N+1)}\sum_{|\vec{i}|=N+1, 1\le \is\le N, i^L=0}|\Dbf^{\vec{i}}(\Omega\eta,\Omega\etab,\Omega\tr\chi,\Omega\tr\chib,\omega), \Omega\nablas^{N+1}(\Omega\chibh,\Omega\tr\chib)|^2\\
&+\varepsilon^{-2\delta}\int_{0}^{\ub}\int_{\Cb_{\ub'}}\ub'^{-4\gammat}|u'|^{-2+4\gammat+2(N+1)}|\Db^{N}\Ks, \sum_{|\vec{i}|=N, 1\le\is\le N-1, i^L=0}\Dbf^{\vec{i}}(\Omega\Ks)|^2\\
\lesssim&\varepsilon^{-2\delta}\int_0^{\ub}\ub'^{1-2\gammat}|u|^{-2+4\gammat}\underbrace{\int_{\Cb_{\ub'}}|u'|^{-3-2\gammat+2(N+1)}\left(\Omega^4|\Db^{N}DU_{\Lb'}|^2+\Omega^{-4}|\Db^{N}DU_L|^2+|\Db^{N}D\Us|^2+|u'|^4|\Db^{N}D\varrho|^2\right)}_{\lesssim|u|^{-2\gammat}}\\
&+\varepsilon^{-2\delta}\int_0^{\ub}\ub'^{1-2\gammat}|u|^{-2+2\gammat}\underbrace{\int_{\Cb_{\ub'}}\ub'^{-2-2\gammat}|u'|^{-1+2\gammat+2(N+1)}\sum_{|\vec{i}|=N+1, 1\le \is\le N, i^L=0}\left(\Omega^4|\Dbf^{\vec{i}}U_{\Lb'}|^2+\Omega^{-4}|\Dbf^{\vec{i}}U_L|^2+|\Dbf^{\vec{i}}\Us|^2+|u'|^4|\Dbf^{\vec{i}}\varrho|^2\right)}_{\lesssim\varepsilon^{-2\delta}}\\
&+\varepsilon^{-2\delta}\int_0^{\ub}\ub'^{1-2\gammat}|u|^{-2+2\gammat}\underbrace{\int_{\Cb_{\ub'}}\ub'^{-2-2\gammat}|u'|^{-1+2\gammat+2(N+1)}\Omega^2\left(\Omega^4|\nablas^{N+1}U_{\Lb'}|^2+\Omega^{-4}|\nablas^{N+1}U_L|^2+|\nablas^{N+1}\Us|^2+|u'|^4|\nablas^{N+1}\varrho|^2\right)}_{\lesssim\varepsilon^{-2\delta}}\\
&+\varepsilon^{-2\delta}\int_0^{\ub}\ub'^{-1}\underbrace{\int_{\Cb_{\ub'}}\ub'^{-4\gammat}|u'|^{-1+4\gammat+2(N+1)}\sum_{|\vec{i}|=N+1, 1\le \is\le N, i^L=0}|\Dbf^{\vec{i}}(\Omega\eta,\Omega\etab,\Omega\tr\chi,\Omega\tr\chib,\omega), \Omega\nablas^{N+1}(\Omega\chibh,\Omega\tr\chib)|^2}_{(\ub'|u|^{-1})^{2(\gamma-2\gammat)}\varepsilon^{-2\delta}}\\
&+\varepsilon^{-2\delta}\int_{0}^{\ub}\ub'^{-4\gammat}|u|^{-1+6\gammat}\underbrace{\int_{\Cb_{\ub'}}|u'|^{-1-2\gammat+2(N+1)}|\Db^{N}\Ks, \sum_{|\vec{i}|=N, 1\le\is\le N-1, i^L=0}\Dbf^{\vec{i}}(\Omega\Ks)|^2}_{\lesssim |u|^{-2\gammat}\varepsilon^{-2\delta}} \\
\lesssim&\varepsilon^{2(\gamma-2\gammat)-4\delta}.
\end{align*}}
Here we need $6\gammat<1$. Here the term involving $\Dbf^{\vec{i}}\omega$ is the most delicate term. Moreover, $\nablas^{N+1}\omega$ does not appear in Case (II) is crucial because we will have no additional $\Omega$ for terms involving $\nablas^{N+1}\omega$. 

The estimates for $\Omega^4\mathcal{E}_{\Lb',\vec{i}}, \Omega^2\Es_{\vec{i}}$, and therefore $\mathcal{E}_{L,\vec{i}}$ for $1\le\is\le N, i^L=0$ are the essentially the same (maybe more factors for some terms). Combining all estimates above for lower order and top order, for $\varepsilon$ sufficiently small, we have (by \eqref{fluidestimate2})
\begin{equation}\label{fluidestimateresult2}\begin{split}&\int_{\Cb_{\ub}}\ub^{-2-2\gammat}|u|^{1+2\gammat+2|\vec{i}|}\left(|u|^2|\Dbf^{\vec{i}}\varrho|^2+|u|^{-2}\left(\Omega^4|\Dbf^{\vec{i}} U_{\Lb'}|^2+|\Dbf^{\vec{i}}\Us|^2\right)\right)\\
&+\int_{C_u}\ub'^{-2-2\gammat}|u|^{1+2\gammat+2|\vec{i}|}\Omega^2\left(|u|^2|\Dbf^{\vec{i}}\varrho|^2+|u|^{-2}\left(\Omega^4|\Dbf^{\vec{i}} U_{\Lb'}|^2+|\Dbf^{\vec{i}}\Us|^2\right)\right)\lesssim 1.
\end{split}
\end{equation}
Note that the estimates \eqref{fluidL2Sbootstrap2} will then hold with $\varepsilon^{-\delta}$ replaced by $1$. \\

 \paragraph{Case (III): $\is=N+1, i^L=i^{\Lb}=0$}
The last case $\is=N+1, i^L=i^{\Lb}=0$ is similar to Case (II). One difference is the weight function $w$ is chosen to be
$$w=\Omega^2\ub^{-2-2\gammat}|u|^{1+2\gammat+2(N+1)},$$
which has an additional $\Omega^2$ as compared to $w_{\vec{i}}$ with $1\le \is\le N$, for we will have an additional $\Omega^2$ in the spacetime integral term. Another difference is $\nablas^{N+1}(\Omega\chih,\omega)$ should be placed in $L^2(C_{u})$ instead of $L^2(\Cb_{\ub})$ and the estimates of spacetime integral should be modified slightly. 

\end{proof}

\section{The estimates for curvature components}

We are going to prove the estimates for curvature components. We would like to introduce additional bootstrap assumptions for the lower order connection coefficients which improves bootstrap assumption \ref{bootstrapA} for certain quantities:
 \renewcommand{\theboot}{(A')} 
\begin{boot}\label{bootstrapA'}
For $i\le N$:
\begin{align*}|u|^{-1}\|\nablas^i   ( \Omega\eta,\Omega\etab,\Omega\chibh,\widetilde{\Omega\tr\chib},\widetilde{\omegab})\|_{L^2(S_{\ub,u})}\le  \ub |u|^{-2-i}\varepsilon^{-\delta},\end{align*}
For $i+j\le N$:
\begin{align*}|u|^{-1}\|\nablas^i  \Db^j  (  \Omega\chibh)\|_{L^2(S_{\ub,u})}\le  \ub |u|^{-2-(i+j)}\varepsilon^{-\delta},\end{align*}
\end{boot}
Here a tilde of some quantity means the difference from its value on $\Cb_0$. 

\begin{proposition}\label{curvature}
Under  the bootstrap assumptions \ref{bootstrapA} and \ref{bootstrapA'}, we have
\begin{equation}\label{curvaturebound}\underline{\mathcal{R}},\mathcal{R},\mathcal{R}(\mathcal{M})\lesssim1,\end{equation}
where
\begin{align*}\underline{\mathcal{R}}^2=&\sup_{0\le i\le N}\{\ub^{-1-2\gamma}\int_{\Cb_{\ub}} |u|^{2+2\gamma+2i}|\nablas^i(\Omega^2\widetilde{\rho},\Omega^2 \sigma)|^2\\
&+\ub^{-2-2\gamma}\int_{\Cb_{\ub}} |u|^{3+2\gamma+2i}|\nablas^i(\Omega^2 \betab,\Omega^2\alphab)|^2
\end{align*}
\begin{align*}\mathcal{R}^2=&\sup_{0\le i\le N}\{\int_{C_u}\ub^{1-2\gamma}|u|^{2\gamma+2i}|\nablas^i(\Omega^2\alpha)|^2+\int_{C_u}\ub^{-1-2\gamma}|u|^{2+2\gamma+2i}|\nablas^i(\Omega^2\beta)|^2\\
&+\int_{C_u} \ub^{-2-2\gamma}|u|^{3+2\gamma+2i} |\nablas^i(\Omega^2 \widetilde{\rho},\Omega^2 \sigma)|^2\end{align*}
and
\begin{align*}
\mathcal{R}(\mathcal{M})=&\sup_{0\le i\le N}\{\int_{\mathcal{M}}\ub^{1-2\gamma}|u|^{-1+2\gamma+2i}|\nablas^i(\Omega^2\alpha)|^2+\int_{\mathcal{M}}\ub^{-1-2\gamma}|u|^{1+2\gamma+2i}|\nablas^i(\Omega^2\beta)|^2\\
&+\int_{\mathcal{M}}\ub^{-2-2\gamma}|u|^{2+2\gamma+2i}|\nablas^i(\Omega^2\widetilde{\rho},\Omega^2\sigma)|^2\\
&+\int_{\mathcal{M}}\ub^{-3-2\gamma}|u|^{3+2\gamma+2i}|\nablas^i(\Omega^2\betab,\Omega^2\alphab)|^2\}.
\end{align*}
\end{proposition}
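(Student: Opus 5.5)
The plan is to run the standard weighted energy estimates for the Bianchi equations in the double null foliation, in the pairs $(\Omega^2\alpha,\Omega^2\beta)$, $(\Omega^2\beta,(\Omega^2\widetilde{\rho},\Omega^2\sigma))$, $((\Omega^2\widetilde{\rho},\Omega^2\sigma),\Omega^2\betab)$, $(\Omega^2\betab,\Omega^2\alphab)$. These are written as in Remark \ref{equationwithOmega} so that no isolated $\Omega$ factors appear. For each pair $(\Psi_1,\Psi_2)$ of the schematic form $\Dbh\Psi_1=\nablas\Psi_2+\ldots$, $D\Psi_2=\nablas\Psi_1+\ldots$, I would multiply by $w\,\nablas^i\Psi_j$ with the weight $w=\ub^{a}|u|^{b+2i}$ read off from the definitions of $\underline{\mathcal{R}}$, $\mathcal{R}$, $\mathcal{R}(\mathcal{M})$. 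I would then integrate over the region $\{\ub'\le\ub,\,u'\le u\}$ and use the divergence identity on $S_{\ub,u}$ to cancel the angular terms. This produces the flux on $C_u$ for the first member and the flux on $\Cb_{\ub}$ for the second.

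The flux on $\Cb_0$ vanishes because $\Omega^2\widetilde\rho$, $\sigma$, $\beta$, $\betab$, $\alpha$, $\alphab$ all vanish there by spherical symmetry and the definition of the tilde. The flux on $C_{u_0}$ is bounded by the initial data of Theorem \ref{existencetheorem} through the null structure equations. The $\alpha$ data come from $D\chih$, hence the weight $\ub^{1-2\gamma}$ together with $\ub^{-p}$ and $\gamma<p$.

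The key feature of the weights is that $\partial_{\ub}$ of the $\ub^{-1-2\gamma}$-type powers produces a bulk term of good sign, roughly $\ub^{-1}\times$ the integrand. This is the source of $\mathcal{R}(\mathcal{M})$, exactly as in the fluid estimate \eqref{fluidestimate2}. The $u$-derivative of $|u|^{b}$ combined with $\Omega\tr\chib\approx-2|u|^{-1}$ gives terms of size $|u|^{-1}$, which are controlled by Gronwall or absorbed since $\ub|u|^{-1}\le\varepsilon$. For the $\alpha$ pair the weight $\ub^{1-2\gamma}$ gives a bad-sign bulk term. I would handle it by borrowing from the $\beta$ bulk with the relative weight $\ub^{-2}|u|^{2}$, or simply bound it by $\mathcal{R}(\mathcal{M})$ times $\varepsilon$.

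The nonlinear error terms are of the form (connection coefficient)$\times$(curvature) and (connection coefficient)$\times\nablas$(fluid) coming from $\mathbf{T}$. I would bound them with bootstrap assumptions \ref{bootstrapA} and \ref{bootstrapA'} in $L^\infty$ via \eqref{Sobolev}, and with Proposition \ref{fluidestimates} for the matter terms. The matter source in the Bianchi equations contains derivatives of $\mathbf{Ric}$, hence first derivatives of $\varrho$ and $U$. The fluid estimates at order $N+1$, with the $\ub^{-2-2\gammat}$ weights when an angular derivative is present, provide exactly the needed integrability, provided $\gammat$ is slightly larger than $\gamma$ or the fluid source carries a spare factor $(\ub|u|^{-1})^{\cdot}$. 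I would choose parameters so that $\gammat$ versus $\gamma$ closes here, possibly switching roles.

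The main obstacle will be the terms carrying $\widetilde{\omegab}$, $\omegab$ and $\Omega\tr\chib$ at their background size $|u|^{-1}$ multiplying $\rho$, which is not small. For example, in $D(\Omega^2\betab)$ and $\Dbh(\Omega^2\rho)$ the background $\rho$ produces $\nablas\rho$-type and $\eta\rho$ terms. This is why bootstrap assumption \ref{bootstrapA'} is needed: it gives the extra factor $\ub|u|^{-1}$ in $\eta$, $\etab$, $\chibh$, $\widetilde{\Omega\tr\chib}$, $\widetilde{\omegab}$. Terms such as $\eta\cdot\Omega^2\rho_0$ become $\ub|u|^{-2}\cdot|u|^{-2}$. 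Squared and integrated with the weight $\ub^{-2-2\gamma}|u|^{3+2\gamma}$, they give $\int\ub^{-2\gamma}$, which is finite for $\gamma<1/2$. The other worry is the linear terms $\chih\cdot\rho_0$ in the $\beta$ equation. Bootstrap assumption \ref{bootstrapA} gives these only $(\ub|u|^{-1})^\gamma$ smallness, so I will check that the $\ub^{-1-2\gamma}$ weight on $\beta$ still integrates. That requires $\nablas\chih\sim\ub^\gamma$, which is exactly the threshold, so here I expect to use the $\Omega$-weighted quantities and the structure of $D\chih$ rather than crude bounds.
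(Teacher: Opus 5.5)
Your overall architecture matches the paper: the four Bianchi pairs, weights read off from the norms, vanishing fluxes on $\Cb_0$, and $C_{u_0}$ fluxes controlled by the data using $\gamma<p$. So do the bad-sign $\partial_{\ub}$ term in the $\alpha$--$\beta$ pair paid for by $\varepsilon\,\mathcal{R}(\mathcal{M})[\beta]$ and the matter terms taken from Proposition \ref{fluidestimates}.

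\textbf{The gap.} It is in how you treat the linear terms with coefficients $\Omega\tr\chib$ and $\omegab$ in the $\Db$ equations. You say they produce terms of size $|u|^{-1}$ that are ``controlled by Gronwall or absorbed since $\ub|u|^{-1}\le\varepsilon$''. For the $\Db$-members $\Omega^2\alpha$ and $\Omega^2\beta$ neither works:
\begin{itemize}
\item A term $|u|^{-1}\times(\text{flux density})$ carries no power of $\ub$, so there is no $\varepsilon$ to gain.
\item Gronwall in $u$ with coefficient $C|u|^{-1}$ costs a factor $(|u_0|/|u|)^{C}$. This is not uniform as $u\to0$, which is exactly the logarithmic-type loss the self-similar region cannot afford.
\end{itemize}
Note that for $\alpha$ and $\beta$ the weights in $\mathcal{R}(\mathcal{M})$ are exactly $|u|^{-1}$ times the corresponding $C_u$-flux weights. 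So their spacetime control does not come from $\partial_{\ub}$ of the weight, as you claim (for $\alpha$ that derivative even has the wrong sign). It comes from the factor $|u|^{2\gamma}$, and that only works if the $|u|^{-1}$ terms cancel.

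\textbf{The missing idea: exact cancellation.} The $\Omega^2$ normalization and the powers $|u|^{2\gamma+2i}$, $|u|^{2+2\gamma+2i}$ are chosen to produce this cancellation.
\begin{itemize}
\item For $\alpha$: in $\Db\bigl(|u|^{2\gamma+2i}|\nablas^i(\Omega^2\alpha)|^2\D\mu_{\gs}\bigr)$, the identity $\Db(\Omega^2)=2\omegab\Omega^2$ cancels the $2\omegab\alpha$ of \eqref{Dbalpha}. The $\Omega\tr\chib$ contributions of the equation, of the $i+2$ metric contractions and of $\D\mu_{\gs}$ sum to $-i\,\Omega\tr\chib$. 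This cancels $\partial_u|u|^{2i}$ because $\Omega\tr\chib=-2|u|^{-1}$ on $\Cb_0$. What remains is the good term $-2\gamma|u|^{-1}$, which is what produces $\mathcal{R}(\mathcal{M})[\alpha]$. The rest are $\widetilde{\Omega\tr\chib}$ and $\Omega\chibh$ terms, which are $O(\ub|u|^{-2})$ by \ref{bootstrapA'} and hence genuinely absorbable.
\item For $\beta$: the same computation leaves $-2\gamma|u|^{-1}$ plus a positive multiple of $\omegab$. You then need $\omegab<0$, which follows from \ref{bootstrapA'} and $\omegab(0,u)\approx-|u|^{-1}$. This sign, together with $\Omega\tr\chib<0$, is the essential use of \ref{bootstrapA'} here, more than the $\eta\rho_0$ terms you single out.
\item For the $(\rho,\sigma)$--$\betab$ pair: the weight $|u|^{3+2\gamma+2i}$ does not cancel $\frac32\Omega\tr\chib$. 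The leftover $|u|^{-1}|\nablas^i(\Omega^2\widetilde\rho,\Omega^2\sigma)|^2$ is paid for by $\mathcal{R}(\mathcal{M})[\rho,\sigma]$, already bounded in the $\beta$--$(\rho,\sigma)$ step. So the order of the pairs matters.
\item Absorption via $|u|^{-1}\le\varepsilon\ub^{-1}$ is legitimate only for $\betab,\alphab$, whose bulk comes from $\ub^{-2-2\gamma}$, and for the $D$-direction coefficients.
\end{itemize}

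\textbf{Two smaller corrections.}
\begin{itemize}
\item The linear term $\chih\rho$ lives in \eqref{Dbalpha}, not in the $\beta$ equation, and it is not at threshold. Put $\Omega^2\rho$ in $L^\infty$, $\Omega\chih$ in $L^\infty_{\ub,u}L^2(S_{\ub,u})$, and $\Omega^2\alpha$ in spacetime $L^2$; the weights then leave a factor $(\ub|u|^{-1})^{1-\gamma}$.
\item No exchange of $\gamma$ and $\gammat$ is needed. \eqref{curvature-weylcurrentbound} follows from Proposition \ref{fluidestimates} with $\gammat<\gamma<\frac14$, while Case (II) of that proposition requires $2\gamma-4\gammat>0$.
\end{itemize}
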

\begin{remark}
Note that the above energy integrals on the initial cone $C_{u_0}$ or $\Cb_0$ are finite from the assumptions of Theorem \eqref{existencetheorem}. One can derive this as in Chapter 2 of \cite{Chr08}, which will be omitted. The same remark applies proof of Propositions \ref{improvebootA}, \ref{improvebootB} below. 
\end{remark}
\begin{proof}
We start  by assuming
\begin{equation}\label{bootstrapR}\mathcal{R}, \mathcal{R}(\mathcal{M})\le\varepsilon^{-\delta}.\end{equation}
It is easy to have
\begin{equation}\label{rhodaggeritself}|\Omega^2\rho|\lesssim|u|^{-2}\end{equation}
We also note that, by bootstrap assumption \ref{bootstrapA'} and Sobolev inequality \eqref{Sobolev}, when $\ub|u|^{-1}\le \varepsilon$ is sufficiently small, 
$$|\widetilde{\Omega\tr\chib}|\lesssim  \varepsilon^{-\delta}\ub|u|^{-2}\le\varepsilon^{1-\delta}|u|^{-1},$$
and hence (note also that $\Omega\tr\chib(0,u)=-2|u|^{-1}$) 
 \begin{equation}\label{trchib<0}\Omega\tr\chib<0\end{equation}
and
$$|\Omega\tr\chib|\lesssim|u|^{-1}.$$
Similarly, for $\varepsilon$ small enough, noting that  $\omegab(0,u)\approx -|u|^{-1}$,
 \begin{equation}\label{trchibomegab<0}\omegab<0.\end{equation}
 and also 
 $$|\omegab|\lesssim|u|^{-1}.$$

We will  need the estimates for the components of Weyl current, which can be expressed in terms of the fluid variables in a schematic form as follows:
\begin{align*}
\Xi_A=&\nabla_4(\varrho U_A U_4)+\nabla_A(\varrho U_4U_4),\\
\Xib_A=&\nabla_3(\varrho U_A U_3)+\nabla_A(\varrho U_3U_3),\\
\Lambda=&\nabla_3(\varrho U_4U_4)+\nabla_4(\varrho U_3U_4)+\nabla_4\varrho,\\
\Lambdab=&\nabla_4(\varrho U_3U_3)+\nabla_3(\varrho U_3U_4)+\nabla_3\varrho,\\
K=&\epsilons^{AB}\nabla_A(\varrho U_BU_4),\\
\Kb=&\epsilons^{AB}\nabla_A(\varrho U_BU_3),\\
I_A=&\nabla_4(\varrho U_AU_3)+\nabla_A(\varrho U_4U_3)+\nabla_A\varrho,\\
\Ib_A=&\nabla_3(\varrho U_AU_4)+\nabla_A(\varrho U_4U_3)+\nabla_A\varrho,\\
\Theta_{AB}=&\nabla_4(\varrho U_AU_B)+\nabla_A(\varrho U_4U_B),\\
\Thetab_{AB}=&\nabla_3(\varrho U_AU_B)+\nabla_A(\varrho U_3U_B).
\end{align*}
From Proposition \ref{fluidestimates} and the consequent lower order estimates \eqref{fluidL2Sbootstrap1}, \eqref{fluidL2Sbootstrap2} (without $\varepsilon^{-\delta}$) and \eqref{fluidLinfty}, and expressing $D$ derivative in terms of $\Db$ and $\nablas$ as in Lemma \ref{DintermsofDbnablas},  we will have
\begin{equation}\label{curvature-weylcurrentbound}
\begin{split}
\sup_{0\le i\le N}\int_{\mathcal{M}}\ub^{-2-2\gamma}|u|^{4+2\gamma+2i}|\nablas^i(\Omega^3(\Theta, I ,   \Ib ,\Xib,\Thetab,\widetilde{\Lambdab}, \Kb, \Xi, K))|^2\lesssim 1\\
\sup_{0\le i\le N}\int_{\mathcal{M}}\ub^{-\frac{1}{2}-2\gamma}|u|^{\frac{5}{2}+2\gamma+2i}|\nablas^i(\Omega^3\Lambda)|^2\lesssim 1\\
\end{split}
\end{equation}
as long as  $\gamma<\frac{1}{4}$. Note that  $\Lambda$ obeys a worse estimate (in fact just for $i=0$) simply because there are no angular components in its expression.

For $\alpha$-$\beta$ pair, we use the equations \eqref{Dbalpha} and \eqref{Dbeta}
\begin{align*}
&\Dbh\alpha-\frac{1}{2}\Omega\tr\chib \alpha+2\omegab\alpha+\Omega\{-\nablas\tensor\beta -(4\eta+\zeta)\tensor \beta+3\chih \rho+3{}^*\chih \sigma\}=-2\Omega\Theta,\\
&D\beta+\frac{3}{2}\Omega\tr\chi\beta-\Omega\chih\cdot\beta-\omega\beta-\Omega\{\divs\alpha+(\etab+2\zeta)\cdot\alpha\}=2\Omega\Xi,\end{align*}
we compute for $0\le i\le N$,
\begin{align*}
&\Db\left(\ub^{1-2\gamma}|u|^{2\gamma+2i}|\nablas^i(\Omega^2\alpha)|^2\D\mu_{\gs}\right)+D\left(2\ub^{1-2\gamma}|u|^{2\gamma+2i}|\nablas^i(\Omega^2\beta)|^2\D\mu_{\gs}\right)\\
&=\ub^{1-2\gamma}|u|^{2\gamma+2i}\divs(\nablas^i(\Omega^2\alpha)\cdot\nablas^i(\Omega^2\beta))\D\mu_{\gs}\\
&\underline{-2\gamma|u|^{-1}\left(\ub^{1-2\gamma}|u|^{2\gamma+2i}|\nablas^i(\Omega^2\alpha)|^2\D\mu_{\gs}\right)}\\&+\ub^{1-2\gamma}|u|^{2\gamma+2i}\tau_{0}\D\mu_{\gs},
\end{align*}
where $\tau_0$ consists of terms from lower order terms and the commutators of $D,\Db,\nablas$ with $\nablas^i$. The choices of the power of $|u|$ and $\Omega$ are made such that $\tau_0$ does not contain terms like $(\Omega\tr\chib,\omegab)|\nablas^i(\Omega^2\alpha)|^2$ and the underlined term above has a negative sign. Integrating over $(\mathcal{M}, \D\ub\D u\D\mu_{\gs})$, we have
\begin{align*}
&\int_{C_u}\ub^{1-2\gamma}|u|^{2\gamma+2i}|\nablas^i(\Omega^2\alpha)|^2+\int_{\Cb_{\ub}}2\ub^{1-2\gamma}|u|^{2\gamma+2i}|\nablas^i(\Omega^2\beta)|^2\\&+2\gamma\int_{\mathcal{M}}\ub^{1-2\gamma}|u|^{-1+2\gamma+2i}|\nablas^i(\Omega^2\alpha)|^2\\
\lesssim&1+\int_{\mathcal{M}}\ub^{1-2\gamma}|u|^{2\gamma+2i}|\tau_{0}|.
\end{align*}
The spacetime integrant $\ub^{1-2\gamma}|u|^{2\gamma+2i}\tau_0$ can be decomposed as sum of the following terms:
\begin{equation}\label{nonlinearterm1}\underbrace{\ub|u|^{-1}}_{\le\varepsilon}\sum_{i_1+i_2=i}\underbrace{\ub^{-1}|u|^{2+i_1}\nablas^{i_1}(\Omega\chibh,\widetilde{\Omega\tr\chib})}_{|u|^{-1}\|\cdot\|_{L^2(S_{\ub,u})}\lesssim\varepsilon^{-\delta}\ \text{by \ref{bootstrapA'}}}\cdot\underbrace{\ub^{\frac{1}{2}-\gamma}|u|^{-\frac{1}{2}+\gamma+i_2}\nablas^{i_2}(\Omega^2\alpha)}_{\|\cdot\|_{L^2(\mathcal{M})}\lesssim \mathcal{R}({\mathcal{M}})\le\varepsilon^{-\delta}}\cdot\underbrace{\ub^{\frac{1}{2}-\gamma} |u|^{-\frac{1}{2}+\gamma+i}\nablas^i(\Omega^2\alpha)}_{\|\cdot\|_{L^2(\mathcal{M})}\lesssim \mathcal{R}({\mathcal{M}})\le\varepsilon^{-\delta}},\end{equation}
$$(\ub|u|^{-1})^{\frac{3}{2}}\sum_{\substack{i_1+i_2=i}}|u|^{1+i_1}\nablas^{i_1}(\Omega\chih)\cdot\ub^{-1-\gamma}|u|^{1+\gamma+i_2}\nablas^{i_2}(\Omega^2\widetilde{\rho},\Omega^2\sigma)\cdot\ub^{\frac{1}{2}-\gamma} |u|^{-\frac{1}{2}+\gamma+i}\nablas^i(\Omega^2\alpha),$$
\begin{equation}\label{nonlineartermrhoitself}\underbrace{(\ub|u|^{-1})^{\frac{1}{2}-\gamma}|u|^{-2}}_{\|\cdot\|_{L^2(\mathcal{M})}\lesssim(\ub|u|^{-1})^{1-\gamma}}\cdot\underbrace{ |u|^{1+i}\nablas^{i}(\Omega\chih)}_{|u|^{-1}\|\cdot\|_{L^2(S_{\ub,u})}\le\varepsilon^{-\delta}\ \text{by \ref{bootstrapA}}}\cdot \underbrace{|u|^2\Omega^2\rho}_{|\cdot|\lesssim1}\cdot\underbrace{\ub^{\frac{1}{2}-\gamma} |u|^{-\frac{1}{2}+\gamma+i}\nablas^i(\Omega^2\alpha)}_{\|\cdot\|_{L^2(\mathcal{M})}},\end{equation}
$$\ub|u|^{-1}\sum_{i_1+i_2=i}\ub|u|^{i_1}\nablas^{i_1}(\Omega\chih,\Omega\tr\chi,\omega,\ub^{-1})\cdot\ub^{-\frac{1}{2}-\gamma}|u|^{\frac{1}{2}+\gamma+i_2}\nablas^{i_2}(\Omega^2\beta)\cdot\ub^{-\frac{1}{2}-\gamma}|u|^{\frac{1}{2}+\gamma+i}\nablas^i(\Omega^2\beta),$$
$$(\ub|u|^{-1})^2\sum_{\substack{i_1+i_2+i_3=2i\\ i_1,i_2,i_3\le i}}\ub^{-1}|u|^{2+i_1}\nablas^{i_1}(\Omega\eta,\Omega\etab)\cdot\ub^{-\frac{1}{2}-\gamma}|u|^{\frac{1}{2}+\gamma+i_2}\nablas^{i_2}(\Omega^2\beta)\cdot\ub^{\frac{1}{2}-\gamma} |u|^{-\frac{1}{2}+\gamma+i_3}\nablas^{i_3}(\Omega^2\alpha),$$
\begin{equation}\label{nonlineartermKs}\ub|u|^{-1}\sum_{\substack{i_1+i_2+i_3=2i-1\\ i_1+1,i_2,i_3\le i}}\underbrace{|u|^{2+i_1}\nablas^{i_1}(\Omega \Ks)}_{|u|^{-1}\|\cdot\|\lesssim 1\ \text{by \ref{bootstrapA}}}\cdot\ub^{-\frac{1}{2}-\gamma}|u|^{\frac{1}{2}+\gamma+i_2}\nablas^{i_2}(\Omega^2\beta)\cdot\ub^{\frac{1}{2}-\gamma} |u|^{-\frac{1}{2}+\gamma+i_3}\nablas^{i_3}(\Omega^2\alpha),\end{equation}
\begin{equation}\label{weylcurrentterm}\underbrace{(\ub|u|^{-1})^{\frac{3}{2}}}_{\le\varepsilon}\cdot\underbrace{\ub^{-1-\gamma}|u|^{2+\gamma+i}\nablas^i(\Omega^3\Theta)}_{\|\cdot\|_{L^2(\mathcal{M})}\lesssim1}\cdot\underbrace{\ub^{\frac{1}{2}-\gamma} |u|^{-\frac{1}{2}+\gamma+i}\nablas^i(\Omega^2\alpha)}_{\|\cdot\|_{L^2(\mathcal{M})}\le\varepsilon^{-\delta}},\end{equation}
$$(\ub|u|^{-1})^{\frac{5}{2}}\cdot\ub^{-1-\gamma}|u|^{2+\gamma+i}\nablas^i(\Omega^3\Xi)\cdot\ub^{-\frac{1}{2}-\gamma} |u|^{\frac{1}{2}+\gamma+i}\nablas^i(\Omega^2\beta),$$
These terms are estimated in the following way. The terms without matter field contain three factors in principle: connection coefficients or $\Ks$, and two curvature components. Unless otherwise specified, we estimate these terms as follows: since $N$ is large enough, we use H\"older inequality \eqref{Holder} to put every factor in $H^{i_j}(S_{\ub,u})$ for suitable $i_j\le N$. We also use H\"older over $(\ub,u)$ plane to put the curvature components factors in spacetime $L^2$. See the under-braces in \eqref{nonlinearterm1} as an example. For most terms, we will have an extra $\ub|u|^{-1}$ to some positive power, which can be made arbitrarily small.  There may be terms having more factors, coming from taking $\nablas$ on $\Omega$. This factor would be bounded by bootstrap assumption \ref{bootstrapA} for $\log\Omega$. We omit these terms in our list.

There could also be terms like \eqref{nonlineartermrhoitself}, in which $\rho$ but not its derivatives appears.  In this case, we put $\rho$ in $L^\infty$ by \eqref{rhodaggeritself}, and put the first factor in spacetime $L^2$. The last type worth mentioning is terms like \eqref{nonlineartermKs}, which come from commutators of $\nablas^i$ with the Hodge operators. In this case, we treat $\Ks$ as a connection coefficient. At last, the terms involving the matter field, like \eqref{weylcurrentterm}, can be simply estimated by putting every factor in spacetime $L^2$, using the bounds \eqref{curvature-weylcurrentbound}.

Combining all estimates above, when $\varepsilon$ is sufficiently small, we will have
$$\mathcal{R}[\alpha]+\mathcal{R}(\mathcal{M})[\alpha]\lesssim 1.$$

For $\beta$-$(\rho,\sigma)$ pair, we use the equations \eqref{Dbbeta}, \eqref{Drho} and \eqref{Dsigma}
\begin{align*}
&\Db\beta+\frac{1}{2}\Omega\tr\chib\beta-\Omega\chibh \cdot \beta+\omegab \beta-\Omega\{\ds \rho+{}^*\ds \sigma+3\eta\rho+3{}^*\eta\sigma+2\chih\cdot\betab\}=-2\Omega I\\
&D\rho+\frac{3}{2}\Omega\tr\chi \rho-\Omega\{\divs \beta+(2\etab+\zeta,\beta)-\frac{1}{2}(\chibh,\alpha)\}=-2\Omega\Lambda,\\
&D\sigma+\frac{3}{2}\Omega\tr\chi\sigma+\Omega\{\curls\beta+(2\etab+\zeta,{}^*\beta)-\frac{1}{2}\chibh\wedge\alpha\}=-2\Omega K,\end{align*}
we compute for $0\le i\le N$,
\begin{align*}
&\Db\left(\ub^{-1-2\gamma}|u|^{2+2\gamma+2i}|\nablas^i(\Omega^2\beta)|^2\D\mu_{\gs}\right)+D\left(\ub^{-1-2\gamma}|u|^{2+2\gamma+2i}\left(|\nablas^i(\Omega^2\widetilde{\rho})|^2+|\nablas^i(\Omega^2\sigma)|^2\right)\D\mu_{\gs}\right)\\
&=\ub^{-1-2\gamma}|u|^{2+2\gamma+2i}\divs(\nablas^i(\Omega^2\beta)\cdot (\nablas^i(\Omega^2\widetilde{\rho}),\nablas^i(\Omega^2\sigma)) )\D\mu_{\gs}\\
&\underline{-(2\gamma|u|^{-1}-\omegab)\left(\ub^{-1-2\gamma}|u|^{2+2\gamma+2i}|\nablas^i(\Omega^2\beta)|^2\D\mu_{\gs}\right)}\\
&+(-1-2\gamma)\ub^{-1}\left(\ub^{-1-2\gamma}|u|^{2+2\gamma+2i}\left(|\nablas^i(\Omega^2\widetilde{\rho})|^2+|\nablas^i(\Omega^2\sigma)|^2\right)\D\mu_{\gs}\right)\\&+\ub^{-1-2\gamma}|u|^{2+2\gamma+2i}\tau_{1}\D\mu_{\gs}.
\end{align*}
The choice of the power $|u|$ is made such that $\tau_1$ does not contain terms like $(\omegab,\Omega\tr\chib)|\nablas^i(\Omega^2\beta)|^2$ and the coefficient $-2\gamma$ of the underlined term has a negative sign. The choice of factor $\Omega^2$ is needed in estimating the nonlinear term $(\chibh,\alpha)$ in the equation for $D\rho$. As a result, the coefficient $\omegab$ is also negative by \eqref{trchibomegab<0}. Also, terms like $\ub^{-1}|\nablas^i(\Omega\widetilde{\rho},\Omega\sigma)|^2$, not included in $\tau_1$, has negative sign.  Integrating over $(\mathcal{M}, \D\ub\D u\D\mu_{\gs})$ and drop the term $\omegab|\nablas^i(\Omega^2\beta)|^2$ , we have
\begin{align*}
&\int_{C_u}\ub^{-1-2\gamma}|u|^{2+2\gamma+2i}|\nablas^i(\Omega^2\beta)|^2+\int_{\Cb_{\ub}}\ub^{-1-2\gamma}|u|^{2+2\gamma+2i}\left(|\nablas^i(\Omega^2\widetilde{\rho})|^2+|\nablas^i(\Omega^2\sigma)|^2\right)\\&+2\gamma\int_{\mathcal{M}}\ub^{-1-2\gamma}|u|^{1+2\gamma+2i}|\nablas^i(\Omega^2\beta)|^2\\
&+(1+2\gamma)\int_{\mathcal{M}}\ub^{-2-2\gamma}|u|^{2+2\gamma+2i}\left(|\nablas^i(\Omega^2\widetilde{\rho})|^2+|\nablas^i(\Omega^2\sigma)|^2\right)\\
\lesssim&1+\int_{\mathcal{M}}\ub^{-1-2\gamma}|u|^{2+2\gamma+2i}|\tau_{1}|.
\end{align*}

The spacetime integrant $\ub^{-1-2\gamma}|u|^{2+2\gamma+2i}\tau_1$ can be decomposed as sum of the following terms:
$$\ub|u|^{-1}\sum_{i_1+i_2=i}\ub^{-1}|u|^{2+i_1}\nablas^{i_1}(\Omega\chibh,\widetilde{\Omega\tr\chib})\cdot\ub^{-\frac{1}{2}-\gamma}|u|^{\frac{1}{2}+\gamma+i_2}\nablas^{i_2}(\Omega^2\beta)\cdot\ub^{-\frac{1}{2}-\gamma}|u|^{\frac{1}{2}+\gamma+i}\nablas^i(\Omega^2\beta),$$
$$\ub|u|^{-1} \sum_{i_1+i_2=i}|u|^{1+i_1}\nablas^{i_1}(\Omega\chih)\cdot\ub^{-\frac{3}{2}-\gamma}|u|^{\frac{3}{2}+\gamma+i_2}\nablas^{i_2}(\Omega^2\betab)\cdot\ub^{-\frac{1}{2}-\gamma}|u|^{\frac{1}{2}+\gamma+i}\nablas^i(\Omega^2\beta),$$
$$\ub|u|^{-1}\sum_{i_1+i_2=i}|u|^{1+i_1}\nablas^{i_1}(\Omega\chih,\Omega\tr\chi,\omega)\cdot\ub^{-1-\gamma}|u|^{1+\gamma+i_2}\nablas^{i_2}(\Omega^2\widetilde{\rho},\Omega^2\sigma)\cdot\ub^{-1-\gamma}|u|^{1+\gamma+i}\nablas^i(\Omega^2\widetilde{\rho},\Omega^2\sigma),$$
\begin{equation*}(\ub|u|^{-1})^{-\gamma}|u|^{-2}\cdot|u|^{1+i}\nablas^{i}(\Omega\chih,\Omega\tr\chi,\omega)\cdot|u|^2\Omega^2\rho\cdot\ub^{-1-\gamma}|u|^{1+\gamma+i}\nablas^i(\Omega^2\widetilde{\rho},\Omega^2\sigma),\end{equation*}

$$(\ub|u|^{-1})^{\frac{1}{2}} \sum_{i_1+i_2=i}\ub^{-1}|u|^{2+i_1}\nablas^{i_1}(\Omega\chibh)\cdot\ub^{\frac{1}{2}-\gamma}|u|^{-\frac{1}{2}+\gamma+i_2}\nablas^{i_2}(\Omega^2\alpha)\cdot\ub^{-1-\gamma}|u|^{1+\gamma+i}\nablas^i(\Omega^2\widetilde{\rho},\Omega^2\sigma),$$
$$(\ub|u|^{-1})^{\frac{3}{2}}\sum_{\substack{i_1+i_2+i_3=2i\\ i_1,i_2,i_3\le i}}\ub^{-1}|u|^{2+i_1}\nablas^{i_1}(\Omega\eta,\Omega\etab)\cdot\ub^{-1-\gamma}|u|^{1+\gamma+i_2}\nablas^{i_2}(\Omega^2\widetilde{\rho},\Omega^2\sigma)\cdot\ub^{-\frac{1}{2}-\gamma}|u|^{\frac{1}{2}+\gamma+i_3}\nablas^{i_3}(\Omega^2\beta),$$
\begin{equation*}(\ub|u|^{-1})^{\frac{1}{2}-\gamma}|u|^{-2}\cdot\ub^{-1}|u|^{2+i}\nablas^{i}(\Omega\eta,\Omega\etab)\cdot |u|^2\Omega^2\rho\cdot\ub^{-\frac{1}{2}-\gamma}|u|^{\frac{1}{2}+\gamma+i}\nablas^{i}(\Omega^2\beta),\end{equation*}
$$(\ub|u|^{-1})^{\frac{1}{2}}\sum_{\substack{i_1+i_2+i_3=2i-1\\ i_1+1,i_2,i_3\le i}}|u|^{2+i_1}\nablas^{i_1}(\Omega \Ks)\cdot\ub^{-1-\gamma}|u|^{1+\gamma+i_2}\nablas^{i_2}(\Omega^2\widetilde{\rho},\Omega^2\sigma)\cdot\ub^{-\frac{1}{2}-\gamma}|u|^{\frac{1}{2}+\gamma+i_3}\nablas^{i_3}(\Omega^2\beta),$$
\begin{equation*}(\ub|u|^{-1})^{\frac{1}{2}}\cdot\ub^{-1-\gamma}|u|^{2+\gamma+i}\nablas^{i}(\Omega^3I)\cdot\ub^{-\frac{1}{2}-\gamma}|u|^{\frac{1}{2}+\gamma+i}\nablas^{i}(\Omega^2\beta),\end{equation*}
\begin{equation*} (\ub|u|^{-1})^{\frac{1}{4}}\cdot\ub^{ -\frac{1}{4}-\gamma}|u|^{\frac{5}{4}+\gamma+i}\nablas^{i}(\Omega^3\Lambda,\Omega^3K)\cdot\ub^{-1-\gamma}|u|^{1+\gamma+i}\nablas^{i}(\Omega^2\widetilde{\rho},\Omega^2\sigma),\end{equation*}
The estimates of $\tau_1$ can then be done similar to $\tau_0$. 
We have
\begin{equation}\label{curvatureestimate1}\mathcal{R}[\beta]+\underline{\mathcal{R}}[\rho,\sigma]+\mathcal{R}(\mathcal{M})[\beta,\rho,\sigma]\lesssim 1.\end{equation}

For $(\rho,\sigma)$-$\betab$ pair, we use the equations \eqref{Dbrho}, \eqref{Dbsigma} and \eqref{Dbetab}
\begin{align*}
&\Db\widetilde{\rho}+\widetilde{\frac{3}{2}\Omega\tr\chib \rho}+\Omega\{\divs \betab+(2\eta-\zeta,\betab)+\frac{1}{2}(\chih,\alphab)\}=-2\widetilde{\Omega\Lambdab},\\
&\Db\sigma+\frac{3}{2}\Omega\tr\chib\sigma+\Omega\{\curls\betab+(2\eta-\zeta,{}^*\betab)+\frac{1}{2}\chih\wedge\alphab\}=2\Omega\Kb,\\
&D\betab+\frac{1}{2}\Omega\tr\chi\betab-\Omega\chih \cdot \betab+\omega \betab+\Omega\{\ds \rho-{}^*\ds \sigma+3\etab\rho-3{}^*\etab\sigma-2\chibh\cdot\beta\}=2\Omega\Ib
\end{align*}
we compute for $0\le i\le N$, 
\begin{align*}
&\Db\left(\ub^{-2-2\gamma}|u|^{3+2\gamma+2i}\left(|\nablas^i(\Omega^2\widetilde{\rho})|^2+|\nablas^i(\Omega^2\sigma)|^2\right)\D\mu_{\gs}\right)+D\left(\ub^{-2-2\gamma}|u|^{3+2\gamma+2i}|\nablas^i(\Omega^2\betab)|^2\D\mu_{\gs}\right)\\
=&\ub^{-2-2\gamma}|u|^{3+2\gamma+2i}\divs((\nablas^i(\Omega^2\widetilde{\rho}),\nablas^i(\Omega^2\sigma))\cdot\nablas^i(\Omega^2\betab))\D\mu_{\gs}\\
&+(-2-2\gamma)\ub^{-1}\left(\ub^{-2-2\gamma}|u|^{3+2\gamma+2i}|\nablas^i(\Omega^2\betab)|^2\D\mu_{\gs}\right)\\
&+\ub^{-2-2\gamma}|u|^{3+2\gamma+2i}\tau_{2}\D\mu_{\gs}.
\end{align*}
Different to the previous steps, $\tau_2$  is allowed to contain terms like $(\omegab,\Omega\tr\chib)|\nablas^i(\Omega^2\widetilde{\rho},\Omega^2\sigma)|^2$, which is controlled by $\mathcal{R}(\mathcal{M})(\rho,\sigma)$, which is already estimated in \eqref{curvatureestimate1}. The choice of the power of $|u|$ also relies on the estimate \eqref{curvatureestimate1}.

The spacetime integrant $\ub^{-2-2\gamma}|u|^{3+2\gamma+2i}\tau_2$ can be decomposed as sum of the following terms:
\begin{equation}\label{curvatureboarderline}\sum_{i_1+i_2=i}|u|^{1+i_1}\nablas^{i_1}(\Omega\chibh,\Omega\tr\chib,\omegab)\cdot\ub^{-1-\gamma}|u|^{1+\gamma+i_2}\nablas^{i_2}(\Omega^2\widetilde{\rho},\Omega^2\sigma)\cdot\ub^{-1-\gamma}|u|^{1+\gamma+i}\nablas^i(\Omega^2\widetilde{\rho},\Omega^2\sigma),\end{equation}
\begin{equation*}(\ub|u|^{-1})^{-\gamma}|u|^{-2}\cdot\ub^{-1}|u|^{2+i}\nablas^{i}(\widetilde{\Omega\tr\chib})\cdot|u|^2\Omega^2\rho\cdot\ub^{-1-\gamma}|u|^{1+\gamma+i}\nablas^i(\Omega^2\widetilde{\rho},\Omega^2\sigma),\end{equation*}
$$(\ub|u|^{-1})^{\frac{1}{2}}\sum_{i_1+i_2=i}|u|^{1+i_1}\nablas^{i_1}(\Omega\chih)\cdot\ub^{-\frac{3}{2}-\gamma}|u|^{\frac{3}{2}+\gamma+i_2}\nablas^{i_2}(\Omega^2\alphab)\cdot\ub^{-1-\gamma}|u|^{1+\gamma+i}\nablas^i(\Omega^2\widetilde{\rho},\Omega^2\sigma),$$
$$\ub|u|^{-1}\sum_{i_1+i_2=i}|u|^{1+i_1}\nablas^{i_1}(\Omega\chih,\Omega\tr\chi,\omega)\cdot\ub^{-\frac{3}{2}-\gamma}|u|^{\frac{3}{2}+\gamma+i_2}\nablas^{i_2}(\Omega^2\betab)\cdot\ub^{-\frac{3}{2}-\gamma}|u|^{\frac{3}{2}+\gamma+i}\nablas^i(\Omega^2\betab),$$
$$\ub|u|^{-1}\sum_{i_1+i_2=i}\ub^{-1}|u|^{2+i_1}\nablas^{i_1}(\Omega\chibh)\cdot\ub^{-\frac{1}{2}-\gamma}|u|^{\frac{1}{2}+\gamma+i_2}\nablas^{i_2}(\Omega^2\beta)\cdot\ub^{-\frac{3}{2}-\gamma}|u|^{\frac{3}{2}+\gamma+i}\nablas^i(\Omega^2\betab),$$
$$(\ub|u|^{-1})^{\frac{3}{2}}\sum_{\substack{i_1+i_2+i_3=2i\\ i_1,i_2,i_3\le i}}\ub^{-1}|u|^{2+i_1}\nablas^{i_1}(\Omega\eta,\Omega\etab)\cdot\ub^{-\frac{3}{2}-\gamma}|u|^{\frac{3}{2}+\gamma+i_2}\nablas^{i_2}(\Omega^2\betab)\cdot\ub^{-1-\gamma}|u|^{1+\gamma+i_3}\nablas^{i_3}(\Omega^2\widetilde{\rho},\Omega^2\sigma),$$
\begin{equation*}(\ub|u|^{-1})^{\frac{1}{2}-\gamma}|u|^{-2}\cdot\ub^{-1}|u|^{2+i}\nablas^{i}(\Omega\eta,\Omega\etab)\cdot\ub^{-\frac{3}{2}-\gamma}|u|^{\frac{3}{2}+\gamma+i}\nablas^{i}(\Omega^2\betab)\cdot|u|^2\Omega^2\rho,\end{equation*}
$$(\ub|u|^{-1})^{\frac{1}{2}}\sum_{\substack{i_1+i_2+i_3=2i-1\\ i_1+1,i_2,i_3\le i}}|u|^{2+i_1}\nablas^{i_1}(\Omega \Ks)\cdot\ub^{-\frac{3}{2}-\gamma}|u|^{\frac{3}{2}+\gamma+i_2}\nablas^{i_2}(\Omega^2\betab)\cdot\ub^{-1-\gamma}|u|^{1+\gamma+i_3}\nablas^{i_3}(\Omega^2\widetilde{\rho},\Omega^2\sigma),$$
\begin{equation}\label{curvatureLambdab}\ub^{-1-\gamma}|u|^{2+\gamma+i}\nablas^{i}(\Omega^3\widetilde{\Lambdab},\Omega^3\Kb)\cdot\ub^{-1-\gamma}|u|^{1+\gamma+i} \nablas^{i}(\Omega^2\widetilde{\rho},\Omega^2\sigma),\end{equation}
\begin{equation*}(\ub|u|^{-1})^{\frac{1}{2}}\cdot\ub^{-1-\gamma}|u|^{2+\gamma+i}\nablas^{i}(\Omega^3\Ib)\cdot\ub^{-\frac{3}{2}-\gamma}|u|^{\frac{3}{2}+\gamma+i} \nablas^{i}(\Omega^2\betab),\end{equation*}
Integrating over $(\mathcal{M}, \D\ub\D u\D\mu_{\gs})$ we then have
$$\mathcal{R}[\rho,\sigma]+\underline{\mathcal{R}}[\betab]+\mathcal{R}(\mathcal{M})[\betab]\lesssim1+\mathcal{R}(\mathcal{M})[\rho,\sigma]+\sqrt{\mathcal{R}(\mathcal{M})[\rho,\sigma]}.$$
The term $\mathcal{R}(\mathcal{M})[\rho,\sigma]$ on the right hand side comes from  the term \eqref{curvatureboarderline} and the term $\sqrt{\mathcal{R}(\mathcal{M})[\rho,\sigma]}$ comes from \eqref{curvatureLambdab}.  By \eqref{curvatureestimate1} and absorbing the square root term,  we have
$$\mathcal{R}[\rho,\sigma]+\underline{\mathcal{R}}[\betab]+\mathcal{R}(\mathcal{M})[\betab]\lesssim1.$$

From $\betab$-$\alphab$ pair, we use equations \eqref{Dbbetab} and \eqref{Dalphab}
\begin{align*}
&\Db\betab+\frac{3}{2}\Omega\tr\chib\betab-\Omega\chibh\cdot\betab-\omegab\betab+\Omega\{\divs\alphab+(\eta-2\zeta)\cdot\alphab\}=-2\Omega\Xib,\\
&\Dh\alphab-\frac{1}{2}\Omega\tr\chi \alphab+2\omega\alphab+\Omega\{\nablas\tensor\betab +(4\etab-\zeta)\tensor \betab+3\chibh \rho-3{}^*\chibh \sigma\}=-2\Omega\Thetab.\end{align*}
we compute for $0\le i\le N$,
\begin{align*}
&\Db\left(\ub^{-2-2\gamma}|u|^{3+2\gamma+2i}\left(|\nablas^i(\Omega^2\betab)|^2\right)\D\mu_{\gs}\right)+D\left(\ub^{-2-2\gamma}|u|^{3+2\gamma+2i}|\nablas^i(\Omega^2\alphab)|^2\D\mu_{\gs}\right)\\
=&\ub^{-2-2\gamma}|u|^{3+2\gamma+2i}\divs(\nablas^i(\Omega^2\betab)\cdot\nablas^i(\Omega^2\alphab))\D\mu_{\gs}\\
&+(-2-2\gamma)\ub^{-1}\left(\ub^{-2-2\gamma}|u|^{3+2\gamma+2i}|\nablas^i(\Omega^2\alphab)|^2\D\mu_{\gs}\right)\\
&+\ub^{-2-2\gamma}|u|^{3+2\gamma+2i}\tau_{3}\D\mu_{\gs},
\end{align*}
where $\ub^{-2-2\gamma}|u|^{3+2\gamma+2i}\tau_{3}$ consists the following terms:
$$\ub|u|^{-1}\sum_{i_1+i_2=i}|u|^{1+i_1}\nablas^{i_1}(\Omega\chibh,\Omega\tr\chib,\omegab)\cdot\ub^{-\frac{3}{2}-\gamma}|u|^{\frac{3}{2}+\gamma+i_2}\nablas^{i_2}(\Omega^2\betab)\cdot\ub^{-\frac{3}{2}-\gamma}|u|^{\frac{3}{2}+\gamma+i}\nablas^i(\Omega^2\betab),$$
$$\ub|u|^{-1} \sum_{i_1+i_2=i}|u|^{1+i_1}\nablas^{i_1}(\Omega\chih,\Omega\tr\chi,\omega)\cdot\ub^{-\frac{3}{2}-\gamma}|u|^{\frac{3}{2}+\gamma+i_2}\nablas^{i_2}(\Omega^2\alphab)\cdot\ub^{-\frac{3}{2}-\gamma}|u|^{\frac{3}{2}+\gamma+i}\nablas^i(\Omega^2\alphab),$$
$$(\ub|u|^{-1})^{\frac{3}{2}}\sum_{i_1+i_2=i}\ub^{-1}|u|^{2+i_1}\nablas^{i_1}(\Omega\chibh)\cdot\ub^{-1-\gamma}|u|^{1+\gamma+i_2}\nablas^{i_2}(\Omega^2\widetilde{\rho},\Omega^2\sigma)\cdot\ub^{-\frac{3}{2}-\gamma}|u|^{\frac{3}{2}+\gamma+i}\nablas^i(\Omega^2\alphab),$$
\begin{equation*}(\ub|u|^{-1})^{\frac{1}{2}-\gamma}|u|^{-2}\cdot\ub^{-1}|u|^{2+i}\nablas^{i}(\Omega\chibh)\cdot|u|^2\Omega^2\rho\cdot\ub^{-\frac{3}{2}-\gamma}|u|^{\frac{3}{2}+\gamma+i}\nablas^i(\Omega^2\alphab),\end{equation*}
$$(\ub|u|^{-1})^2 \sum_{\substack{i_1+i_2+i_3=2i\\ i_1,i_2,i_3\le i}}\ub^{-1}|u|^{2+i_1}\nablas^{i_1}(\Omega\eta,\Omega\etab)\cdot\ub^{-\frac{3}{2}-\gamma}|u|^{\frac{3}{2}+\gamma+i_2}\nablas^{i_2}(\Omega^2\alphab)\cdot\ub^{-\frac{3}{2}-\gamma}|u|^{\frac{3}{2}+\gamma+i_3}\nablas^{i_3}(\Omega^2\betab),$$
$$\ub|u|^{-1} \sum_{\substack{i_1+i_2+i_3=2i-1\\ i_1+1,i_2,i_3\le i}}|u|^{2+i_1}\nablas^{i_1}(\Omega \Ks)\cdot\ub^{-\frac{3}{2}-\gamma}|u|^{\frac{3}{2}+\gamma+i_2}\nablas^{i_2}(\Omega^2\alphab)\cdot\ub^{-\frac{3}{2}-\gamma}|u|^{\frac{3}{2}+\gamma+i_3}\nablas^{i_3}(\Omega^2\betab),$$
\begin{equation*}(\ub|u|^{-1})^{\frac{1}{2}}\cdot\ub^{-1-\gamma}|u|^{2+\gamma+i}\nablas^{i}(\Omega^3\Xib)\cdot\ub^{-\frac{3}{2}-\gamma}|u|^{\frac{3}{2}+\gamma+i} \nablas^{i}(\Omega^2\betab),\end{equation*}
\begin{equation*}(\ub|u|^{-1})^{\frac{1}{2}}\cdot\ub^{-1-\gamma}|u|^{2+\gamma+i}\nablas^{i}(\Omega^3\Thetab)\cdot\ub^{-\frac{3}{2}-\gamma}|u|^{\frac{3}{2}+\gamma+i} \nablas^{i}(\Omega^2\alphab).\end{equation*}
Integrating over $(\mathcal{M}, \D\ub\D u\D\mu_{\gs})$ we then have
$$\underline{\mathcal{R}}[\alphab]+\mathcal{R}(\mathcal{M})[\alphab]\lesssim1.$$
Here  the notation $\underline{\mathcal{R}}[\alphab], \mathcal{R}(\mathcal{M})[\alphab]$ excludes the $\Db$ derivative applying to $\alphab$.

\end{proof}

It remains to obtain estimates for $\alphab$ involving $\Db$ derivatives. We will commute $\nablas^i\Db^j$ to the $\betab$--$\alphab$ group of  the Bianchi equations. In the estimate, we will encounter the terms $\Db^j\rho, \Db^j\sigma$ and $\Db^j\betab$, and their angular derivatives  whose bounds are not included in our theorem explicitly. Nevertheless, these terms can be expressed in terms of angular derivatives of other curvature components and $\Db^j\alphab$, and the components of the Weyl current using the Bianchi equations. Recall that $\rho$ obeys the following equation
$$\Db\rho+ \frac{3}{2}\Omega\tr\chib \rho+\Omega\{\divs \betab+(2\eta-\zeta,\betab)+\frac{1}{2}(\chih,\alphab)\}=-2 \Omega\Lambdab,$$
and $\sigma$ obeys a similar one. Writing in $L^2(S_{\ub,u})$, using bootstrap assumptions \ref{bootstrapA} and \ref{bootstrapA'} for  all connection coefficients terms, we have
$$\|\Db\rho,\Db\sigma\|\lesssim\frac{1}{|u|}\|\rho,\sigma,\betab,\alphab\|+\|\Omega\nablas\betab\|+\|\Omega\Lambdab,\Omega\Kb\|$$
and similarly for $\Db\betab$
$$\|\Db\betab\|\lesssim\frac{1}{|u|}\|\betab,\alphab\|+\|\Omega\nablas\alphab\|+\|\Omega\Xib\|.$$
For $\Db^2\rho,\Db^2\sigma$, we apply one more $\Db$ to the $\Db\rho,\Db\sigma$ equation, commute $\Db$ with $\divs$, and express the one order $\Db$ terms in terms of angular derivatives, we have
$$\|\Db^2\rho,\Db^2\sigma\|\lesssim\frac{1}{|u|^2}\|\rho,\sigma,\betab,\alphab\|+\frac{1}{|u|}(\|\Omega\nablas\betab,\Omega\nablas\alphab, \Db\alphab,\Omega\Lambdab,\Omega\Kb,\Omega\Xib\|)+\|\Omega^2\nablas^2\alphab, \Omega\Db\Lambdab,\Omega\Db\Kb, \Omega^2\nablas\Xib\|,$$
and similarly, 
$$\|\Db^2\betab\|\lesssim\frac{1}{|u|^2}\|\betab,\alphab\|+\frac{1}{|u|}\|\Omega\nablas\alphab,\Db\alphab,\Omega\Xib\|+\|\nablas\Db\alphab, \Omega\Db\Xib\|.$$
Inductively, for $2\le j\le N$, we have
\begin{equation}\label{Dbjrhosigma}\begin{split}\|\Db^j(\rho, \sigma)\|\lesssim&\frac{1}{|u|^j}\| \rho,\sigma,\betab\|+\frac{1}{|u|^{j-1}}\|\Omega\nablas\betab\|+\sum_{0\le k+l\le j, k\le 2, l\le j-1}\frac{1}{|u|^{j-k-l}}\|(\Omega\nablas)^k\Db^l\alphab\|\\
&+\sum_{0\le l\le j-1}\frac{1}{|u|^{j-l-1}}\|\Omega\Db^l\Lambdab,\Omega\Db^l\Kb\|)+\sum_{0\le k+l\le j-1, k\le 1, l\le j-2}\frac{1}{|u|^{j-k-l-1}}\|\Omega(\Omega\nablas)^k\Db^l\Xib\|,\end{split}\end{equation}
and
\begin{equation}\label{Dbjbetab}\|\Db^j\betab\|\lesssim\frac{1}{|u|^j}\| \betab\|+\sum_{0\le k+l\le j, k\le 1, l\le j-1}\frac{1}{|u|^{j-k-l}}\|(\Omega\nablas)^k\Db^l\alphab\|+\sum_{0\le l\le j-1}\frac{1}{|u|^{j-l-1}}\|\Omega\Db^l\Xib\|,\end{equation}
and $\nablas^i\Db^j(\sigma,\rho,\betab)$ is obtained by commuting $\nablas^i$ to  the right hand side.

Now we compute for $0\le i+j\le N, i,j \ge 1$ (and hence $j\le N-1$),
\begin{align*}
&\Db\left(\ub^{-2-2\gamma}|u|^{3+2\gamma+2(i+j)}\left(|\nablas^i\Db^j(\Omega^2\betab)|^2\right)\D\mu_{\gs}\right)+D\left(\ub^{-2-2\gamma}|u|^{3+2\gamma+2(i+j)}|\nablas^i\Db^j(\Omega^2\alphab)|^2\D\mu_{\gs}\right)\\
=&\ub^{-2-2\gamma}|u|^{3+2\gamma+2(i+j)}\divs(\nablas^i\Db^j(\Omega^2\betab)\cdot\nablas^i\Db^j(\Omega^2\alphab))\D\mu_{\gs}\\
&+(-2-2\gamma)\ub^{-1}\left(\ub^{-2-2\gamma}|u|^{3+2\gamma+2(i+j)}|\nablas^i\Db^j(\Omega^2\alphab)|^2\D\mu_{\gs}\right)\\
&+\ub^{-2-2\gamma}|u|^{3+2\gamma+2(i+j)}\tau_{4}\D\mu_{\gs},
\end{align*}
where $\ub^{-2-2\gamma}|u|^{3+2\gamma}\tau_4$ consists of the following terms (we omit terms obtained by $\Db$ applying on the metric components, which will not cause any additional difficulties)
{\small $$\ub|u|^{-1}\sum_{i_1+i_2=i, j_1+j_2=j}|u|\nablas^{i_1}\Db^{j_1}(\Omega\chibh,\Omega\tr\chib,\omegab)\cdot\ub^{-\frac{3}{2}-\gamma}|u|^{\frac{3}{2}+\gamma}\nablas^{i_2}\Db^{j_2}(\Omega^2\betab)\cdot\ub^{-\frac{3}{2}-\gamma}|u|^{\frac{3}{2}+\gamma}\nablas^i\Db^j(\Omega^2\betab),$$
$$\ub|u|^{-1}\sum_{i_1+i_2=i, j_1+j_2=j}|u|\nablas^{i_1}\Db^{j_1}(\Omega\chih,\Omega\tr\chi,\omega,\Omega\eta,\Omega\etab)\cdot\ub^{-\frac{3}{2}-\gamma}|u|^{\frac{3}{2}+\gamma}\nablas^{i_2}\Db^{j_2}(\Omega^2\alphab)\cdot\ub^{-\frac{3}{2}-\gamma}|u|^{\frac{3}{2}+\gamma}\nablas^i\Db^j(\Omega^2\alphab),$$
$$(\ub|u|^{-1})^{\frac{3}{2}}\sum_{i_1+i_2=i, i_2\ge1,  j_1+j_2=j}\ub^{-1}|u|^2\nablas^{i_1}\Db^{j_1}(\Omega\chibh)\cdot \ub^{-1-\gamma}|u|^{1+\gamma}\nablas^{i_2}\Db^{j_2}(\Omega^2\rho,\Omega^2\sigma)\cdot \ub^{-\frac{3}{2}-\gamma}|u|^{\frac{3}{2}+\gamma}\nablas^i\Db^j(\Omega^2\alphab),$$
\begin{equation}\label{curvatureDbrho}(\ub|u|^{-1})^{\frac{1}{2}-\gamma}|u|^{-2}\sum_{ j_1+j_2=j} \ub^{-1}|u|^2\nablas^{i}\Db^{j_1}(\Omega\chibh)\cdot|u|^2\Db^{j_2}(\Omega^2\rho,\Omega^2\sigma)\cdot\ub^{-\frac{3}{2}-\gamma}|u|^{\frac{3}{2}+\gamma}\nablas^i\Db^j(\Omega^2\alphab),\end{equation}
$$ \ub|u|^{-1}\sum_{\substack{i_1+i_2+i_3=2i, i_1,i_2,i_3\le i\\j_1+j_2+j_3=2j,j_1,j_2,j_3\le j}}|u|\nablas^{i_1}\Db^{j_1}(\Omega\eta,\Omega\etab)\cdot\ub^{-\frac{3}{2}-\gamma}|u|^{\frac{3}{2}+\gamma}\nablas^{i_2}\Db^{j_2}(\Omega^2\alphab)\cdot \ub^{-\frac{3}{2}-\gamma}|u|^{\frac{3}{2}+\gamma}\nablas^{i_3}\Db^{j_3}(\Omega^2\betab),$$
$$\ub|u|^{-1} \sum_{\substack{i_1+i_2+i_3=2i, i_1 ,i_2,i_3\le i\\j_1+j_2+j_3=2j, j_1, j_2, j_3\le j, j_1\ge1}} |u|\nablas^{i_1} \Db^{j_1-1}\nablas(\Omega\chib) \cdot \ub^{-\frac{3}{2}-\gamma}|u|^{\frac{3}{2}+\gamma}\nablas^{i_2}\Db^{j_2}(\Omega^2\alphab)\cdot \ub^{-\frac{3}{2}-\gamma}|u|^{\frac{3}{2}+\gamma}\nablas^{i_3}\Db^{j_3}(\Omega^2\betab),$$
\begin{equation*}(\ub|u|^{-1})^{\frac{1}{2}}\cdot\ub^{-1-\gamma}|u|^{2+\gamma}\nablas^i\Db^j(\Omega^3\Xib)\cdot\ub^{-\frac{3}{2}-\gamma}|u|^{\frac{3}{2}+\gamma}\nablas^i\Db^j(\Omega^2\betab),\end{equation*}
\begin{equation*}(\ub|u|^{-1})^{\frac{1}{2}}\cdot\ub^{-1-\gamma}|u|^{2+\gamma}\nablas^i\Db^j(\Omega^3\Thetab)\cdot \ub^{-\frac{3}{2}-\gamma}|u|^{\frac{3}{2}+\gamma} \nablas^{i}\Db^j(\Omega^2\alphab).\end{equation*}}
Compared with $\tau_3$, we have additional terms coming from $\Db$ commuting with $D$ and the Hodge operators, but no terms coming from $\nablas$ commuting with Hodge operators. Similar to $\tau_3$, we would like to argue that the above weighted factors are suitably bounded. The only difference is that we should express  curvature terms involving $\Db(\rho,\sigma)$ using \eqref{Dbjrhosigma}. The $\rho,\sigma,\betab$ on the right hand side of \eqref{Dbjrhosigma} (with at least one additional $\nablas$) are bounded by $\mathcal{R}(\mathcal{M})[\rho,\sigma,\betab]$ which has been already estimated before, and the $\alphab$ terms (together with its $\nablas, \Db$ derivatives) are simply what we are going to estimate, which can be absorbed for $\varepsilon$ small enough. For fluid terms, we still have
$$\sup_{i+j\le N, i\ge1}\int_{\mathcal{M}}\ub^{-2-2\gamma}|u|^{3+2\gamma+2(i+j)}|\nablas^i\Db^j(\Omega^3(\Xib,\Thetab, \Lambdab , \Kb))|^2\lesssim 1$$
from Proposition \ref{fluidestimates}. We still have the terms involving  $\Db^j\rho$ (without $\nablas$) to estimate, that is, \eqref{curvatureDbrho} in the above list.  Note that the estimate for $\Db^j\sigma$ is in fact the same to $\nablas^i\Db^j\sigma$ since the matter field term in $\Db\sigma$ is $\Kb$ which is the angular derivatives of fluid variables and hence behaves better than $\Lambdab$. We only need to concern about the terms $\Db\Xib$ and $\Db\Lambdab$ on the right hand side of \eqref{Dbjrhosigma}, or more precisely, module the better behaved terms, we have, in $L^2(S_{\ub,u})$,  for $j\le N-1$, 
\begin{equation}\label{Dbjrhomodule}\|\Db^j(\Omega^2\rho)\|\lesssim\frac{1}{|u|^j}\|\Omega^2\rho\|+\sum_{0\le l\le j-1}\frac{1}{|u|^{j-l-1}}\|\Db^l(\Omega^3\Lambdab)\|+\sum_{0\le l\le j-2}\frac{1}{|u|^{j-l-1}}\|\Db^l(\Omega^3\Xi)\|+\cdots.\end{equation}
 Using  \eqref{fluidL2Sbootstrap1} (without $\varepsilon^{-\delta}$)  we have for $j\le N-1$,  module the better behaved terms. 
$$|u|^{-1}\|\Db^j(\Omega^2\rho)\|_{L^2(S_{\ub,u})}\lesssim|u|^{-2-j}+\cdots,$$
and we have the desired estimate
\begin{equation}\label{curvaturenablasDbalphab}
\begin{split}
\sup_{ i+j\le N, i\ge1}&\int_{C_u} \ub^{-2-2\gamma}|u|^{3+2\gamma+2(i+j)}|\nablas^i\Db^j(\Omega^2 \betab)|^2\\
+&\ub^{-2-2\gamma}\int_{\Cb_{\ub}} |u|^{3+2\gamma+2(i+j)}|\nablas^i\Db^j(\Omega^2 \alphab)|^2\\
+&\int_{\mathcal{M}} \ub^{-3-2\gamma}|u|^{3+2\gamma+2(i+j)}|\nablas^i\Db^j(\Omega^2 \alphab)|^2\lesssim1,
\end{split}
\end{equation}
which can be done as in estimating $\underline{\mathcal{R}}[\alphab],\mathcal{R}(\mathcal{M})[\alphab]$.

\section{The estimates for connection coefficients: lower orders}

In this section, we are going to recover all the bootstrap assumptions \ref{bootstrapA}, \ref{bootstrapB} and \ref{bootstrapA'}. But let us introduce another bootstrap assumption concerning top order angular  derivative of $\etab$:
 \renewcommand{\theboot}{(B')} 
\begin{boot}\label{bootstrapB'} 
$$\int_{C_u}\ub^{-1-2\gamma}|u|^{2\gamma+2(N+1)} |\nablas^{N+1}(\Omega^2\eta, \Omega^2\etab)|^2\le\varepsilon^{-2\delta}.$$
$$\ub^{-1-2\gamma}\int_{\Cb_{\ub}}|u|^{2\gamma+2(N+1)} |\nablas^{N+1}(\Omega^2\eta)|^2\le\varepsilon^{-2\delta}.$$
\end{boot}
We have
\begin{proposition}\label{improvebootA}
 Under the bootstrap assumptions \ref{bootstrapA}, \ref{bootstrapB} and \ref{bootstrapA'}, \ref{bootstrapB'}, if $\varepsilon$ is small enough, for $\ub|u|^{-1}\le\varepsilon$, we have the following estimates which improve bootstrap assumptions \ref{bootstrapA}  and \ref{bootstrapA'}:
 
 For $\Omega$:
\begin{align*}
|u|^{-1}\| \Db^j D^k\log\Omega\|_{L^2(S_{\ub,u})}\lesssim |u|^{-(j+k)},&\  j\le N, k\le1,\\
|u|^{-1}\|\nablas^i \Db^j \log\Omega\|_{L^2(S_{\ub,u})}\lesssim \ub^{1+\gamma}|u|^{-1-\gamma-(i+j)}, &\   i\ge1, i+j\le N+1,\ \\
|u|^{-1}\|\nablas^i\Db^j D \log\Omega\|_{L^2(S_{\ub,u})}\lesssim \ub^\gamma|u|^{-\gamma-1-(i+j)}, &\ i\ge1, i+j\le N,\\
|u|^{-1}\|\Omega\nablas^{N+1}\log\Omega\|_{L^2(S_{\ub,u})}\lesssim \ub^{1+\gamma}|u|^{-1-\gamma-(N+1)}.
\end{align*}
Lower order derivatives: ($i+j+k\le N$):

For $\chih,\tr\chi,\omega$:
\begin{align*}
|u|^{-1}\|  D  (\Omega\chih)\|_{L^2(S_{\ub,u})}\lesssim  \ub^{-1}|u|^{-1} ,&\\
|u|^{-1}\|  \Db^j D^k  (\Omega\chih,\Omega\tr\chi,\omega)\|_{L^2(S_{\ub,u})}\lesssim  |u|^{-1-(j+k)} ,&\  k\le1,\ \text{except}\ D(\Omega\chih),\\
|u|^{-1}\|  \nablas^i\Db^j (\Omega\chih,\Omega\tr\chi,\omega)\|_{L^2(S_{\ub,u})}\lesssim  \ub^{\gamma}|u|^{-1-\gamma-(j+k)},&\ i\ge1.
\end{align*}

For $\eta,\etab,\chibh,\tr\chib$:
\begin{align*}
|u|^{-1}\|\Db^j  D^k (\Omega\eta,\Omega\etab,\Omega\chibh, \Omega\tr\chib)\|_{L^2(S_{\ub,u})}\lesssim  |u|^{-1-(j+k)},&\ k\le 1,\\
|u|^{-1}\|\nablas \Db^j D  (\Omega\eta,\Omega\etab,\Omega\chibh,  \Omega\tr\chib)\|_{L^2(S_{\ub,u})}\lesssim  \ub^{\gamma}|u|^{-1-\gamma-(j+2)},\\
|u|^{-1}\|\nablas^i \Db^j (\Omega\eta,\Omega\etab,\Omega\chibh,  \widetilde{\Omega\tr\chib})\|_{L^2(S_{\ub,u})}\lesssim  \ub|u|^{-2-(i+j)}.
\end{align*}
For $\omegab$:
\begin{align*}
|u|^{-1}\|\nablas^i \widetilde{\omegab}\|_{L^2(S_{\ub,u})}\lesssim  \ub|u|^{-2-i}.
\end{align*}
For $\Ks$:
\begin{align*}
|u|^{-1}\| \Db^j \Ks\|_{L^2(S_{\ub,u})}\le |u|^{-2-j},&\ j\le N-1, \\
|u|^{-1}\|\nablas^i \Db^j  (\Omega\Ks)\|_{L^2(S_{\ub,u})}\le \ub |u|^{-3-(i+j)}, &\  i\ge1, i+j\le N-1.
\end{align*}

\end{proposition}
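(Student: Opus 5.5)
The plan is to integrate the null structure equations from the appendix along the null generators. Throughout, the curvature bounds \eqref{curvaturebound} and \eqref{curvaturenablasDbalphab} from Proposition \ref{curvature} and the fluid bounds from Proposition \ref{fluidestimates} are treated as known. Because every connection coefficient carries the weight $\Omega$ (Remark \ref{equationwithOmega}), the structure equations carry no isolated $\Omega$ factors. Each quantity is estimated with the Gronwall inequalities \eqref{Gronwallub} or \eqref{Gronwallu}, and all nonlinear terms are controlled with the Sobolev and H\"older inequalities \eqref{Sobolev}, \eqref{Holder}. In every such term one factor gains a power of $\ub|u|^{-1}\le\varepsilon$, which absorbs the loss $\varepsilon^{-k\delta}$ inherited from bootstrap \ref{bootstrapA} once $\delta$ is small relative to $1/N$.

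The order of the argument is as follows.

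\textbf{Step 1: $\Omega\chih$, $\Omega\tr\chi$, $\omega$ via $D$.} I would first estimate these through their $D$ equations (Raychaudhuri, the $D\chih$ equation, and $D\omega$ expressed through $\rho$ and the fluid). Integrating from $C_{u_0}$ in $\ub$, the data bound $\ub^{-p}\|\nablas^i(\ub D)^k\chih\|\le A$ with $p>\gamma$ yields the $\ub^\gamma$ gain for angular derivatives. The bound $\ub^{-1}|u|^{-1}$ for $D(\Omega\chih)$ is read off from the data, since $\Omega^2\alpha$ is controlled only in weighted $L^2(C_u)$. For $\Db^j$ derivatives I would instead integrate the $\Db$ equations, e.g.\ \eqref{Dbchih}, from $\Cb_0$ in $u$ using \eqref{Gronwallu}. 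The absence of an $\omegab\,\Omega\chih$ term is what prevents a logarithmic loss.

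\textbf{Step 2: $\Omega\eta$, $\Omega\etab$, $\Omega\chibh$, $\widetilde{\Omega\tr\chib}$, $\widetilde{\omegab}$ via $D$.} These are integrated in $\ub$ from the spherically symmetric data on $\Cb_0$, where the non-tilde quantities vanish. The right-hand sides are $\beta$, $\rho$ and products with Step 1 quantities, plus fluid terms. The $\ub$-integration produces the factor $\ub|u|^{-2-i}$ demanded by \ref{bootstrapA'}; this needs the curvature norms $\mathcal{R}$ and the elliptic/trace bound from $L^2(C_u)$ to $L^2(S)$ at $N-1$ angular orders. The top angular order $N$ would be handled by the Hodge systems for $\eta$ and $\etab$ together with the bootstrap \ref{bootstrapB'}. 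For the tilde quantities, one subtracts the $\Cb_0$ background equation and uses $|u|^2\widetilde{\rho}$, $\widetilde{\Lambdab}$. The $\Db$ derivatives are then obtained either by commuting $\Db^j$ (with $\Db^j\rho$ expressed via \eqref{Dbjrhomodule}) or by using the $\Db$ equations directly; the $\nablas\Db^jD$ bound follows by substituting the $D$ equation.

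\textbf{Step 3: $\log\Omega$ and $\Ks$.} For $\log\Omega$, I would integrate $D\log\Omega=\omega$ and use $\nablas\log\Omega=\frac{1}{2}(\eta+\etab)$, so that its angular bounds follow from Step 2 divided by $\Omega$. For $\Ks$, I would use the Gauss equation $\Ks=-\rho+\frac{1}{2}\chih\cdot\chibh-\frac{1}{4}\tr\chi\tr\chib$ plus fluid terms.

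The main obstacle I expect is the $\Db^j$ estimates of $\Omega\chih$ and $\omega$ when $\Omega\to0$. The right-hand sides contain $\Db^j$ of the fluid quantities $\Lambda$ and $\Xi$, which carry factors $\Omega^{-2}$ through $U_{L}$ and $U_{\Lb'}$. To handle them I would first try to pair every $U_L$ with its natural $\Omega^2$, using Lemma \ref{DintermsofDbnablas} and the $\Omega^{\pm2}$-weighted norms in Proposition \ref{fluidestimates}. I would then check that after the $u$-integration \eqref{Gronwallu} the weight $|u|^{-1-j}$ is not lost; this is where $\beta_1>0$ in \eqref{Omegaupperlower} and the sign $\omegab<0$ from \eqref{trchibomegab<0} should be used.
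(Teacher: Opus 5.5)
Your Step 3 has a genuine gap, and it lies in the $\Omega$-bookkeeping that this proposition is about.

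\textbf{The Gauss equation only controls $\Omega^2\Ks$.} In \eqref{Gauss}, $\rho$ is available only through $\Omega^2\rho$ (Proposition \ref{curvature}, \eqref{rhodaggeritself}). Moreover $(\chih,\chibh)=\Omega^{-2}(\Omega\chih,\Omega\chibh)$ and $\tr\chi\tr\chib=\Omega^{-2}(\Omega\tr\chi)(\Omega\tr\chib)$. With $\Omega\chih$ of size $|u|^{-1}$ and $\Omega\chibh$ of size $\ub|u|^{-2}$, the available bounds give only $|u|^{-1}\|\nablas^i(\Omega\Ks)\|_{L^2(S_{\ub,u})}\lesssim\Omega^{-1}\ub|u|^{-3-i}$ and $|u|^{-1}\|\Ks\|_{L^2(S_{\ub,u})}\lesssim|u|^{-2}+\Omega^{-2}\ub|u|^{-3}$. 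The claimed bounds rely on a cancellation among $\widetilde{\rho}$, $\widetilde{\tr\chi\tr\chib}$ and $(\chih,\chibh)$ that separate estimates cannot see. Since $\Omega\approx\Omega_0(u)\to0$ while $\ub|u|^{-1}$ is only bounded by a fixed $\varepsilon$, these losses are unbounded.

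The paper instead uses two transport equations whose right-hand sides contain only naturally weighted coefficients. For $\Omega\widetilde{\Ks}$ it integrates $D\widetilde{\Ks}+\Omega\tr\chi\Ks=\divs\divs(\Omega\chih)-\frac12\Deltas(\Omega\tr\chi)$ in $\ub$, using \ref{bootstrapB} for $\nablas^{N+1}(\Omega^2\chih)$. For the $\Db^j$ bounds it uses $\Db\Ks+\Omega\tr\chib\Ks=\divs\divs(\Omega\chibh)-\frac12\Deltas(\Omega\tr\chib)$.

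\textbf{The same defect affects $\log\Omega$.} Dividing your Step 2 bounds for $\Omega\eta,\Omega\etab$ by $\Omega$ yields only $|u|^{-1}\|\nablas^i\log\Omega\|_{L^2(S_{\ub,u})}\lesssim\Omega^{-1}\ub|u|^{-1-i}$. The statement demands $\ub^{1+\gamma}|u|^{-1-\gamma-i}$ with no $\Omega^{-1}$, and this is exactly the bound that later justifies moving $\Omega$ in and out of norms. The paper uses $\nablas\log\Omega=\frac12(\eta+\etab)$ only at order $N+1$, where the statement carries the compensating factor $\Omega$. Below that order it derives the bounds from those for $\omegab$ through $\Db\log\Omega=\omegab$, with $\Omega$ constant on $C_{u_0}$. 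The $\omegab$ bounds come from \eqref{Domegab}, commuted with $\nablas^i\Db^{j-1}$ for the mixed derivatives. Commuting $\nablas^i$ into $D\log\Omega=\omega$ and integrating from $\Cb_0$, where $\nablas\log\Omega=0$, would also avoid the loss.

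Step 1 also needs repair.

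\textbf{$\omega$.} There is no $D\omega$ equation: \eqref{Dbomega} is a $\Db$ equation, and the $D$ equation \eqref{Domegab} is for $\omegab$. Also, $\omega$ is not prescribed on $\Cb_0$. It vanishes on $C_{u_0}$ and is obtained by integrating \eqref{Dbomega} in $u$ with \eqref{Gronwallu}, $s=\nu=0$, using the $\underline{\mathcal{R}}$-flux of $\rho$.

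\textbf{Initial surfaces.} Yours are swapped. $D$ equations are integrated in $\ub$ from $\Cb_0$, where $\chih,\eta,\etab,\chibh$ vanish. The $\ub^\gamma$ gain for $\chih$ then comes from the weighted $\alpha$-flux in $\mathcal{R}$ via Cauchy--Schwarz in $\ub'$. The data bound $\ub^p$ on $C_{u_0}$ enters directly only if you integrate \eqref{Dbchih} in $u$ from $C_{u_0}$, as the paper does in Proposition \ref{improvebootB}.

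\textbf{$D(\Omega\chih)$.} Its bound cannot be read off the data when $u\neq u_0$. It is essentially $\Omega^2\alpha$ in $L^2(S_{\ub,u})$, which the paper propagates by integrating \eqref{Dbalpha} in $u$.

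\textbf{Your main obstacle.} The fluid terms are not the crux here: they already come weighted in \eqref{connection-low-Riccibound} and Proposition \ref{fluidestimates}, and $\beta_1$ plays no role. The sign $\omegab<0$ is used for $\etab$, which the paper estimates by integrating \eqref{Db-etab} in $u$ and dropping $\omegab\,\Omega\etab$. You list $\etab$ among the $D$-integrated quantities without naming an equation. That route would need $D\etab=2\nablas\omega-D\eta$, and hence $\nablas^{N+1}\omega$ from \ref{bootstrapB}.
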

 
 \begin{proof}
We first consider the estimates only for  angular derivatives. The estimates can be done by applying \eqref{Gronwallub} and \eqref{Gronwallu} to suitable null structure equations. We will need the following bounds of the Ricci tensor, by Proposition \ref{fluidestimates} and consequence estimates \eqref{fluidL2Sbootstrap1}, \eqref{fluidL2Sbootstrap2} (without $\varepsilon^{-\delta}$) and \eqref{fluidLinfty} (note that we have improved bound $|\Us|\lesssim\ub|u|^{-1}$ by running the argument again without $\varepsilon^{-\delta}$) :

  \begin{equation}\label{connection-low-Riccibound}
\begin{split}\sup_{0\le i\le N}&\ub^{-2}\int_{S_{\ub,u}}|u|^{4+2i}|\nablas^i(\Omega^2(\Ricsef,\Ricset)|^2\\&+\int_{S_{\ub,u}}|u|^{2+2i}|\nablas^i(\Omega^2\Rics, \Omega^2\mathbf{Ric}_{34},\Omega^2\mathbf{Ric}_{44})|^2\lesssim 1.\end{split}\end{equation}

 For $\chih$, we write the equations for \eqref{Dchih} as
$$\Dh(\Omega\chih)=2\omega\cdot\Omega\chih-\Omega^2\alpha.$$
By applying  H\"older inequality \eqref{Holder} and Gronwall \eqref{Gronwallub}, using bootstrap assumptions \ref{bootstrapA} and \eqref{curvaturebound} from Proposition \ref{curvature}, we have
\begin{align}\nonumber|u|^{-1}\|\Omega\chih\|_{H^N(S_{\ub,u})}\lesssim&\int_0^{\ub} |u|^{-2}\|\omega\|_{H^N(S_{\ub',u})}\|\Omega\chih\|_{H^N(S_{\ub',u})}\D\ub'+\int_0^{\ub}|u|^{-1}\|\Omega^2\alpha\|_{H^N(S_{\ub',u})}\D\ub'
\\\nonumber\lesssim&\int_0^{\ub}\varepsilon^{-\delta}|u|^{-1}\cdot\varepsilon^{-\delta}|u|^{-1}\\\nonumber
&+\left(\int_0^{\ub}\ub'^{-1+2\gamma}|u|^{-2-2\gamma}\D\ub'\right)^{\frac{1}{2}}\left(\int_0^{\ub}\ub'^{1-2\gamma}|u|^{2\gamma}\|\Omega^2\alpha\|^2_{H^N(S_{\ub',u})}\D\ub'\right)^{\frac{1}{2}}\\\label{chihbound}
\lesssim&\varepsilon^{-2\delta}\ub|u|^{-2}+\ub^{\gamma}|u|^{-1-\gamma}\lesssim \ub^\gamma |u|^{-1-\gamma}\end{align}
 when $\ub|u|^{-1}\le\varepsilon$ is sufficiently small. This gives the desired estimate for $\chih$. 
 
 For $\tr\chi,\eta$ and $\omegab$, we rewrite the equation for \eqref{Dtrchi}, \eqref{Deta} and \eqref{Domegab} as
$$D(\widetilde{\Omega\tr\chi})=-\frac{1}{2}(\Omega\tr\chi)^2+2\omega\Omega\tr\chi-|\Omega\chih|^2-\Omega^2\mathbf{Ric}_{44},$$
$$D(\Omega\eta) = \omega\cdot\Omega\eta+(\Omega\chi)\cdot(\Omega\etab)-\Omega^2(\beta+\frac{1}{2}\Ricsef).$$
$$D  \widetilde{\omegab} =\Omega^2(2(\eta,\etab)-|\eta|^2-(\rho+\frac{1}{6}\mathbf{R}+\frac{1}{2}\mathbf{Ric}_{34})).$$
Similar arguments give the desired bounds. 

For $\chibh$ and $\tr\chib$, we use the following equations \eqref{Dchibh}, \eqref{Dtrchib} of $D$ direction, in which the right hand sides contain top order derivatives and we should use bootstrap assumption \ref{bootstrapB'}:
$$\Dh(\Omega\chibh)=\Omega^2(\nablas \tensor \etab + \etab \tensor \etab +\frac{1}{2}\tr\chi\chibh-\frac{1}{2}\tr\chib \chih+\frac{1}{2}\widehat{\Rics}),$$
$$D(\widetilde{\Omega\tr\chib})=\Omega^2(2\divs\etab+2|\etab|^2-\frac{1}{2}\tr\chi\tr\chib-(\chih,\chibh)+2(\rho+\frac{1}{6}\mathbf{R})).$$
The estimates are also in the same way.

 For $\etab$, we rewrite the equation  \eqref{Db-etab} in the form
$$\Db(\Omega\etab) = \omegab\cdot\Omega\etab+ (\Omega\chib) \cdot\Omega\eta+\Omega^2(\betab-\frac{1}{2}\Ricset).$$
Since $\omegab<0$, the first term on the right hand side can be dropped, and by H\"older inequality  \eqref{Holder}, by the bounds derived above (especially the bound for $\eta$, since the borderline term is $\Omega\tr\chib\eta$), for $\varepsilon$ small enough, we have
$$|u|^{-1}\|(\Omega\chibh,\Omega\tr\chib)\cdot(\Omega\eta)-\frac{1}{2}\Omega^2\Ricset\|_{H^{N}(S_{\ub,u})}\lesssim \ub|u|^{-3}.$$
By Gronwall \eqref{Gronwallu}, for $\etab$, $s=1,\nu=0$, so
\begin{align*}&\|\Omega\etab\|_{H^N(S_{\ub,u})}\\
\lesssim& \|\Omega\etab\|_{H^N(S_{\ub,u_0})}+\int_{u_0}^u \ub|u'|^{-2}\D u'+\int_{u_0}^u\|\Omega^2\betab\|_{H^N(S_{\ub,u'})}\D u'\\
\lesssim&\ub|u_0|^{-1}+\ub|u|^{-1}+\left(\int_{u_0}^u \ub^{2+2\gamma}|u'|^{-3-2\gamma}\D u'\right)^{\frac{1}{2}}\left(\int_{u_0}^u \ub^{-2-2\gamma}|u'|^{3+2\gamma}\|\Omega^2\betab\|^2_{H^N(S_{\ub,u'})}\D u'^2\right)^{\frac{1}{2}}\\
\lesssim&\ub|u|^{-1}
\end{align*}
when $\varepsilon$ is small enough. This gives the desired bound for $\etab$. The bounds for $\omegab$ gives the desired bounds for $\log\Omega$, except that the bound for $\Omega\nablas^{N+1}\log\Omega=\frac{1}{2}\Omega\nablas^N(\eta+\etab)$ is given by $\eta$ and $\etab$ (which will lose an $\Omega$). For $\omega$, we use the equation \eqref{Dbomega}
$$\Db  \omega =\Omega^2(2(\eta,\etab)-|\etab|^2-(\rho+\frac{1}{6}\mathbf{R}+\frac{1}{2}\mathbf{Ric}_{34})),$$
and apply \eqref{Gronwallu} to $\omega$ with $s=\nu=0$.

The estimates for $\Ks$  can be done by the following equation
$$D\widetilde{\Ks}+\Omega\tr\chi\Ks=\divs\divs(\Omega\chih)-\frac{1}{2}\Deltas(\Omega\tr\chi).$$
Rewrite this equation as $D(\Omega \widetilde{\Ks})=\omega\Omega\widetilde{\Ks}+\Omega D\widetilde{\Ks}$ and apply \eqref{Gronwallub}, using the bootstrap assumption \ref{bootstrapA} and \ref{bootstrapB} for $\chih$ and $\tr\chi$, we have
\begin{equation}\label{Gausscurvaturebound}|u|^{-1}\|\Omega\widetilde{\Ks}\|_{H^{N-1}(S_{\ub,u})}\lesssim\varepsilon^{-\delta}\ub^{1+\gamma}|u|^{-3-\gamma}\le\ub|u|^{-3},\end{equation}
for $\varepsilon$ small enough (recalling that $\gamma>\delta$), giving the desired bound for $\Ks$.

Now we turn to mixed derivatives involving $\Db$ and $\nablas$. We first deal with $\nablas^i\Db^j\log\Omega=\nablas^i\Db^{j-1}\omegab$ for $i+j\le N+1$ and $j\ge 2$. This can be estimated by commuting $\nablas^i\Db^{j-1}$ with the $D\omegab$ equation:
$$D( \nablas^i\Db^{j-1}\omegab)=\nablas^i\Db^{j-1}\left(\Omega^2(2(\eta,\etab)-|\eta|^2-(\rho+\frac{1}{6}\mathbf{R}+\frac{1}{2}\mathbf{Ric}_{34}))\right)+[D,\nablas^i\Db^{j-1}]\omegab.$$
Applying \eqref{Gronwallub} leads to the desired bounds for $\log\Omega$.

The bounds for $\nablas^i\Db^j$ derivative of the other connection coefficients can be directly seen from the corresponding $\Db$ null structure equations (after taking several $\nablas^i\Db^{j-1}$ derivatives). The bounds for $\nablas^i\Db^j\etab$ and $\nablas^i\Db^j\omega$ can be seen from the above equations for $\Db\etab$ and $\Db\omega$.  The bound for $\nablas^i\Db^j\eta$ is obtained by the relation
$$\Db\eta=2\nablas\omegab-\Db\etab.$$
For $\chih,\tr\chi,\chibh,\tr\chib$ and $\Ks$ we use \eqref{Dbchibh}, \eqref{Dbtrchib}, \eqref{Dbchih}, \eqref{Dbtrchi} and the equation for $\Db\Ks$:
$$\Dbh(\Omega\chibh)=2\omegab\cdot\Omega\chibh-\Omega^2\alphab.$$
$$\Db(\Omega\tr\chib)=-\frac{1}{2}(\Omega\tr\chib)^2+2\omegab\Omega\tr\chib-|\Omega\chibh|^2-\Omega^2\mathbf{Ric}_{33},$$
$$\Dbh(\Omega\chih)=\Omega^2(\nablas \tensor \eta + \eta \tensor \eta +\frac{1}{2}\tr\chib\chih-\frac{1}{2}\tr\chi \chibh+\frac{1}{2}\widehat{\mathbf{Ric}_{AB}}),$$
$$\Db(\Omega\tr\chi)=\Omega^2(2\divs\eta+2|\eta|^2-\tr\chi\tr\chib-2\Ks+\gs^{AB}\mathbf{Ric}_{AB}),$$
$$\Db\Ks+\Omega\tr\chib\Ks=\divs\divs(\Omega\chibh)-\frac{1}{2}\Deltas(\Omega\tr\chib).$$
One should note that in estimating $\nablas^i\Db^j(\etab, \omega, \chibh), i+j\le N, j\ge1$, we should use $L^2(S_{\ub,u})$ norm of $\nablas^i\Db^{j-1}(\rho,\betab,\alphab)$. These are not top order terms so that we can integrate using the corresponding null Bianchi equations of $D$ direction to get the $L^2(S_{\ub,u})$ norm (in other words, $L^\infty$ in $(\ub,u)$). We should also utilize the estimates  involving $\Db$ derivatives of the fluid variables from Proposition \ref{fluidestimates} and consequence lower order estimates. We omit the details.

Finally we turn to totally mixed derivatives involving $D, \Db$ and $\nablas$. Note first that $D\log\Omega=\omega$ so the estimates for $\Omega$ are exactly those for $\omega$.  For $D\omega$, we commute $D$ with the $\Db\omega$ equation and integrate along $\Db$ direction to get estimate for $D\omega$. Note that the $D$ derivative of the fluid variable can be expressed in terms of $\Db$ and $\nablas$.  Nevertheless, the estimate for $\Db^jD\omega$ can be directly seen of the commuted equation. For $D(\Omega\chih)$ which is essentially $\alpha$, we only need to integrate the equation for $\Dbh\alpha$. At last, the estimates for the mixed derivatives involving at least one $D$ derivative of the remaining connection coefficients can be directly seen from the $D$ equations which have been use above.  
\end{proof}

\section{The estimates for connection coefficients: top orders}

 Finally we are going to improve bootstrap assumptions \ref{bootstrapB} and \ref{bootstrapB'}.

\begin{proposition}\label{improvebootB} Under the bootstrap assumptions \ref{bootstrapA}, \ref{bootstrapB} and \ref{bootstrapA'}, \ref{bootstrapB'}, if $\varepsilon$ is small enough, for $\ub|u|^{-1}\le\varepsilon$, we have the following estimates which improve bootstrap assumptions \ref{bootstrapB}  and \ref{bootstrapB'}:

Top order derivatives: ($i+j=N+1$) 
   
  Mixed derivatives:
\begin{align*}
|u|^{2\gammat}\int_{\Cb_{\ub}}|u'|^{-1-2\gammat+2(N+1)}|  \Db^{N}D(\Omega\chih, \Omega\tr\chi,\Omega\chibh,\Omega\tr\chib, \Omega\eta,\Omega\etab, \omega)|^2\lesssim1  ,&\\
\int_{\Cb_{\ub}}\ub^{-2\gamma}|u|^{-1+2\gamma+2(N+1)}|\nablas^i \Db^j  (\Omega\chih, \Omega\tr\chi,\Omega\chibh,\Omega\tr\chib, \Omega\eta,\Omega\etab, \omega)|^2\lesssim1 ,&\  1\le i\le N,\\
\int_{\Cb_{\ub}}\ub^{-2\gamma}|u|^{-1+2\gamma+2(N+1)}|\nablas \Db^{N-1}D  ( \Omega\chibh,\Omega\tr\chib, \Omega\eta,\Omega\etab)|^2\lesssim1,&\ 
\end{align*}
  
 Top angular derivatives: 
\begin{align*}
\int_{C_u}\ub^{-1-2\gamma}|u|^{2\gamma+2(N+1)}|\nablas^{N+1}(\Omega^2\chih,\Omega\omega)|^2\lesssim1 ,\\
\int_{\Cb_{\ub}}\ub^{-2\gamma}|u|^{-1+2\gamma+2(N+1)}|\nablas^{N+1}(\Omega^2\etab,\Omega^2\chibh)|^2\lesssim 1,\\
\int_{\Cb_{\ub}}\ub^{-1-2\gamma}|u|^{2\gamma+2(N+1)}|\nablas^{N+1}(\Omega^2\eta)|^2\lesssim 1,\\
|u|^{-1}\|\nablas^{N+1}(\Omega^2\tr\chi,\Omega^2\tr\chib)\|_{L^2(S_{\ub,u})}\lesssim \ub^{\gamma}|u|^{-1-\gamma-(N+1)}, \\
\int_{C_u}\ub^{-1-2\gamma}|u|^{2\gamma+2(N+1)} |\nablas^{N+1}(\Omega^2\eta,\Omega^2\etab)|^2\lesssim1.
\end{align*}

For $\Ks$: ($i+j=N$)
\begin{align*}
\int_{\Cb_{\ub}}\ub^{-2\gamma}|u|^{-1+2\gamma+2N}|\nablas^i\Db^j(\Omega\Ks)|^2\lesssim1, &\ 1\le i,j\le N-1,\\
\int_{\Cb_{\ub}}\ub^{-2\gamma}|u|^{-1+2\gamma+2N}|\nablas^N(\Omega^2\Ks)|\lesssim1.
\end{align*}
\end{proposition}
\begin{proof}
Note first that by the Gauss equation, the estimates for $\Ks$ can be directly written down using estimates for $\rho$ and other lower order estimates. We then turn to other connection coefficients.  We still consider the top order angular derivatives first. The proof is by using the elliptic-transport systems. 
A general form of the elliptic estimate is the following
\begin{align*}
 \|\theta\|_{H^{N+1}(S_{\ub,u})}\lesssim \||u|(\divs\theta,\curls\theta)\|_{H^N(S_{\ub,u})}+(1+ |u|\|\Ks\|_{H^{N-1}(S_{\ub,u})}) \|\theta\|_{H^N(S_{\ub,u})}.
\end{align*}
provided that $|u|^{-1}\|\Ks\|_{L^2(S_{\ub,u})}\lesssim|u|^{-2}$, where $\theta$ is a tangential one-form or trace-free $(0,2)$ tensor. Its proof can also be found in \cite{Chr08}. Since we have \eqref{Gausscurvaturebound}, the above elliptic estimates hold in the following form
\begin{align}\label{elliptic1}
 \|\Omega\theta\|_{H^{N+1}(S_{\ub,u})}\lesssim \||u|(\Omega \divs\theta, \Omega \curls\theta)\|_{H^N(S_{\ub,u})}+\|\theta\|_{H^N(S_{\ub,u})},
\end{align}
where we have  used \eqref{derivativelapsetop}. 

For the spacetime Ricci tensor, from Proposition \ref{fluidestimates}, we have the following bounds:
\begin{equation}\label{connection-top-Riccibound}
\begin{split}\sup_{1\le i\le N+1}\int_{C_u}\ub^{-2-2\gammat}|u|^{3+2\gammat+2i}|\nablas^i(\Omega^2\mathbf{Ric}_{44})|^2\lesssim1,\\
\sup_{1\le i\le N+1}\ub^{-2-2\gammat}\int_{\Cb_{\ub}}|u|^{3+2\gammat+2i}|\nablas^i(\Omega^3\mathbf{Ric}_{33})|^2\lesssim1,\\
\sup_{1\le i\le N+1}\int_{C_u}\ub^{-2-2\gammat}|u|^{3+2\gammat+2i}|\nablas^i(\Omega^2\Ricsef,\Omega^2\mathbf{Ric}_{34})|^2\lesssim1,\\
\sup_{1\le i\le N+1}\ub^{-2-2\gammat} \int_{\Cb_{\ub}}|u'|^{3+2\gammat+2i}|\nablas^i(\Omega^2\Ricset,\Omega^2\mathbf{Ric}_{34})|^2\lesssim1.\end{split}
\end{equation}
\begin{remark}Because we focus on top order angular derivatives, we lose an $\Omega$ as compared to lower order terms. For simplicity we don't separate  top and lower order cases. Therefore in the second line we have $\Omega^3$ instead of $\Omega^2$ in the integral. Moreover, we will lose one more $\Omega$ for a top order bound written on $C_u$.
\end{remark}

For $\chih, \tr\chi$, we use \eqref{divchih} and \eqref{Dtrchi}
\begin{align*}
&\begin{dcases}\divs(\Omega\chih)&=\frac{1}{2}\Omega^2\ds\tr \chi'+\Omega\chih\cdot\etab+\frac{1}{2}\Omega\tr \chi\eta-\Omega(\beta-\frac{1}{2}\Ricsef),\\
D\tr\chi'&=-\frac{1}{2}(\Omega\tr\chi')^2-|\chih|^2-\mathbf{Ric}_{44}.
\end{dcases}
\end{align*}
Using the estimates in Proposition \ref{curvature} and \ref{improvebootA}, and \eqref{connection-low-Riccibound}
$$|u|^{-1}\|\Omega\chih\cdot\Omega\etab+\frac{1}{2}\Omega\tr \chi\cdot\Omega\eta+\frac{1}{2}\Omega^2\Ricsef\|_{H^N(S_{\ub,u})}\lesssim\ub|u|^{-3}.$$
The bootstrap assumptions \ref{bootstrapB} for $\tr\chi$ (and \eqref{derivativelapsetop}) gives
$$|u|^{-1}\|\Omega\cdot\Omega^2\ds\tr \chi'\|_{H^N(S_{\ub,u})}\lesssim|u|^{-1}(\|\ds(\Omega^2\tr\chi)\|_{H^N(S_{\ub,u})}+\|\ds(\Omega\tr\chi)\|_{H^{N-1}(S_{\ub,u})})\lesssim\varepsilon^{-\delta}\ub|u|^{-3}$$
together with the bound for $\beta$, when $\varepsilon$ is small enough,  we have 
$$\int_0^{\ub}\ub'^{-1-2\gamma}|u|^{2+2\gamma}\|\frac{1}{2}\Omega^3\ds\tr \chi'+\Omega\chih\cdot\Omega\etab+\frac{1}{2}\Omega\tr \chi\cdot\Omega\eta-\Omega^2(\beta-\frac{1}{2}\Ricsef)\|^2_{H^N(S_{\ub',u})}\D\ub'\lesssim1.$$
We will apply \eqref{elliptic1} for $\chih$, and then integrate over $\ub$ with suitable weight. But the lower order derivatives of $\chih$ in \eqref{chihbound} is not sufficient. For this, we appeal to the equation
$$\Dbh(\Omega\chih)-\frac{1}{2}\Omega\tr\chib\Omega\chih=\Omega^2(\nablas \tensor \eta + \eta \tensor \eta -\frac{1}{2}\tr\chi \chibh+\frac{1}{2}\widehat{\Rics}),$$
and apply \eqref{Gronwallu} to $\Omega\chih$ for $s=2, \nu=-1$. Using the second line of the bootstrap assumption \ref{bootstrapB'} for $\eta$, together with the lower order estimates, we have
$$\|\Omega\chih\|_{H^N(S_{\ub,u})}\lesssim \|\Omega\chih\|_{H^N(S_{\ub,u_0})}+\varepsilon^{-\delta}\ub^{\frac{1}{2}+\gamma}|u|^{-\frac{1}{2}-\gamma},$$
if $\varepsilon$ is small enough, we will have
\begin{equation*}\int_0^{\ub}\ub'^{-1-2\gamma}|u|^{2\gamma}\|\Omega\chih\|^2_{H^{N}(S_{\ub',u})}\D\ub'\lesssim1.\end{equation*}
By applying \eqref{elliptic1}, we have the desired bound
\begin{equation}\label{chihtopbound}\int_0^{\ub}\ub'^{-1-2\gamma}|u|^{2\gamma}\|\Omega^2\chih\|^2_{H^{N+1}(S_{\ub',u})}\D\ub'\lesssim1.\end{equation}
Commute $\nablas$ with the equation for $\tr\chi'$, we have
$$D\nablas(\tr\chi')=-\Omega\tr\chi'\nablas(\Omega\tr\chi')-2\chih\cdot\nablas\chih-\nablas\mathbf{Ric}_{44}.$$
The bootstrap assumption \ref{bootstrapB} for $\tr\chi$ (together with its lower order estimates and \eqref{derivativelapsetop}) gives
$$|u|^{-1}\|\Omega\tr\chi'\nablas(\Omega\tr\chi')\|_{H^N(S_{\ub,u})}\lesssim \Omega^{-3}\varepsilon^{-\delta}\ub^{\gamma}|u|^{-3-\gamma},$$
By \eqref{chihtopbound} we just derived, we have
\begin{align*}&\int_0^{\ub}|u|^{-1}\|\chih\cdot\nablas\chih\|_{H^N(S_{\ub,u})}\\\lesssim& \Omega^{-3}\int_0^{\ub}|u|^{-3}\|(|u|\nablas)(\Omega^2\chih)\|_{H^N(S_{\ub',u})}\D\ub'\\
\lesssim&\Omega^{-3}\left(\int_0^{\ub}\ub'^{1+2\gamma}|u|^{-6-2\gamma}\D\ub'\right)^{\frac{1}{2}}\left(\int_0^{\ub}\ub'^{-1-2\gamma}|u|^{2\gamma}\|\Omega^2\chih\|^2_{H^{N+1}(S_{\ub',u})}\D\ub'\right)^{\frac{1}{2}}\\
\lesssim&\Omega^{-3}\ub^{1+\gamma}|u|^{-3-\gamma}.\end{align*}
By \eqref{connection-top-Riccibound}, we have, in a similar way,
\begin{align*}&\int_0^{\ub}|u|^{-1}\|\nablas\mathbf{Ric}_{44}\|_{H^N(S_{\ub',u})}\D\ub'\lesssim\Omega^{-3}\ub^{1+\gamma}|u|^{-3-\gamma}.\end{align*}
Applying \eqref{Gronwallub} for $\nablas\tr\chi'$, for $\varepsilon$ sufficiently small, we have
$$|u|^{-1}\|\nablas\tr\chi'\|_{H^N(S_{\ub,u})}\lesssim\Omega^{-3}\ub^{1+\gamma}|u|^{-3-\gamma},$$
and hence
$$|u|^{-1}\|\nablas(\Omega^2\tr\chi)\|_{H^N(S_{\ub,u})}\lesssim\ub^{1+\gamma}|u|^{-3-\gamma},$$
which is the desired bound.

We turn to $\chibh$ and $\tr\chib$, which are based on the equations \eqref{divchibh} and \eqref{Dbtrchib}:
\begin{align*}
&\begin{dcases}\divs(\Omega\chibh)&=\frac{1}{2}\Omega^2\ds\tr \chib'+\Omega\chibh\cdot\eta+\frac{1}{2}\Omega\tr \chib\etab+\Omega(\betab+\frac{1}{2}\Ricset),\\
\Db\tr\chib'&=-\frac{1}{2}(\Omega\tr\chib')^2-|\chibh|^2-\mathbf{Ric}_{33},
\end{dcases}
\end{align*}
We introduce an auxiliary bootstrap assumption for $\nablas^{N+1}\tr\chib$ improving that in \ref{bootstrapB}:
\begin{equation}\label{auxiboottrchib}|u|^{-1}\|\nablas^{N+1}(\Omega^2\tr\chib)\|_{L^2(S_{\ub,u})}\le \ub^{1+\gammat}|u|^{-2-\gammat-(N+1)}\varepsilon^{-\delta}.\end{equation}
Applying \eqref{elliptic1} to the equation $\divs(\Omega\chibh)$, using \eqref{auxiboottrchib} (together with the lower order bounds), and then integrating along $u$, for $\varepsilon$ small enough, we have the desired estimate
\begin{equation}\label{chibhtopbound}\ub^{-2\gamma}\int_{u_0}^u|u'|^{-1+2\gamma}\|\Omega^2\chibh\|_{H^{N+1}(S_{\ub,u'})}^2\D u'\lesssim 1.\end{equation}

The estimate for top order derivatives of $\tr\chib$ is the most crucial estimate in the whole proof. Commute $\nablas$ with the second equation, we have 
$$\Db\nablas(\Omega\tr\chib)=-\Omega\tr\chib\nablas(\Omega\tr\chib)+2\nablas(\omegab\Omega\tr\chib)-2\Omega\chibh\cdot\nablas(\Omega\chibh)-\nablas(\Omega^2\mathbf{Ric}_{33}).$$
We will apply Gronwall \eqref{Gronwallu} to $\nablas(\Omega\tr\chib)$ with $s=1, \nu=2$. The most crucial term is  $\nablas(\omegab\Omega\tr\chib)$. After commuting $\nablas^N$ to the equation, the term $\omegab\nablas^{N+1}(\Omega\tr\chib)$ can be omitted because of the negativity of $\omegab$. So we only need to estimate
\begin{align*}&\int_{u_0}^u\||u'|^2\Omega\tr\chib\nablas\omegab\|_{H^N(S_{\ub,u'})}\D u'\\
\lesssim&\Omega^{-1}\left(\int_{u_0}^u\ub^{2+2\gamma}|u'|^{-1-2\gamma}\D u'\right)^{\frac{1}{2}}\left(\int_{u_0}^u\ub^{-2-2\gamma}|u'|^{5+2\gamma}\|\Omega\tr\chib\nablas(\Omega\omegab)\|^2_{H^N(S_{\ub,u'})}\D u'\right)^{\frac{1}{2}}\\
\lesssim&\Omega^{-1}\ub^{1+\gamma}|u|^{-\gamma}\underline{\mathcal{O}}[\omegab],
\end{align*}
where we have used the lower order estimates for $\omegab$ and $\underline{\mathcal{O}}[\omegab]$ is defined to be
$$\underline{\mathcal{O}}[\omegab]=\sup_{\ub}\int_{\Cb_{\ub}}\ub^{-2-2\gamma}|u|^{1+2\gamma+2(N+1)}|\nablas^{N+1}(\Omega\omegab)|^2.$$
The other terms can be estimated in similar way
\begin{align*}&\int_{u_0}^u\||u'|^2\Omega\chibh\cdot\nablas(\Omega\chibh)\|_{H^N(S_{\ub,u'})}\D u'\\
\lesssim&\Omega^{-1}\left(\int_{u_0}^u\ub^{2+2\gamma}|u'|^{-1-2\gamma}\D u'\right)^{\frac{1}{2}}\left(\int_{u_0}^u\ub^{-2}|u'|^{2}\|\Omega\chibh\|^2_{H^N(\ub,u')}\cdot\ub^{-2\gamma}|u'|^{-1+2\gamma}\|\Omega^2\chibh\|^2_{H^{N+1}(S_{\ub,u'})}\D u'\right)^{\frac{1}{2}}\\
\lesssim&\Omega^{-1}\ub^{1+\gamma}|u|^{-\gamma},
\end{align*}
where we have used \eqref{chibhtopbound}. By \eqref{connection-top-Riccibound},
\begin{align*}&\int_{u_0}^u\||u'|^2\nablas(\Omega^2\mathbf{Ric}_{33})\|_{H^N(S_{\ub,u'})}\D u'\\
\lesssim&\Omega^{-1}\left(\int_{u_0}^u\ub^{2+2\gammat}|u'|^{-1-2\gammat}\D u'\right)^{\frac{1}{2}}\left(\int_{u_0}^u\ub^{-2-2\gammat}|u'|^{3+2\gammat}\|(|u|\nablas)(\Omega^2\mathbf{Ric}_{33})\|^2_{H^N(S_{\ub,u'})}\D u'\right)^{\frac{1}{2}}\\
\lesssim&\Omega^{-1}\ub^{1+\gammat}|u|^{-\gammat}.
\end{align*}
 Here $\gammat>0$ in the power is crucial for the estimates to avoid logarithm loss. Combining all the estimates above, and if $\varepsilon$ is small enough, we have the  bound (recalling that $\gamma>\gammat$),
\begin{equation}\label{trchibtopbound}|u|^{-1}\|(|u|\nablas)(\Omega^2\tr\chib)\|_{H^N(S_{\ub,u})}\lesssim \ub^{1+\gammat}|u|^{-2-\gammat}\underline{\mathcal{O}}[\omegab].\end{equation}
We will show $\underline{\mathcal{O}}[\omegab]\lesssim 1$ below and hence this is the desired bound, improving the auxiliary bootstrap assumption \eqref{auxiboottrchib} and improve bootstrap assumption \ref{bootstrapB} as $1+\gammat>\gamma$.

We then turn to $\eta$. We introduce $\mu=\Ks-\frac{1}{|u|^2}-\divs\eta$ and consider the following system:
\begin{align*}
&\begin{dcases}\divs\eta=\Ks-\frac{1}{|u|^2}-\mu=-\widetilde{\rho+\frac{1}{6}\mathbf{R}}-\frac{1}{4}\widetilde{\tr\chi\tr\chib}+\frac{1}{2}(\chih,\chibh)+\frac{1}{2}\widetilde{\tr\Rics}-\mu,\\
\curls\eta=\sigma-\frac{1}{2}\chih\wedge\chibh,\\
D\mu+\Omega\tr\chi\mu=-\Omega\tr\chi\frac{1}{|u|^2}+\divs(2\Omega\chih\cdot\eta-\Omega\tr\chi\etab)\\\phantom{D\mu+\Omega\tr\chi\mu=}+\nablas(\Omega\Ricsef)\end{dcases}
\end{align*}
By applying \eqref{Gronwallub}, the bootstrap assumptions \ref{bootstrapB} for $\chih$, $\tr\chi$ and the first line of \ref{bootstrapB'} for $\eta$ and $\etab$, the lower order bounds, and \eqref{connection-top-Riccibound} we have, when $\varepsilon$ is small enough,
$$|u|^{-1}\|\Omega^2\mu\|_{H^N(S_{\ub,u})}\lesssim\ub|u|^{-3}, $$
Applying \eqref{elliptic1}, we have
\begin{equation*} \int_0^{\ub}\ub'^{-2-2\gamma}|u|^{1+2\gamma}\|\Omega^2\eta\|^2_{H^{N+1}(S_{\ub',u})}\D\ub'\lesssim1\end{equation*}
and 
\begin{equation}\label{etatopboundCu}\ub^{-1-2\gamma}\int_{u_0}^{u}|u'|^{2\gamma}\|\Omega^2\eta\|^2_{H^{N+1}(S_{\ub,u'})}\D u'\lesssim1\end{equation}
improving \ref{bootstrapB'}. 
For $\etab$,  we introduce $\mub=\Ks-\frac{1}{|u|^2}-\divs\etab$ and consider the following system:
\begin{align*}
&\begin{dcases}
\divs\etab=\Ks-\frac{1}{|u|^2}-\mub=-\widetilde{\rho+\frac{1}{6}\mathbf{R}}-\frac{1}{4}\widetilde{\tr\chi\tr\chib}+\frac{1}{2}(\chih,\chibh)+\frac{1}{2}\widetilde{\tr\Rics}-\mub,\\
\curls\etab=-\sigma+\frac{1}{2}\chih\wedge\chibh,\\
\Db\mub+\Omega\tr\chib\mub=-(\Omega\tr\chib+\frac{2}{|u|})\frac{1}{|u|^2}+\divs(2\Omega\chibh\cdot\etab-\Omega\tr\chib\eta)\\
\phantom{\Db\mub+\Omega\tr\chib\mub=}+\nablas^A(\Omega\Ricset)\end{dcases},
\end{align*}
We apply \eqref{Gronwallu} to $\mub$ with $s=0, \nu=2$. For top order derivatives of $\chibh$, we use \eqref{chibhtopbound}, and for $\tr\chib$, we use the bootstrap assumption \eqref{auxiboottrchib}, together with the lower order bounds, when $\varepsilon$ is small enough, we have
\begin{align*}\||u|\Omega^2\mub\|_{H^N(S_{\ub,u})}\lesssim&\||u_0|\Omega^2\mub\|_{H^N(S_{\ub,u_0})}+\ub|u|^{-1}+ \ub^{\gamma}|u|^{-\gamma}\cdot\ub|u|^{-1}\\
&+\int_{u_0}^u\|\nablas(\Omega^2\eta)\|_{H^{N}(\ub,u')}\D u'+\int_{u_0}^u\ub|u'|^{-1}\|\nablas(\Omega^2\etab)\|_{H^{N}(\ub,u')}\D u'\\&+\int_{u_0}^u\||u'|\nablas(\Omega^2\Ricset)\|_{H^{N}(\ub,u')}\D u'\\
\lesssim&\ub^{\frac{1}{2}+\gamma}|u|^{-\frac{1}{2}-\gamma}
\end{align*}
where we have used \eqref{etatopboundCu} for $\eta$,  the bootstrap assumption \ref{bootstrapB} for $\etab$, and \eqref{connection-top-Riccibound}.
Applying \eqref{elliptic1} to $\etab$, we obtain the desired estimate
\begin{equation*}\int_0^{\ub}\ub'^{-1-2\gamma}|u|^{2\gamma}\|\Omega^2\etab\|^2_{H^{N+1}(S_{\ub',u})}\D\ub'\lesssim1\end{equation*}
improving  \ref{bootstrapB'} and
\begin{equation*}\int_{u_0}^{u}\ub'^{-2\gamma}|u|^{-1+2\gamma}\|\Omega^2\etab\|^2_{H^{N+1}(S_{\ub',u})}\D\ub'\lesssim1\end{equation*}
improving \ref{bootstrapB}.

We turn to $\omegab$. We introduce $\omegabs=\Deltas\omegab-\divs(\Omega\betab)$ and consider the following system:
\begin{align*}
&\begin{dcases}\Deltas\omegab=\omegabs+\divs(\Omega\betab),\\
\phantom{=}D\omegabs+\Omega\tr\chi\omegabs=-2\Omega\chih\cdot\nablas\nablas\omegab-2\divs(\Omega\chih)\cdot\nablas\omegab\\\phantom{=} +\Deltas\left(\Omega^2(2(\eta,\etab)-|\etab|^2-\frac{1}{2}\mathbf{Ric}_{34})\right)-\Deltas(\Omega^2)\rho -\nablas(\Omega^2)\cdot\nablas\rho +\nablas(\Omega^2)\cdot^*\nablas\sigma\\
\phantom{=}-\divs \Omega\left(-\frac{1}{2}\Omega\tr\chi\betab+\Omega\chih \cdot \betab-\Omega\{3\etab\rho -3{}^*\etab\sigma-2\chibh\cdot\beta\}+2\Omega \Ib \right)\\
\phantom{=}-\Omega\tr\chi\divs(\Omega\betab)-2\divs(\Omega\chih\cdot(\Omega\betab)). \end{dcases}
\end{align*}
Observe that the right hand side of the second equation, is estimated by
$$\int_{[0,\ub]\times[u_0,u]}\ub'^{-1-2\gamma}|u'|^{5+2\gamma}\|\Omega(D\omegabs+\Omega\tr\chi\omegabs)\|_{H^{N-1}(\ub',u')}^2\D\ub' \D u'\lesssim1$$
for $\varepsilon$ small enough. We have used the bootstrap assumptions \ref{bootstrapB} for $\omegab$ and \ref{bootstrapB'}  for $\eta$ and $\etab$, the lower order estimates, the curvature estimates \eqref{curvaturebound},  \eqref{connection-top-Riccibound} for $\mathbf{Ric}_{34}$ and \eqref{curvature-weylcurrentbound} for $\Ib$. By applying \eqref{Gronwallub} for $\omegabs$, absorbing the term $\Omega\tr\chi\omegabs$ squaring and integrating over $[u_0,u]$, we have
\begin{align*}&\int_{u_0}^u\ub^{-2-2\gamma}|u'|^{5+2\gamma}\|\Omega\omegabs\|_{H^{N-1}(\ub,u')}^2\D u'\\
\lesssim&\int_{u_0}^u\ub^{-2-2\gamma}|u'|^{5+2\gamma}\left(\int_{0}^{\ub}\|\Omega(D\omegabs+\Omega\tr\chi\omegabs)\|_{H^{N-1}(\ub',u')}\D\ub'\right)^2\D u'\\
\lesssim&\int_{[0,\ub]\times[u_0,u]} \ub'^{-1-2\gamma}|u'|^{5+2\gamma}\|\Omega(D\omegabs+\Omega\tr\chi\omegabs)\|_{H^{N-1}(\ub',u')}^2\D\ub' \D u'\\
\lesssim&1.
\end{align*}
The desired estimate for $\omegab$ follows from applying \eqref{elliptic1}. Then $\underline{\mathcal{O}}[\omegab]\lesssim1$ and the estimate for $\Omega\tr\chib$ in \eqref{trchibtopbound} is done. Similar to the case of $\chih$, the lower order estimates of $\omegab$ is not sufficient. But one can more carefully check the equations for $D\omegab$ to see that $\nablas\omegab$ (in order to apply \eqref{elliptic1}, $\omegab$ itself is not needed) obeys a better estimate 
$$|u|^{-1}\|(|u|\nablas)\omegab\|_{H^{N-1}(S_{\ub,u})}\lesssim\ub^{\frac{3}{2}}|u|^{-\frac{5}{2}}+|u|^{-1}\int_0^{\ub}\|(|u|\nablas)(\Omega^2\mathbf{Ric}_{34})\|_{H^{N-1}(S_{\ub',u})}\D \ub'.$$
Then 
$$\ub^{-2-2\gamma}\int_{u_0}^u |u'|^{1+2\gamma} \|(|u'|\nablas)(\Omega\omegab)\|^2_{H^{N-1}(S_{\ub,u'})}\D u\lesssim1$$
which is sufficient. 

The last thing is $\omega$. We introduce $\omegas=\Deltas\omega+\divs(\Omega\beta)$ and consider the following system:
\begin{align*}
&\begin{dcases}\Deltas\omega=\omegas-\divs(\Omega\beta),\\
\phantom{=}\Db\omegas+\Omega\tr\chib\omegas=-2\Omega\chibh\cdot\nablas\nablas\omega-2\divs(\Omega\chibh)\cdot\nablas\omega\\\phantom{=}+\Deltas \left(\Omega^2(2(\eta,\etab)-|\eta|^2-\frac{1}{2}\mathbf{Ric}_{34})\right)-\Deltas(\Omega^2)\rho-\nablas(\Omega^2)\cdot\nablas\rho +\nablas(\Omega^2)\cdot^*\nablas\sigma\\
\phantom{=}+\divs \Omega\left(-\frac{1}{2}\Omega\tr\chib\beta+\Omega\chibh \cdot \beta+\Omega\{3\eta\rho +3{}^*\eta\sigma+2\chih\cdot\betab\}-2\Omega I \right)\\
\phantom{=}+\Omega\tr\chib\divs(\Omega\beta)+2\divs(\Omega\chibh\cdot(\Omega\beta)). \end{dcases}\\
\end{align*}
A direct computation shows similarly (using in addition \ref{bootstrapB} for $\omega$) that
$$\int_{[0,\ub]\times[u_0,u]}\ub'^{-1-2\gamma}|u'|^{5+2\gamma}\|\Omega(\Db\omegas+\Omega\tr\chib\omegas)\|_{H^{N-1}(\ub',u')}^2\D\ub' \D u'\lesssim1$$
when $\varepsilon$ is small enough. Applying \eqref{Gronwallu} to $\omegas$  with $s=0, \nu=2$, squaring and integrating over $[u_0,u]$, we have
\begin{align*}&\int_{0}^{\ub}\ub'^{-1-2\gamma}|u|^{2+2\gamma}\||u|\Omega\omegas\|_{H^{N-1}(\ub',u)}^2\D \ub'\\
\lesssim&1+\int_{0}^{\ub}\ub'^{-1-2\gamma}|u|^{2+2\gamma}\left(\int_{u_0}^{u}\||u'|\Omega(\Db\omegas+\Omega\tr\chib\omegas)\|_{H^{N-1}(\ub',u')}\D u'\right)^2\D \ub'\\
\lesssim &1+\int_{[0,\ub]\times[u_0,u]}\ub'^{-1-2\gamma}|u'|^{5+2\gamma}\|\Omega(\Db\omegas+\Omega\tr\chib\omegas)\|_{H^{N-1}(\ub',u')}^2\D\ub' \D u'\\
\lesssim&1.
\end{align*}
The desired estimate for $\omega$ can be obtained by applying \eqref{elliptic1} and improved lower order estimates for $\omega$, as done in the cases of $\chih$ and $\omegab$.

We turn to mixed derivatives $\nablas^i\Db^j$. We should first obtain estimates for $\nablas^i\Db^j\omegab$ for $i+j=N+1$ and $j\le N-1$ and hence $i\ge2$. We commute $\Db^j$ to the equations $\Deltas\omegab$ used before, 
\begin{align*}
&\begin{dcases}\Deltas\Db^j\omegab=\Db^j\omegabs+\divs\Db^j(\Omega\betab)+\text{commutators},\\
\phantom{=}D\Db^j\omegabs+\sum_{j_1+j_2=j}\Db^{j_1}\Omega\tr\chi\Db^{j_2}\omegabs=\Db^j\underline{m}+[D,\Db^j]\omegas \end{dcases}
\end{align*}
where
\begin{align*}
\underline{m}=&-2\Omega\chih\cdot\nablas\nablas\omegab-2\divs(\Omega\chih)\cdot\nablas\omegab \\\phantom{=} +&\Deltas\left(\Omega^2(2(\eta,\etab)-|\etab|^2-\frac{1}{2}\mathbf{Ric}_{34})\right)-\Deltas(\Omega^2)\rho -\nablas(\Omega^2)\cdot\nablas\rho +\nablas(\Omega^2)\cdot^*\nablas\sigma\\
\phantom{=}&-\divs \Omega\left(-\frac{1}{2}\Omega\tr\chi\betab+\Omega\chih \cdot \betab-\Omega\{3\etab\rho -3{}^*\etab\sigma-2\chibh\cdot\beta\}+2\Omega \Ib \right)\\
 \phantom{=}&-\Omega\tr\chi\divs(\Omega\betab)-2\divs(\Omega\chih\cdot(\Omega\betab)).
\end{align*}
As in deriving estimates for $\nablas^{N+1}\omegab$, to estimate $\nablas^i\Db^j\omegab$, we integrate the transport equation and apply elliptic estimate. The top order terms we encounter are the following:
$$\nablas^{i-1}\Db^j(\betab, \rho,\sigma,\beta,\Ib), \nablas^i\Db^j \mathbf{Ric}_{34},  \nablas^i\Db^j(\eta,\etab).$$
The first and second groups of terms are simply bounded using previously derived bounds, especially \eqref{curvaturenablasDbalphab}, which has not been used before.  Note that we express $\Db^j(\beta,\rho,\sigma)$ in terms of angular derivatives of the other curvature components and $\nablas$ and $\Db$ mixed derivatives of $\alphab$, and fluid terms (the Weyl current).  For $\nablas^i\Db^j(\eta,\etab)$, we instead use the bootstrap assumptions \ref{bootstrapB}. Finally, we will have 
\begin{equation}\label{nablasiDbjtopomegab}\sup_{\ub}\int_{\Cb_{\ub}}\ub^{-2-2\gamma}|u|^{1+2\gamma+2(N+1)}|\nablas^i\Db^j\omegab|^2\lesssim1,\ i+j=N+1, j\le N-1\end{equation}
and the details will be omitted.  We remark that it is not possible to derive estimate for $\nablas\Db^N\omegab$ and $\Db^{N+1}\omegab$ so we only work for $j\le N-1$. 

We turn to the other connection coefficients. Their estimates can be seen directly from the corresponding $\Db$ equations. Module the lower order terms and matter field terms, we have the following relations: 
$$\nablas^i\Db^j(\Omega\chih, \Omega\tr\chi)\simeq \Omega\nablas^{i+1}\Db^{j-1}(\Omega\eta)\simeq\Omega^2\nablas^{i+2}\Db^{j-2}\omegab +\Omega\nablas^{i+1}\Db^{j-2}(\Omega^2\betab),$$
$$\nablas^i\Db^j(\Omega\chibh)\simeq \nablas^i\Db^{j-1}(\Omega^2\alphab),$$
$$\nablas^i\Db^j(\Omega\eta)\simeq\Omega\nablas^{i+1}\Db^{j-1}\omegab +\Omega\nablas^{i}\Db^{j-1}(\Omega^2\betab),$$
$$\nablas^i\Db^j(\Omega\etab)\simeq \Omega\nablas^{i}\Db^{j-1}(\Omega^2\betab),$$
$$\nablas^i\Db^j\omega\simeq \Omega\nablas^{i}\Db^{j-1}(\Omega^2\rho),$$
and $\nablas^i\Db^j(\Omega\tr\chib)$ is of lower order. We can see that no estimates can be established for $ \Db^{N+1}\eta$.  For $\Db^ND$ derivatives, we use instead the following relations:
$$\Db^ND(\Omega\chih)\simeq \Db^N(\Omega^2\alpha),$$
$$\Db^ND(\Omega\chibh)\simeq \Omega\nablas\Db^N \etab\simeq \Omega\nablas\Db^{N-1}(\Omega^2\betab),$$
$$\Db^ND(\Omega\eta)\simeq\Db^N(\Omega^2\beta),$$
$$\Db^ND(\Omega\etab)\simeq \nablas\Db^N\omega+ \Db^N(\Omega^2\beta)\simeq\nablas\Db^{N-1}(\Omega^2\rho)+\Db^N(\Omega^2\beta),$$
$$\Db^ND\omega\simeq \Db^{N-1}D(\Omega^2\rho)\simeq \Omega\nablas\Db^{N-1}(\Omega^2\beta),$$
and $\Db^{N+1}D(\Omega\tr\chi, \Omega\tr\chib)$ is of lower order. $\Db^N(\Omega^2\alpha)$ can also be expressed in terms of angular derivatives of the other curvature components and $\nablas^i\Db^j$ derivatives of $\alphab$. The $\nablas\Db^{N-1}D$ estimates can be derived similarly.

\end{proof}

\section{The trapped surface formation}\label{sec:trapped}

Finally we are ready to prove Theorem \ref{main} based on the estimates established in Theorem \ref{existencetheorem}, Propositions \ref{fluidestimates}, \ref{curvature}, \ref{improvebootA}, \ref{improvebootB}. Recall that we have the estimate \eqref{Omegaupperlower} for $\Omega_0$ along $\Cb_0$. We only need the upper bound
\begin{equation}\label{Omegaupper}
\Omega_0^2(u)\le c_\beta|u|^\beta.
\end{equation}
Here the positive constant $\beta>0$ should not be confused with the component $\beta$ of the Weyl curvature. Rewrite the equation \ref{Dbchih} as, omitting the coefficients,
\begin{equation}\label{partialuchih}\frac{\partial}{\partial u}(|u|^2|\Omega\chih|^2)=|u|^2\left(\Omega^2\nablas \tensor \eta + \Omega\eta \tensor \Omega\eta +\widetilde{\Omega\tr\chib}\Omega\chih+\Omega\tr\chi \Omega\chibh+\Omega^2\widehat{\Rics}-\nablas_b(\Omega\chih)\right)\cdot\Omega\chih,\end{equation}
where $b$ is a tangental vectorfield in \eqref{doublenullmetric}, verifying the equation
$$Db=-4\Omega^2\zeta^\sharp.$$
We can assign that $b=0$ on $\Cb_0$  in constructing the double null coordinates, then integrating the above equation we have
$$|b|\lesssim \Omega\ub^2|u|^{-2}.$$
Integrating equation \eqref{partialuchih}, employing the estimates in Propositions \ref{fluidestimates}, \ref{improvebootA}, and estimate for $b$ above and Sobolev inequalities because we work in $L^\infty$, we have
$$\left||u|^2||\Omega\chih|^2(\ub,u,\vartheta)-|u_0|^2|\Omega\chih|^2(\ub, u_0,\vartheta)\right|\lesssim \ub|u|^{-1}.$$
and hence
\begin{equation}\label{energydifference}\left|\int_0^{\ub}|u|^2||\Omega\chih|^2(\ub',u,\vartheta)\D\ub'-\int_0^{\ub}|u_0|^2|\Omega\chih|^2(\ub', u_0,\vartheta)\D\ub'\right|\le  c \ub^2|u|^{-2}\cdot|u|\end{equation}
where $c$ is a constant depending on $A, \kappa, u_0,p$ in Theorem \ref{existencetheorem}. Assume that the perturbations $\chih_t(\ub,u_0,\vartheta)$ takes the form $t\ub^p$, or writing in terms of a condition on the  energy
\begin{equation}\label{lowerboundcondition}\int_0^{\ub}|u_0|^2|\Omega\chih_t|^2(\ub', u_0,\vartheta)\D\ub'\ge t^2\ub^{2p+1},\end{equation}
which admits smooth $\chih_t$ verifying the assumptions of Theorem \ref{existencetheorem}. By \eqref{energydifference} we have
$$\int_0^{\ub}|u|^2|\Omega\chih_t|^2(\ub', u,\vartheta)\D\ub'\ge t^2\ub^{2p+1}-c\varepsilon^2|u|,$$
On the other hand, in order that a trapped surface can form, we look at the equation \eqref{Dtrchi}. Integrating this equation along $C_u$ we have
$$\tr\chi'_t(\ub,u,\vartheta)\le \tr\chi'(0,u,\vartheta)-\int_0^{\ub}|\chih_t|^2(\ub',u,\vartheta)\D\ub'$$
where we have dropped other positive terms on the right hand side of \eqref{Dtrchi}. By \eqref{lapse} we have
\begin{align*}\int_0^{\ub}|\chih_t|^2(\ub',u,\vartheta)\D\ub'\ge& \frac{1}{4}\Omega_0^{-2}(u)|u|^{-2}\int_0^{\ub}|u|^2|\Omega\chih_t|^2(\ub',u,\vartheta)\D\ub'\\
\ge&\frac{1}{4}c_\beta^{-1}|u|^{-(\beta+2)}\left(t^2\ub^{2p+1}-c\ub^2|u|^{-2}\cdot|u|\right).\end{align*}
In order that $\tr\chi'_t(\ub,u,\vartheta)<0$, in view of initially $\tr\chi'(0,u,\vartheta)\le A|u|^{-1}$, we need
$$\frac{1}{4}c_\beta^{-1}|u|^{-\beta}\left(t^2\ub^{2p}\cdot\ub|u|^{-1}-c\ub^2|u|^{-2}\right)> A.$$
Now take $p,q$ such that $0<p<q< \frac{1}{2}\frac{\beta}{2+\beta}$. For each $t\ne0$ (and sufficiently small), we choose $\ub_{1,t}>0, u_{1,t}\in(u_0,0)$ so that
$$\ub_{1,t}|u_{1,t}|^{-1}=\ub_{1,t}^{2q},$$
then if $\ub_{1,t}$ is sufficiently small, 
\begin{align*}
&4^{-1}c_\beta^{-1}|u_{1,t}|^{-\beta}\left(t^2\ub_{1,t}^{2p}\cdot\ub_{1,t}|u_{1,t}|^{-1}-c\ub_{1,t}^2|u_{1,t}|^{-2}\right)\\
\ge& 4^{-1}c_\beta^{-1}\ub_{1,t}^{-\beta(1-2q)}\left(t^2\ub_{1,t}^{2p}\cdot\ub_{1,t}^{2q}-c\ub_{1,t}^{2q}\cdot\ub_{1,t}^{2q}\right)\\
\ge& 8^{-1}c_\beta^{-1}\ub_{1,t}^{-\beta(1-2q)}\cdot t^2\ub_{1,t}^{4q}\\
=&8^{-1}c_\beta^{-1}t^2\ub_{1,t}^{-\beta+2(2+\beta)q}
\end{align*}
The power $-\beta+2(2+\beta)q<0$ so we can choose $\ub_{1,t}$ small enough so that the above quantity is larger than $A$, and hence $\tr\chi'_t(\ub,u,\vartheta)<0$. $\Omega\tr\chib(\ub,u,\vartheta)<0$ follows directly from  \eqref{trchib<0}. We conclude that, $S_{\ub_{1,t}, u_{1,t}}$ is trapped.
 
 It is clear that we can find smooth $\chih_t(\ub,u_0,\vartheta)$ verifying \eqref{lowerboundcondition} so that $\chi_t\to0$ in $C^p_{\ub}$.
 
\appendix

\section{}\label{appendix}

\subsection{Double null foliation}
	We list all equations we need, which are written in double null frames.  The notation and setup are taken  from  \cite{Li-Li26, Li-Liu22}, following Christodoulou's monograph \cite{Chr08}, which may be slightly different from those in other literature.  We assume the solution $(\mathcal{M},g)$ is foliated by two optical functions, $u$ and $\ub$, that is
 $$g(\nabla u,\nabla u)=g(\nabla\ub,\nabla\ub)=0.$$
 In addition, we require that $u$ and $\ub$  increase towards the future. We use $C_u$ to denote the  null hypersurfaces that are the level sets of $u$ and use ${\Cb}_{\ub}$ to denote the incoming null hypersurfaces that are the level sets of $\ub$. We denote the intersection $S_{\ub,u}=\Cb_{\ub} \cap C_u$, which is a  space-like two-sphere. The space-time metric $g$ induces a Riemannian metric $\gs$ on $S_{\ub,u}$ and $\epsilons$ is the volume form of $\gs$ on $S_{\ub,u}$. We use   $\nablas$ to denote the  covariant derivative (with respect to $\gs$) on $S_{\ub,u}$ and $\Ks$ is the Gauss curvature of $S_{\ub,u}$.

The lapse function $\Omega$ is defined by the formula
$$ \Omega^{-2}=-2g(\nabla\ub,\nabla u).$$
  We then define the normalized null pair $(\Lbh, \Lh)$ by
  $$e_3=\Lbh=-2\Omega\nabla\ub,\ e_4=\Lh=-2\Omega\nabla u.$$
  We also define one another null pair
  $$\Lb=\Omega \Lbh,\ L=\Omega \Lh.$$
 The flows generated by $\Lb$ and $L$ preserve the double null foliation. Moreover, we define
 $$\Lb'=-2\nabla\ub=\Omega^{-1}\Lbh,\ \Lb'=-2\nabla u=\Omega^{-1}\Lh$$
 which is geodesic. 

We can introduce the double null coordinate system $(\ub,u,\vartheta^A)$ on $\mathcal{M}$, where $A,B,C,\cdots$ are denoted an index from $1$ to $2$. In such a coordinate system, the Lorentzian metric $g$ takes the following form
\begin{align}\label{doublenullmetric}
g=-2\Omega^2(\D\ub\otimes\D u+\D u\otimes\D \ub)+\gs_{AB}(\D\vartheta^A-b^A\D u)\otimes(\D\vartheta^B-b^B\D u),
\end{align}
where the vectorfield $b=b^A\partial_{\vartheta^A}$ is tangent to $S_{\ub,u}$. The null vectors $\Lb$ and $L$ can be computed as $\Lb=\partial_u+b^A\partial_{\vartheta^A}$ and $L=\partial_{\ub}$.

We call $\psi$ to be a tangential tensorfield if $\psi$ is \textit{a priori} a tensorfield defined on the space-time $\mathcal{M}$ and all the possible contractions of $\psi$ with either $\Lbh$ or $\Lh$ are zeros. By choosing a tangential frame $(e_1,e_2)$, which is tangent to $S_{\ub,u}$, each tangential tensorfield can be express as, for example $\psi_A, \psi_{AB}, \psi^A,\cdots$. We use $D\psi$ and $\Db\psi$ to denote the projection to $S_{\ub,u}$ of usual Lie derivatives $\mathcal{L}_L\psi$ and $\mathcal{L}_{\Lb}\psi$, which are again tangential tensorfields. The notation $\nablas\psi$ also makes sense when $\psi$ is tangential. We also denote $\nablas_X\psi$ to be the projection of the spacetime covariant derivative $\nabla_X\psi$ to $S_{\ub,u}$, where $X$ is not necessarily a tangential vectorfield.

Using the null frame $(e_1,e_2,\Lbh,\Lh)$, the connection coefficients can be decomposed as the following tangential tensorfields:
\begin{align*}
\chi_{AB}&=g(\nabla_A\Lh,e_B),\quad \eta_A=-\frac{1}{2}g(\nabla_{\Lbh}e_A,\Lh),\quad \omega=\frac{1}{2}\Omega g(\nabla_{\Lh}\Lbh,\Lh),\\
\chib_{AB}&=g(\nabla_A\Lbh,e_B), \quad\etab_A=-\frac{1}{2}g(\nabla_{\Lh}e_A,\Lbh), \quad\omegab=\frac{1}{2}\Omega g(\nabla_{\Lbh}\Lh,\Lbh).
\end{align*}
We also define the following normalized quantities:
$$\chi'=\Omega^{-1}\chi,\ \chib'=\Omega^{-1}\chi,\ \zeta=\frac{1}{2}(\eta-\etab).$$
 The ($2$-dimensional) trace of $\chi$ and $\chib$ are denoted by
 $$\tr\chi = \gs^{AB}\chi_{AB},\ \tr\chib = \gs^{AB}\chib_{AB},$$
 and the trace-free parts of $\chi$ and $\chib$ are denoted by
 $$\chih=\chi-\frac{1}{2}\tr\chi\gs,\ \chibh=\chib-\frac{1}{2}\tr\chib\gs.$$ 
  By definition, we can check directly the following useful identities :
  $$\ds\log\Omega=\frac{1}{2}(\eta+\etab),\ D\log\Omega=\omega,\ \Db\log\Omega=\omegab.$$

Recall that the Weyl curvature tensor is
$$\mathbf{W}_{\alpha\beta\gamma\delta}=\mathbf{R}_{\alpha\beta\gamma\delta}+\frac{\mathbf{R}}{6}(g_{\alpha\gamma}g_{\beta\delta}-g_{\alpha\delta}g_{\beta\gamma})+\frac{1}{2}(g_{\alpha\delta}\mathbf{Ric}_{\beta\gamma}+g_{\beta\gamma}\mathbf{Ric}_{\alpha\delta}-g_{\alpha\gamma}\mathbf{Ric}_{\beta\delta}-g_{\beta\delta}\mathbf{Ric}_{\alpha\gamma}).$$
It can be decomposed as
\begin{align*}
\alpha_{AB}&=\mathbf{W}(e_A,\Lh,e_B,\Lh),\quad\beta_A=\frac{1}{2}\mathbf{W}(e_A,\Lh,\Lbh,\Lh),\quad\rho=\frac{1}{4}\mathbf{W}(\Lbh,\Lh,\Lbh,\Lh),\\
\alphab_{AB}&=\mathbf{W}(e_A,\Lbh,e_B,\Lbh),\quad\betab_A=\frac{1}{2}\mathbf{W}(e_A,\Lbh,\Lbh,\Lh),\quad\sigma=\frac{1}{4}\mathbf{W}(\Lbh,\Lh,e_A,e_B)\epsilons^{AB}.
\end{align*}
By the algrebric property of Weyl tensor, $\alpha$ and $\alphab$ are trace-free.

\subsection{Equations}

We first define several kinds of contraction of the tangential tensorfields. Let $\theta$ be a symmetric $2$-tensorfield and $\xi$ be a tangential $1$-form. We denote
$$(\theta_1,\theta_2)=\gs^{AC}\gs^{BD}(\theta_1)_{AB}(\theta_2)_{CD},\ \ (\xi_1,\xi_2)=\gs^{AB}(\xi_1)_A(\xi_2)_B,$$ 
and
$$|\theta|^2=(\theta,\theta),\ |\xi|^2=(\xi,\xi).$$ 
We also denote the contraction
\begin{align*}(\theta\cdot\xi)_A=\theta_A{}^B\xi_B&,\ (\theta_1\cdot \theta_2)_{AB}=(\theta_1)_A{}^C(\theta_2)_{CB},\\
 \theta_1 \wedge\theta_2=\epsilons^{AC}\gs^{BD} (\theta_1)_{AB}(\theta_2)_{CD}&,\ \xi_1\tensor \xi_2=\xi_1\otimes\xi_2+\xi_2\otimes\xi_1-(\xi_1,\xi_2)\gs.
\end{align*} The Hodge dual for $\xi$ is defined by $\prescript{*}{}\xi_A=\epsilons_A{}^C\xi_C$. We also denote 
$$\divs\xi=\nablas^A\xi_A,\ \curls\xi_A=\epsilons^{AB}\nablas_A\xi_B,\ (\divs\theta)_A=\nablas^B\theta_{AB},$$ 
and
$$(\nablas\tensor\xi)_{AB}=(\nablas\xi)_{AB}+(\nablas\xi)_{BA}-\divs\xi \,\gs_{AB}.$$
If $\theta$ is trace-free, $\Dh\theta$ and $\Dbh\theta$ refer to the trace-free part of $D\theta$ and $\Db\theta$. 


Let us define the following tangential tensorfields related to the spacetime Ricci tensor:
\begin{align*}
(\Ricset)_{A}=\mathbf{Ric}(e_A, e_3),\  (\Ricsef)_{A}&=\mathbf{Ric}(e_A, e_4),\ \Rics_{\!\!AB}=\mathbf{Ric}(e_A, e_B),\\
 \mathbf{Ric}_{34}=\mathbf{Ric}(e_3,e_4), \  \mathbf{Ric}_{44}&=\mathbf{Ric}(e_4,e_4), \ \mathbf{Ric}_{33}=\mathbf{Ric}(e_3,e_3).
\end{align*}
Then we have the following \textit{null structure equations}:
\begin{align}
\label{Dbchibh}\Dbh\chibh'&=-\alphab,\\
\label{Dbtrchib}\Db(\Omega\tr\chib)&=-\frac{1}{2}(\Omega\tr\chib)^2+2\omegab\Omega\tr\chib-|\Omega\chibh|^2-\Omega^2\mathbf{Ric}_{33},\\
\label{Dchih}\Dh\chih'&=-\alpha,\\
\label{Dtrchi}D\tr\chi'&=-\frac{1}{2}(\Omega\tr\chi')^2-|\chih|^2-\mathbf{Ric}_{44},\\
\label{Dbchih}\Dbh(\Omega\chih)&=\Omega^2(\nablas \tensor \eta + \eta \tensor \eta +\frac{1}{2}\tr\chib\chih-\frac{1}{2}\tr\chi \chibh+\frac{1}{2}\widehat{\Rics}),\\
\label{Dchibh}\Dh(\Omega\chibh)&=\Omega^2(\nablas \tensor \etab + \etab \tensor \etab +\frac{1}{2}\tr\chi\chibh-\frac{1}{2}\tr\chib \chih+\frac{1}{2}\widehat{\Rics}),\\
\label{Dbtrchi}\Db(\Omega\tr\chi)&=\Omega^2(2\divs\eta+2|\eta|^2-\tr\chi\tr\chib-2\Ks+\tr\Rics),\\
\label{Dtrchib}D(\Omega\tr\chib)&=\Omega^2(2\divs\etab+2|\etab|^2-\tr\chi\tr\chib-2\Ks+\tr\Rics),\\
\label{Deta}D\eta &= (\Omega\chi)\cdot\etab-\Omega(\beta+\frac{1}{2}\Ricsef),\\
\label{Db-etab}\Db\etab &= (\Omega\chib) \cdot\eta+\Omega(\betab-\frac{1}{2}\Ricset),\\
\label{Domegab}D  \omegab &=\Omega^2(2(\eta,\etab)-|\eta|^2-(\rho+\frac{1}{6}\mathbf{R}+\frac{1}{2}\mathbf{Ric}_{34})),\\
\label{Dbomega}\Db  \omega &=\Omega^2(2(\eta,\etab)-|\etab|^2-(\rho+\frac{1}{6}\mathbf{R}+\frac{1}{2}\mathbf{Ric}_{34})),\\
\label{Gauss}\Ks+\frac{1}{4}\tr \chi\tr\chib-\frac{1}{2}(\chih,\chibh)&=-(\rho+\frac{1}{6}\mathbf{R})+\frac{1}{2}\tr\Rics,\\
\label{curleta}\curls\eta&=\sigma-\frac{1}{2}\chih\wedge\chibh,\\
\label{divchih}\divs(\Omega\chih)&=\frac{1}{2}\Omega^2\ds\tr \chi'+\Omega\chih\cdot\etab+\frac{1}{2}\Omega\tr \chi\eta-\Omega(\beta-\frac{1}{2}\Ricsef),\\
\label{divchibh}\divs(\Omega\chibh)&=\frac{1}{2}\Omega^2\ds\tr \chib'+\Omega\chibh\cdot\eta+\frac{1}{2}\Omega\tr \chib\etab+\Omega(\betab+\frac{1}{2}\Ricset).
\end{align}
Note that in Einstein--Euler system, we have
$$\mathbf{Ric}_{\alpha\beta}=2(1+\kappa)\varrho U_\alpha U_\beta+(1-\kappa){\varrho}g_{\alpha\beta},$$
and
$$\mathbf{R}=-2\mathbf{T}=2(1-3\kappa)\varrho.$$

The contracted second Bianchi identity will accordingly be the following inhomogeneous equation:
\begin{equation*}
\nabla^{\alpha}\mathbf{W}_{\alpha\beta\gamma\delta}=\nabla_{[\gamma}\mathbf{R}_{\delta]\beta}+\frac{1}{6}g_{\beta[\gamma}\nabla_{\delta]}\mathbf{R}\triangleq\mathbf{J}_{\beta\gamma\delta}.
\end{equation*}
We decompose the Weyl current $\mathbf{J}$ as
\begin{align*}\Xi_{A}=\frac{1}{2}\mathbf{J}_{44A},& \ \Xib_{A}=\frac{1}{2}\mathbf{J}_{33A}\\
\Lambda=\frac{1}{4}\mathbf{J}_{434},&\ \Lambdab=\frac{1}{4}\mathbf{J}_{343}\\
K=\frac{1}{4}\epsilons^{AB}\mathbf{J}_{4AB},&\ \Kb=\frac{1}{4}\epsilons^{AB}\mathbf{J}_{3AB}\\
I_A=\frac{1}{2}\mathbf{J}_{34A},&\ \Ib_A=\frac{1}{2}\mathbf{J}_{43A}\\
\Theta_{AB}=\frac{1}{2}(\mathbf{J}_{A4B}+\mathbf{J}_{B4A}-\gs_{AB}\gs^{CD}\mathbf{J}_{C4D}),&\ 
\Thetab_{AB}=\frac{1}{2}(\mathbf{J}_{A3B}+\mathbf{J}_{B3A}-\gs_{AB}\gs^{CD}\mathbf{J}_{C3D}).
\end{align*}

 This equation can be decomposed using the null frame into components, which we call \textit{null Bianchi equations}:
\begin{align}
\label{Dbalpha}&\Dbh\alpha-\frac{1}{2}\Omega\tr\chib \alpha+2\omegab\alpha+\Omega\{-\nablas\tensor\beta -(4\eta+\zeta)\tensor \beta+3\chih \rho+3{}^*\chih \sigma\}=-2\Omega\Theta,\\
\label{Dbeta}&D\beta+\frac{3}{2}\Omega\tr\chi\beta-\Omega\chih\cdot\beta-\omega\beta-\Omega\{\divs\alpha+(\etab+2\zeta)\cdot\alpha\}=2\Omega\Xi,\\
\label{Dbbeta}&\Db\beta+\frac{1}{2}\Omega\tr\chib\beta-\Omega\chibh \cdot \beta+\omegab \beta-\Omega\{\ds \rho+{}^*\ds \sigma+3\eta\rho+3{}^*\eta\sigma+2\chih\cdot\betab\}=-2\Omega I\\
\label{Drho}&D\rho+\frac{3}{2}\Omega\tr\chi \rho-\Omega\{\divs \beta+(2\etab+\zeta,\beta)-\frac{1}{2}(\chibh,\alpha)\}=-2\Omega\Lambda,\\
\label{Dsigma}&D\sigma+\frac{3}{2}\Omega\tr\chi\sigma+\Omega\{\curls\beta+(2\etab+\zeta,{}^*\beta)-\frac{1}{2}\chibh\wedge\alpha\}=-2\Omega K,\\
\label{Dbetab}&D\betab+\frac{1}{2}\Omega\tr\chi\betab-\Omega\chih \cdot \betab+\omega \betab+\Omega\{\ds \rho-{}^*\ds \sigma+3\etab\rho-3{}^*\etab\sigma-2\chibh\cdot\beta\}=2\Omega \Ib\\
\label{Dbrho}&\Db\rho+\frac{3}{2}\Omega\tr\chib \rho+\Omega\{\divs \betab+(2\eta-\zeta,\betab)+\frac{1}{2}(\chih,\alphab)\}=-2\Omega\Lambdab,\\
\label{Dbsigma}&\Db\sigma+\frac{3}{2}\Omega\tr\chib\sigma+\Omega\{\curls\betab+(2\eta-\zeta,{}^*\betab)+\frac{1}{2}\chih\wedge\alphab\}=2\Omega\Kb,\\
\label{Dbbetab}&\Db\betab+\frac{3}{2}\Omega\tr\chib\betab-\Omega\chibh\cdot\betab-\omegab\betab+\Omega\{\divs\alphab+(\eta-2\zeta)\cdot\alphab\}=-2\Omega\Xib,\\
\label{Dalphab}&\Dh\alphab-\frac{1}{2}\Omega\tr\chi \alphab+2\omega\alphab+\Omega\{\nablas\tensor\betab +(4\etab-\zeta)\tensor \betab+3\chibh \rho-3{}^*\chibh \sigma\}=-2\Omega\Thetab.
\end{align}

\subsection{Commutation formulas}
\begin{lemma}\label{commutationformulas}
Given integer $i$ and tangential tensorfield $\phi$. we have schematically 
\begin{align*}
[D,\nablas^i]\phi&=\sum_{j=1}^i\nablas^j(\Omega\chi)\cdot\nablas^{i-j}\phi,\\
[\Db,\nablas^i]\phi&=\sum_{j=1}^i\nablas^j(\Omega\chib)\cdot\nablas^{i-j}\phi,
\end{align*}
and
\begin{align*}
[\nablas,\nablas^i]\phi&=\sum_{j=1}^i\nablas^{j-1}\Ks\cdot\nablas^{i-j}\phi,
\end{align*}
where the first $\nablas$ represents all possible first order differential operators acting on $\phi$, like divergence or curl. Here we use ``$\cdot$'' to represent some contraction with the coefficients by $\gs$ or $\epsilons$. In addition, if $\phi$ is a function, then when $i=1$, all commutators above are zero; when $i\ge2$, all $i$'s are replaced by $i-1$'s in above formulas. Finally, we also have
\begin{align*}
[\Lb, L]&=4\Omega^2\zeta^\sharp,\\
[\Db, D]\phi&=\Lie_{4\Omega^2\zeta^\sharp}\phi.
\end{align*}
where $\Lie$ is the projection to $S_{\ub,u}$ of the usual Lie derivative. 
 \end{lemma}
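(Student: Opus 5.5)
The plan is to derive everything from the structure of the double null coordinates $(\ub,u,\vartheta^A)$ in \eqref{doublenullmetric}, where $L=\partial_{\ub}$ and $\Lb=\partial_u+b^A\partial_{\vartheta^A}$. I will treat the four groups of identities in the order: (i) the first-order commutators, (ii) the bracket $[\Lb,L]$, (iii) $[\Db,D]$, and (iv) induction in $i$.

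\emph{First-order commutators.} For a tangential covariant tensor $\phi$, I will write $\nablas\phi$ in the angular coordinates $\vartheta^A$ as $\partial\phi-\Gamma\cdot\phi$, where $\Gamma$ is the Christoffel symbol of $\gs$. Since $D$ is the projected Lie derivative along $L=\partial_{\ub}$, it acts on components simply by $\partial_{\ub}$, which commutes with $\partial_{\vartheta^A}$. Hence
\[
[D,\nablas]\phi=-(D\Gamma)\cdot\phi .
\]
The variation of the Christoffel symbols is tensorial. With $D\gs=2\Omega\chi$, this gives
\[
D\Gamma^C_{AB}=\tfrac12\gs^{CD}\bigl(\nablas_A(2\Omega\chi)_{BD}+\nablas_B(2\Omega\chi)_{AD}-\nablas_D(2\Omega\chi)_{AB}\bigr),
\]
which is schematically $\nablas(\Omega\chi)$. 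For $\Db$ I use the same computation in the coordinates adapted to $\Lb$: the flow of $\Lb$ preserves the foliation, and $\Db\gs=2\Omega\chib$, so $[\Db,\nablas]\phi=\nablas(\Omega\chib)\cdot\phi$. When $\phi$ is a function, $\nablas\phi=\partial\phi$ contains no Christoffel symbol, so the $i=1$ commutators vanish. For $[\nablas,\nablas]$ I will use that on a $2$-surface the Riemann tensor equals $\Ks(\gs\gs-\gs\gs)$, giving the $\Ks$ terms; contracting with $\gs$ or $\epsilons$ covers $\divs$ and $\curls$.

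\emph{The brackets.} In coordinates,
\[
[\Lb,L]=[\partial_u+b^A\partial_A,\partial_{\ub}]=-(\partial_{\ub}b^A)\partial_A .
\]
So I need the transport equation $Db=-4\Omega^2\zeta^\sharp$ quoted in Section~\ref{sec:trapped}. I expect this to be the main obstacle, since it requires actually tracing the definitions of $\eta$ and $\etab$. I will derive it by computing $g([\Lb,L],e_A)=g(\nabla_{\Lb}L-\nabla_L\Lb,e_A)$. I then use $L=\Omega\Lh$ and $\Lb=\Omega\Lbh$, together with $\nabla_{\Lbh}\Lh$ paired with $e_A$ equal to $2\eta_A$ and $\nabla_{\Lh}\Lbh$ paired with $e_A$ equal to $2\etab_A$ (both from the definitions via $g(e_A,\Lh)=0$). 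The terms involving $\nablas\Omega$ cancel against each other because the expression is antisymmetric in the two null directions. What remains is $2\Omega^2(\eta-\etab)=4\Omega^2\zeta$, which gives $[\Lb,L]=4\Omega^2\zeta^\sharp$.

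For $[\Db,D]$, both operators are projections of Lie derivatives along vector fields whose flows preserve the spheres $S_{\ub,u}$. Consequently $\mathcal{L}_L$ and $\mathcal{L}_{\Lb}$ of a tangential tensor differ from $D$ and $\Db$ only by components along $\Lh,\Lbh$, and these vanish after projection. I will check this by contracting with $L$ or $\Lb$ and using that $[L,\Lb]$ is tangential. The identity $[\mathcal{L}_X,\mathcal{L}_Y]=\mathcal{L}_{[X,Y]}$ then projects to $[\Db,D]\phi=\Lie_{4\Omega^2\zeta^\sharp}\phi$.

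\emph{Induction.} Finally, the formulas for general $i$ follow by induction from
\[
[D,\nablas^{i}]\phi=\nablas[D,\nablas^{i-1}]\phi+[D,\nablas]\nablas^{i-1}\phi .
\]
Distributing $\nablas$ over the schematic products produces exactly $\sum_{j}\nablas^j(\Omega\chi)\cdot\nablas^{i-j}\phi$, and the same argument works for $\Db$ and for $[\nablas,\nablas^i]$. For functions, the first step is trivial, which accounts for the shift $i\mapsto i-1$.
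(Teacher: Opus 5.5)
Your proposal is correct. It is the standard derivation of these formulas. The paper itself gives no proof of this lemma (it is quoted as standard from the double null literature, e.g.\ \cite{Chr08}), so there is no different route to compare against. Two points of justification should be tightened.

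\emph{The $[\Db,D]$ step.} The check you describe, contracting with $L,\Lb$ and using that $[L,\Lb]$ is tangential, actually shows that $\mathcal{L}_L\phi$ is \emph{not} tangential for covariant $\phi$. Indeed $(\mathcal{L}_L\phi)(\Lb,\cdot)=-\phi([L,\Lb],\cdot)\neq0$, precisely because $[L,\Lb]$ is tangential. What you need instead concerns the remainder. Write $\Pi$ for the projection to $S_{\ub,u}$ and set $\psi=\mathcal{L}_L\phi-D\phi$; you need $\Pi\mathcal{L}_{\Lb}\psi=0$. This holds because, for tangential $X_1,\dots,X_k$,
\[
(\mathcal{L}_{\Lb}\psi)(X_1,\dots,X_k)=\Lb\bigl(\psi(X_1,\dots,X_k)\bigr)-\sum_i\psi(\dots,[\Lb,X_i],\dots)=0 .
\]
Both terms vanish: $\psi$ vanishes on tangential arguments, and $[\Lb,X_i]$ is tangential since $\Lb u=1$ and $\Lb\ub=0$. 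This is where "the flow preserves the spheres" really enters. It gives $\Pi\mathcal{L}_{\Lb}\Pi\mathcal{L}_L\phi=\Pi\mathcal{L}_{\Lb}\mathcal{L}_L\phi$. The symmetric identity with $L,\Lb$ swapped holds since $Lu=0$ and $L\ub=1$. The identity $[\mathcal{L}_{\Lb},\mathcal{L}_L]=\mathcal{L}_{[\Lb,L]}$ then finishes the argument as you intended.

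\emph{The computation of $g([\Lb,L],e_A)$.} There are no $\nablas\Omega$ terms to cancel here. The derivative-of-$\Omega$ terms are $(\Lb\Omega)\Lh$ and $(L\Omega)\Lbh$, and each is individually orthogonal to $e_A$. The result $2\Omega^2(\eta-\etab)=4\Omega^2\zeta$ is then immediate from the definition $\zeta=\frac12(\eta-\etab)$. Together with your coordinate observation that $[\Lb,L]=-(\partial_{\ub}b^A)\partial_A$ is tangential, this yields both $[\Lb,L]=4\Omega^2\zeta^\sharp$ and $Db=-4\Omega^2\zeta^\sharp$.
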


\end{document}